\newtheorem{theorem}{Theorem}[section]
\newtheorem{definition}[theorem]{Definition}
\newtheorem{lemma}[theorem]{Lemma}
\newtheorem{example}[theorem]{Example}
\newtheorem{Assumption}{Assumption}
\newtheorem{rem}{Remark}
\renewcommand{\theAssumption}{\Alph{Assumption}}
\newcommand{\settheoremtag}[1]{
	\let\oldtheAssumption\theAssumption
	\renewcommand{\theAssumption}{#1}
	\g@addto@macro\endAssumption{
		\global\let\theAssumption\oldtheAssumption}
}
\def\wt{\widetilde}
\def\wh{\widehat}
\newcommand{\1}{\mathbbm{1}} 
\def\E{\mathbb{E}} 
\newcommand{\R}{\mathbb{R}}
\newcommand{\bignorm}[1]{{\big\vert\kern-0.25ex\big\vert #1 \big\vert\kern-0.25ex\big\vert}}
\newcommand{\opnorm}[1]{{\vert\kern-0.25ex\vert\kern-0.25ex\vert #1 \vert\kern-0.25ex\vert\kern-0.25ex\vert}}
\def\bbR{\mathbb{R}}
  \def\bfX{{\mathbf X}}
\newcommand{\calC}{\mathcal{C}} 
 \newcommand{\calF}{\mathcal{F}}
 \newcommand{\calJ}{\mathcal{J}}
 \newcommand{\calL}{\mathcal{L}}
\newcommand{\calU}{\mathcal{U}}
\newcommand{\red}{\color{red}}
\begin{document}
\setlength{\parindent}{0pt}
\date{\today}

\title
{Trading with market resistance and concave price impact}
\author[1]{Nathan De Carvalho}
\author[2]{Youssef Ouazzani Chahdi}
\author[3]{Gr\'egoire Szymanski}

\affil[1]{LPSM, Université Paris Cité, \texttt{decarvalho@lpsm.paris}}
\affil[2]{MICS, CentraleSupélec,
\texttt{youssef.ouazzani-chahdi@centralesupelec.fr}}
\affil[3]{DMATH, Université du Luxembourg, \texttt{gregoire.szymanski@uni.lu}}

\maketitle

\begin{abstract}
    We consider an optimal trading problem under a market impact model with endogenous market resistance generated by a sophisticated trader who (partially) detects metaorders and trades against them to exploit price overreactions induced by the order flow. The model features a concave transient impact driven by a power-law propagator with a resistance term responding to the trader's rate via a fixed-point equation involving a general resistance function. We derive a (non)linear stochastic Fredholm equation as the first-order optimality condition satisfied by optimal trading strategies. Existence and uniqueness of the optimal control are established when the resistance function is linear, and an existence result is obtained when it is strictly convex using coercivity and weak lower semicontinuity of the associated profit-and-loss functional. We also propose an iterative scheme to solve the nonlinear stochastic Fredholm equation and prove an exponential convergence rate. Numerical experiments confirm this behavior and illustrate optimal round-trip strategies under ``buy'' signals with various decay profiles and different market resistance specifications.
\end{abstract}

\textbf{Keywords:} Optimal trading, market resistance, concave market impact, propagator model, power-law decay, square root-law, Fredholm equations.\\
\\
\textbf{Mathematics Subject Classification (2020):} 91G08, 46N10, 60H30, 91G80.\\
\\
\textbf{JEL Classification:} C02, C61, C63, G11, G14.







    

    

\section{Introduction}

A central feature of modern financial markets is \textit{market impact}, that is, the empirically observed positive correlation between the sign of a sizable incoming market order and the subsequent price change. Market impact plays a crucial role in execution costs, risk management, and market design; see, for instance, \citet{almgren2005direct}, \citet{freyre2004review}, \citet{hey2023cost}, \citet{robert2012measuring}, and \citet{bouchaud2018trades}. Modeling this phenomenon is particularly relevant when devising strategies for executing \textit{metaorders}—large transactions placed by institutional traders and implemented through a sequence of smaller child orders over a given time horizon. The presence of market impact has long been recognized as a fundamental feature of market microstructure, and understanding its form is essential from both empirical and theoretical perspectives.\\

Measuring market impact is inherently challenging due to its noisy nature. Statistical studies therefore tend to focus on the execution of metaorders, which induces a persistent liquidity imbalance and generates price moves that can be identified statistically. However, during the execution of any given metaorder, numerous other trades occur simultaneously, contributing additional noise. Careful statistical procedures and averaging over many metaorders help to filter out part of this noise, allowing some universal properties of market impact to emerge; see, for example, \citet{almgren2005direct}, \citet{bacry2015market}, \citet{bershova2013non}, and \citet{bucci2019crossover}. Numerous empirical studies document that prices react mechanically during the execution of a metaorder, exhibiting a concave dependence on traded volume that peaks at the end of the metaorder—a phenomenon known as the \textit{square-root law}—followed by a convex relaxation phase; see \citet{lillo2003master}, \citet{hopman2003essays}, \citet{almgren2005direct}, \citet{bershova2013non}, \citet{gatheral2010no}, Moro et al. \cite{moro2009market}, \citet{bouchaud2018trades}, \citet{kyle2023large}, and \citet{Sato_24}. This square-root dependence, however, primarily holds for large traded volumes, while the impact of small orders is approximately linear in volume, as shown in \citet{bucci2019crossover}. To capture this dual behavior, \citet{benzaquen2018market} proposes the market impact model
\begin{equation*}
MI(Q_t) \approx c \sigma \bigg( \frac{Q_t}{V} \bigg)^{1/2} \mathcal{F}\bigg(\frac{Q_t}{V}\bigg),
\end{equation*}
where $c$ is a constant of order~1, $\sigma$ denotes the daily volatility, $V$ the typical daily traded volume, $Q_t$ the executed volume of the metaorder at time $t$, and $\mathcal{F}$ a monotone function satisfying $\mathcal{F}(x)\approx \sqrt{x}$ as $x\to0$ and $\mathcal{F}(x)\to a$ as $x\to\infty$ for some $a>0$.\\

On the one hand, several theoretical frameworks have been developed to rationalize these empirical findings, ranging from latent order book models, such as those of \citet{toth2011anomalous} and \citet{donier2015fully}, to equilibrium models with strategic traders; see \citet{gabaix2006institutional}. More recently, the impact of limit orders has also been investigated, see \citet{chahdi2024theory}. A notable recent contribution is due to \citet{durin2023two}, who introduce a model with informed traders. The key idea is that certain sophisticated agents can detect the presence of a metaorder and infer its effect on the observed price. Their framework distinguishes between the true price of the asset, which excludes the market's overreaction to the metaorder, and the observed price, which incorporates the mechanical impact of the metaorder. By trading strategically against the detected order flow, informed traders generate a form of \textit{market resistance} that alters the shape of the impact curve. Their analysis leads to two square-root laws of market impact—one in time and one in participation rate—providing a microstructural foundation for well-established empirical results and offering a refined perspective on how informed liquidity provision affects price formation.\\

On the other hand, a substantial body of literature aims to formalize the connection between reduced-form market impact models $I^{X}$ and price features such as signals or volatility, with the goal of deriving trading strategies $X$ that minimize execution costs. In the seminal work of \citet{almgren2005direct}, the execution cost of a trading strategy combines temporary and permanent impact terms that shift the asset price linearly with the total traded volume. The trader’s objective is then to minimize the expected cost and risk of liquidation over a finite horizon, typically expressed through a mean–variance functional. A key refinement concerns the transient nature of impact, whereby the observed asset price is modeled as the convolution of past order flow with a decaying kernel $G$ -- a formulation known as the \textit{linear propagator}, introduced in discrete time by \citet{bouchaud2003fluctuations}. In continuous time, the impacted price $S^X$ is written as
\begin{equation} \label{eq:linear_propagator}
S_t^{X} = S_0 + I_{t}^{X}, \qquad 
I_{t}^{X} := \int_0^t G(t-s)\, \dd X_s, \qquad t \geq 0,
\end{equation}
where $X$ denotes the signed traded volume. The kernel $G$ governs how the impact of past trades decays over time: an exponential decay \`a la \citet{obizhaeva2013optimal} corresponds to short memory, whereas power-law kernels, as in the framework of \citet{abi2022optimal}, capture long-lasting effects and are more consistent with empirical propagator estimates; see \citet{bouchaud2008marketsslowlydigestchanges} and \citet{donier2015fully}. A central concern for such models is ensuring financial well-posedness, meaning that the impact model itself cannot be exploited to generate positive profits. \citet{gatheral2010no} showed that this requirement is equivalent to the positive semi-definiteness of the kernel $G$, yielding the so-called ``no-dynamic-arbitrage'' condition.\\

Building on the optimal execution framework with linear price impact, recent works have formulated increasingly general \textit{optimal trading problems} of the form
\begin{equation} \label{eq:generic_form_optimal_trading}
    \sup_{u \in \mathcal A} \E \bigg[ \int_{0}^T (\alpha_{t} - I_{t}^{X})\, \dd X_{t} + M(X) \bigg],
\end{equation}
where $\alpha$ captures stochastic signals, $M$ encodes soft constraints and risk-aversion penalties, and $\mathcal{A}$ denotes the set of admissible trading strategies. For instance, \citet{abi2024trading} extends the linear propagator framework to incorporate linear functional constraints (e.g., no-shorting or no-buying constraints, stochastic stop-trading rules), demonstrating that such constraints can be handled within a non-Markovian continuous-time model while preserving numerical tractability. To account for the concave nature of price impact, \citet{alfonsi2010optimal} introduces a nonlinear impact function applied to exponentially decaying kernels, extending the model of \citet{obizhaeva2013optimal}, and \citet{hey2025trading} derives explicit optimal inventories in this setting under general price signals. Building on these results, \citet{abi2025fredholm} proposes a general nonlinear propagator model formulated in terms of the trading rate, which naturally encompasses nonlinear impact functions as well as power-law decay, and develops numerical methods to compute optimal strategies.\\

The goal of this paper is to study optimal trading strategies in a market impact setting that incorporates market resistance, following the microstructural insights of \citet{durin2023two}. Leveraging the techniques developed in \citet{abi2025fredholm}, we first establish theoretical results ensuring the well-posedness of the problem, and then derive a first-order optimality condition expressed as a nonlinear stochastic Fredholm equation. This characterization enables the use of numerical methods based on Nyström approximations and Fredholm operator inversion, through which we compute optimal trading strategies in practice.\\

The remainder of the paper is organized as follows. Section~\ref{sec:market_impact_model} introduces market resistance through a simple game-theoretic model and presents the market impact framework of \citet{durin2023two} together with its microstructural foundations. Section~\ref{sec:optimal_trading} formulates the corresponding optimal trading problem, establishes the main existence results, and derives the associated first-order optimality condition as a (non)linear stochastic Fredholm equation. Section~\ref{sec:numerical_applications} then presents the numerical scheme used to solve this equation and illustrates the resulting optimal round trips in the presence of stochastic ``buy'' signals.

\section*{Notations}
We fix a finite time horizon $T > 0$ and a filtered probability space $(\Omega, \mathcal F, (\mathcal F_t)_{t \in [0,T]}, \mathbb{P})$ satisfying the usual conditions. We denote by $\dd t$ the Lebesgue measure on the Borel $\sigma$-algebra $\mathcal{B}([0,T])$, and by $\dd t \otimes \mathbb{P}$ the product measure on the $\sigma$-algebra $\mathcal{B}([0,T]) \otimes \mathcal{F}$. For all $p \geq 1$, we introduce the standard Banach spaces
\begin{equation*} 
    \mathcal{L}^{p} := \bigg\{ f: [0,T] \times \Omega \to \R \ \text{progressively measurable} \; : \; \E \bigg[ \int_{0}^{T}  |f_{t}|^{p} \dd t \bigg] < \infty \bigg\},
\end{equation*}
and, for $p=2$, we equip $\mathcal L^2$ with the inner product
\begin{equation*}
    \langle f, g \rangle := \E \bigg[ \int_{0}^{T}  f_{t} g_{t}\, \dd t \bigg], \quad f,g \in \mathcal{L}^{2},
\end{equation*}
which makes it a Hilbert space with associated norm $\| f\| := \sqrt{\langle f,f\rangle}$. Similarly, we denote by $L^p$, $p \geq 1$, the standard Banach spaces with respect to the Lebesgue measure. For $u \in \mathcal{L}^{1}$ and $t \in [0,T]$, we denote by $\E_t u$ the conditional expectation of $u$ with respect to the $\sigma$-algebra $\mathcal{F}_t$.

\section{Inception: a sophisticated trader as a market resistance}
\label{sec:market_impact_model}

\subsection{A two-player toy model}

Consider a market with a risky asset and two traders, Alice and Bob, observing each other's decisions, with the following motivations:
\begin{enumerate}
    \item Alice wants to trade the risky asset following her market signal $\alpha$,

    \item Bob distrusts Alice’s signal and seeks only to profit by trading against her flow; his orders create a resistance flow that opposes her impact.
\end{enumerate}
Denoting by $u^{A}, u^{B}$ the respective trading rates of Alice and Bob, assume they both suffer trading costs linear in the sum of their trading rates $u := u^{A} + u^{B}$, following the linear propagator model \eqref{eq:linear_propagator} with instantaneous slippage as in~\cite{abi2022optimal}. Formally, we write
\begin{equation*}
I^{u} := \frac{1}{2} u + \mathbf{G}u,
\end{equation*}
where $I^u$ is the aggregated market impact from both Alice's and Bob's trades, and $(\mathbf{G}u)_{t} := \int_{0}^{t} G(t,s) u_{s} \dd s$ captures the transient nature of market impact via the kernel $G$. Their respective PnL functionals are given by
\begin{align*}
    \mathcal{J}^{A}(u^{A}) & := \mathbb{E} \bigg[ \int_{0}^{T} \big( \alpha^{A}_t(u^{B}) - I_{t}^{u^{A}} \big) u_{t}^{A} \dd t  \bigg], \\
    \mathcal{J}^{B}(u^{B}) & := \mathbb{E} \bigg[ \int_{0}^{T} \big( \alpha^{B}_t(u^{A}) - I_{t}^{u^{B}} \big) u_{t}^{B} \dd t \bigg],
\end{align*}
where Alice's and Bob's respective effective signals $\alpha^{A}, \alpha^{B}$ are given by
\begin{align*}
    \alpha^{A}(u^{B}) := \alpha - \frac{1}{2} u^{B} - \mathbf{G}u^{B},
    \qquad\text{ and }\qquad
    \alpha^{B}(u^{A}) := - \frac{1}{2} u^{A} - \mathbf{G}u^{A}.
\end{align*}
The main question is: can we find a couple of strategies $(\hat{u}^{A}, \hat{u}^{B})$ that maximize $\mathcal{J}^{A}$ and $\mathcal{J}^{B}$ respectively?\\

Applying~\cite[Proposition 5.1]{jaber2023equilibrium}, the first-order conditions for Alice's and Bob's trading problems write respectively
\begin{align*}
    u^{A} = ( \mathbf{id} - \mathbf{B} )^{-1} a^{\alpha^{A}(u^{B})},     
    \qquad\text{ and }\qquad
    u^{B} = ( \mathbf{id} - \mathbf{B} )^{-1} a^{\alpha^{B}(u^{A})},
\end{align*}
where the operator $\mathbf{B}$ as well as the processes $a^{\alpha^{A}(u^{B})}$, $a^{\alpha^{B}(u^{A})}$ are explicitly given by
\begin{align*}
    a_{t}^{\alpha} & := \alpha_{t} - \langle \mathds{1}_{\{t<.\}} G(.,t), \mathbf{D}_{t}^{-1} \mathds{1}_{\{t<.\}} \E_t [ \alpha_{.} ] \rangle_{L^2}, \quad \alpha \in \mathcal{L}^{2} \\
    B(t,s) & := \mathds{1}_{\{s<t\}} ( \langle \mathds{1}_{\{t<.\}} G(.,t), \mathbf{D}_{t}^{-1} \mathds{1}_{\{t<.\}} G(.,s) \rangle_{L^2} - G(t,s) ), \\
    \mathbf{D}_{t} & := \mathbf{id} + \mathbf{G}_{t} + (\mathbf{G}_{t})^{*}, \quad G_{t}(s,r) := G(s,r) \mathds{1}_{r \geq t}, \quad t \in [0,T].
\end{align*}
Now injecting the definition of $u^{A}$ into that of $u^{B}$ yields the following fixed-point equation for Bob's trading rate $u^{B}$
\begin{equation} \label{eq:Bob_fixed_point}
    u^{B} = ( \mathbf{id} - \mathbf{B} )^{-1} a^{\alpha^{B}(( \mathbf{id} - \mathbf{B} )^{-1} a^{\alpha^{A}(u^{B})})}.
\end{equation}
Therefore, if there is a $\hat{u}^{B}$ that satisfies~\eqref{eq:Bob_fixed_point}, setting
\begin{equation*}
    \hat{u}^{A} = ( \mathbf{id} - \mathbf{B} )^{-1} a^{\alpha^{A}(\hat{u}^{B})}
\end{equation*}
yields a couple $(\hat{u}^{A}, \hat{u}^{B})$ that maximizes $\mathcal{J}^{A}$ and $\mathcal{J}^{B}$ respectively. Finally, denoting by $\hat{r}^{B} := -\hat{u}^{B}$ the resistance induced by Bob's trades, Alice's optimal PnL rewrites as follows
\begin{align*}
    \mathcal{J}^{A}(\hat{u}^{A}) & = \mathbb{E} \bigg[ \int_{0}^{T} \big( \alpha - \frac{1}{2} ( \hat{u}^{A} - \hat{r}^{B} ) - \mathbf{G} (\hat{u}^{A} - \hat{r}^{B} ) \big) \hat{u}_{t}^{A} \dd t \bigg], \\
    \hat{r}^{B} & = - ( \mathbf{id} - \mathbf{B} )^{-1} a^{\alpha^{B}\big(( \mathbf{id} - \mathbf{B} )^{-1} a^{\alpha^{A}(-\hat{r}^{B})}\big)},
\end{align*}
and this simple two-player game setting indeed motivates the introduction of a market resistance caused by the presence of an informed trader, and solution to a fixed-point equation.

\subsection{A general market resistance model}

Starting again from the linear propagator framework \eqref{eq:linear_propagator}, consider the following impact kernel specification
\begin{equation}
\label{eq:constant_and_power_law_kernel}
G(t) \;:=\; \kappa_\infty + G_{\lambda, \nu}(t), \quad t > 0,
\end{equation}
which combines a permanent component proportional to $\kappa_{\infty} \geq 0$ capturing the trader's contribution to the long-term fundamental value of the asset, and a power-law decaying kernel
\begin{equation} \label{eq:decay_specification}
    G_{\lambda, \nu}(t) := \lambda t^{\nu-1}, \quad \lambda > 0, \quad \nu \in \bigg( \frac{1}{2}, 1 \bigg),
\end{equation}
reproducing the transient long-memory price impact of trades empirically observed in order flow, see Chapter 13.2.1 from \cite{bouchaud2018trades} and references therein. $G$ is positive semi-definite, which ensures the absence of dynamic price manipulation and rules out profitable round trips \cite{gatheral2010no}. Although we focus on specification \eqref{eq:constant_and_power_law_kernel} for concreteness and interpretation, the framework extends naturally to other admissible kernels, such as completely monotone kernels or sums of exponentials, which have also been proposed in the literature to capture different impact decay patterns, see for instance \cite{abi2025fredholm}.\\


A key implication is that an impact of the form $\int_0^t G(t-s)\, \dd X_s$, with $G$ given by \eqref{eq:constant_and_power_law_kernel}, overreacts at short horizons: for any finite $t > 0$, one has $G(t) > \lim_{s \to \infty} G(s) = \kappa_\infty$. In economic terms, prices initially move more than what is justified by the long-run information content of the trader's order flow. This transient overshooting creates a temporary mispricing between the observed market price and the trader's contribution to the long-term efficient long-term price, which is considered a temporary price signal denoted by $\alpha^{r}$ such that
\begin{equation}
    \alpha_{t}^{r} := \int_0^t \big( G(t-s) - G(\infty) \big)\big(u_s - r^u_s\big)\,\dd s = \int_0^t G_{\lambda, \nu}(t-s)\big(u_s - r^u_s\big)\,\dd s.
\end{equation}
In classical linear propagator models, $\alpha^{r}$ is purely mechanical and is not acted upon. A key contribution of \citet{durin2023two} is to formulate an endogenous correction to such mispricing caused by the trader's strategy $u$ by introducing a continuous-time market resistance $r^{u}$, generated by sophisticated traders. These agents partially detect the presence of the trader's metaorder, infer the transient component of its impact \cite{vaglica2008scaling,toth2010segmentation}, and trade against it to benefit from $\alpha^{r}$. Their activity produces an opposing flow that dampens impact dynamically, rather than
letting prices passively relax through the kernel $G$. In their framework, the pathwise market impact of a metaorder with execution strategy $u$ at date $t \geq 0$, meaning that $X_t = \int_0^t u_s\,\dd s$, is given by
\begin{equation}
\label{eq:market_impact_rewrite}
MI(u,t) = \int_0^t G(t-s)\big(u_s - r^u_s\big)\,\dd s ,
\end{equation}
where $r^u$ represents the endogenous reaction rate of sophisticated traders. This reaction satisfies the following fixed-point equation
\begin{equation}
\label{eq:resistance_rewrite}
r^u_t = \mathcal{U}\!\left( \alpha_{t}^{r} \right), \quad i.e., \; r^u_t = \mathcal{U}\!\left( \int_0^t G_{\lambda, \nu}(t-s)\big(u_s - r^u_s\big)\,\dd s \right),
\end{equation}
for some increasing function $\mathcal{U}:\mathbb{R}\to\mathbb{R}$. A natural specification for $\mathcal{U}$ can be derived heuristically. Assume that price impact follows a square-root law, that is, if the sophisticated trader invests an amount $x$ proportional to this signal $\alpha^{r}$, the expected gain is $\alpha^{r} x$, while the associated impact cost is $k \sqrt{x}\,x$ for some $k>0$. The expected profit, $\alpha^{r} x - k \sqrt{x}\,x$, is maximized for $x = \tilde{k}\,\left(\alpha^{r}\right)^2$ with $\tilde{k}>0$, which motivates taking $\mathcal{U}(x)$ in \eqref{eq:resistance_rewrite} proportional to $x^2$. This particular choice is not required in the remainder of the paper.\\

Importantly, the resistance mechanism is backed by a precise microstructural foundation where it can be derived from a high-frequency model in which market orders follow self-exciting point processes and prices are set as conditional expectations of future order flow. When informed agents partially filter out the metaorder component and trade against the resulting mispricing, the scaling limit of the model leads exactly to \eqref{eq:market_impact_rewrite}--\eqref{eq:resistance_rewrite}. We refer to \citet{durin2023two} for the full derivation. 

Economically, $r^u$ can be interpreted as a stabilizing force. When a metaorder pushes prices above their efficient level, sophisticated traders increase their activity, absorbing part of the order flow and limiting further price pressure. When the mispricing shrinks, their activity naturally subsides. The function $\mathcal{U}$ governs how aggressively the resistance reacts to perceived mispricing, as illustrated later on in Figure~\ref{F:trading_buy_signals_various_convexity}.\\

\subsection{Properties of the market resistance}

The introduction of a resistance term $r^u$ satisfying \eqref{eq:resistance_rewrite} fundamentally alters the shape of market impact. Assuming a buy strategy $u \geq 0$ with compact support, the market impact $MI$ from \eqref{eq:market_impact_rewrite} is decomposed into a permanent and a transient components, denoted respectively by $PMI$ and $TMI$, such that
\begin{align*}
&PMI(u) := \lim_{t \to \infty} MI(u,t) = \kappa_\infty \int_0^\infty \big(u_{s}-r^u_s\big)\,\dd s ,
\\
&TMI(u,t) := MI(u,t) - PMI(u) = \int_0^t G_{\lambda, \nu}(t-s)\big(u_{s}-r^u_s\big)\,\dd s
-\kappa_\infty \int_t^\infty \big(u_{s}-r^u_s\big)\,\dd s .
\end{align*}
As in standard propagator models, impact is concave in time and decays as a power law with exponent $\nu-1$; see \cite[Section~13.4.4]{bouchaud2018trades} and Figure~\ref{fig:mi_in_time}. 

\begin{figure}[H]
    \centering

    \begin{subfigure}{\textwidth}
        \centering
        \includegraphics[width=4in]{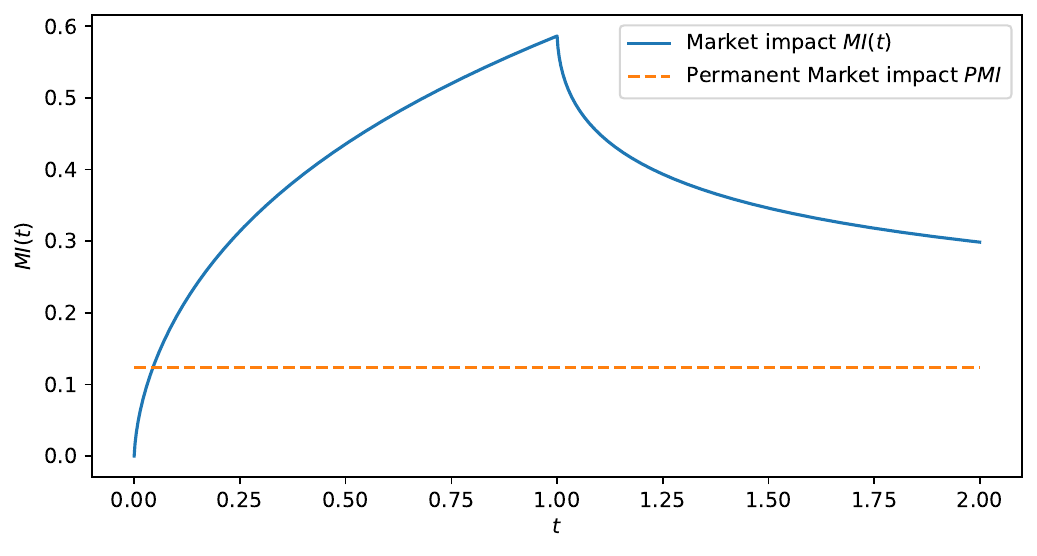}
        \vspace{-0.3cm}
        \caption{Market impact profile}
        \label{fig:mi_only}
    \end{subfigure}

    \vspace{0.2cm}

    \begin{subfigure}{\textwidth}
        \centering
        \includegraphics[width=4in]{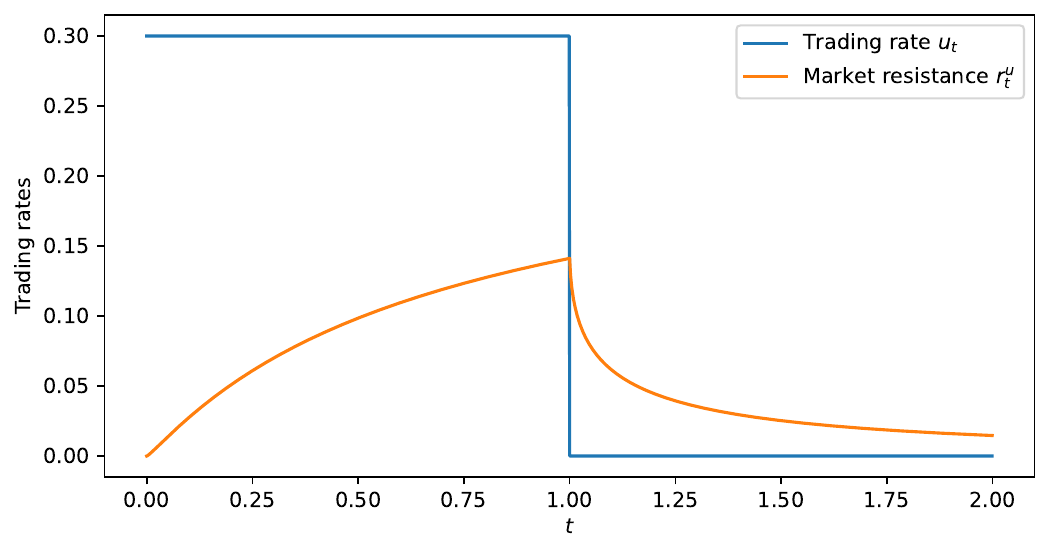}
        \vspace{-0.3cm}
        \caption{Trading rate $u$ and its resulting endogenous resistance rate $r^u$}
        \label{fig:execution}
    \end{subfigure}

    \caption{Market impact and trading rates for $\nu = 0.5$, $\lambda = 1$,
    $u(t) = 0.3\,\mathbf{1}_{0 \leq t \leq 1}$, and $\mathcal{U}(x)=x^2$.}
    \label{fig:mi_in_time}
\end{figure}


A particularly important implication concerns the dependence of impact on the participation rate. Consider a normalized execution profile $u$ (i.e., $\int_0^\infty u(s)\,\dd s = 1$) and suppose trading occurs at rate $\gamma u$. In this setting, $\gamma$ has no unit, and acts as a proxy for the participation rate, which is $\gamma/(\gamma+V)$, where $V$ denotes the typical background market volume. Under the power-law specification $\mathcal{U}(x)=c x^c$ with $c>1$, \citet[Theorem~9]{durin2023two} show that market impact satisfies
\[
MI(\gamma u,t)\sim_{\gamma\to\infty} \gamma^{1/c}.
\]
Choosing $c=2$ recovers the empirically observed square-root law. In this interpretation, the square-root scaling is not imposed exogenously but emerges from the interaction between aggressive execution and endogenous market resistance. Although this theorem is asymptotic and valid for large participation rates, Figure~\ref{fig:mi_in_pr} illustrates that the scaling remains a good approximation even for moderate values of $\gamma$.

\begin{figure}[H]
    \centering
    \includegraphics[width=4in]{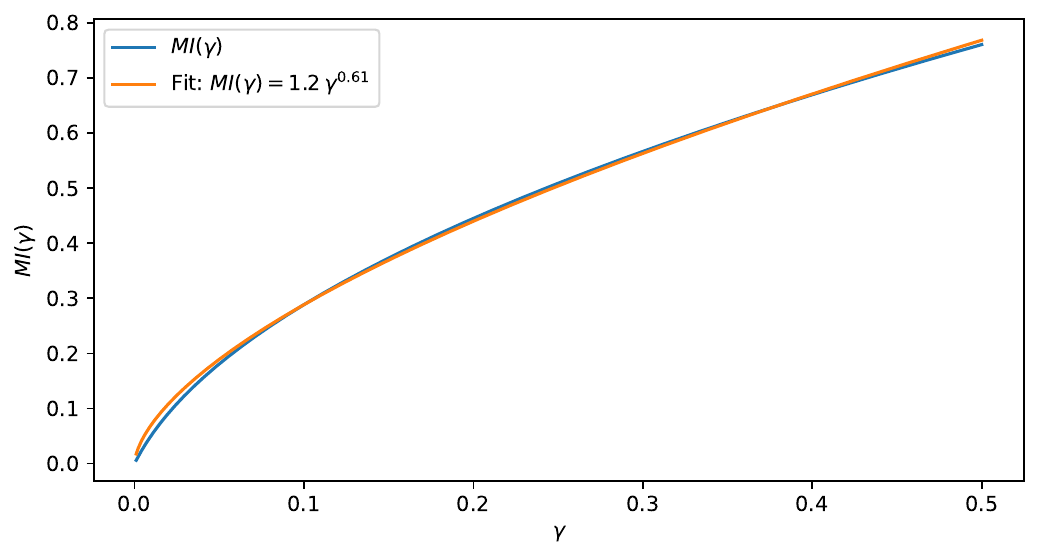}
    \caption{Market impact as a function of $\gamma$ for $\nu=0.5$,
    $\lambda=1$, and $\mathcal{U}(x)=x^2$. A power-law fit of the form
    $MI = 1.172 \cdot \gamma^{0.6086}$ is shown.}
    \label{fig:mi_in_pr}
\end{figure}

\section{Optimal trading with market resistance} \label{sec:optimal_trading}
\subsection{Problem formulation}

Consider a risky asset whose unaffected price process $S$ is given by
\begin{equation} \label{eq:unaffected_price_process}
    S_t = P_t + M_t, \quad t \in [0,T],
\end{equation}
where $P$ a finite variation process valued in $\mathcal L^2$ and $M$ is a centered square-integrable martingale. We consider an agent with an initial holding $X_0 \in \bbR$ of this asset controlling his trading rate $u \in \calL^2$ of his inventory $X^u$ such that
\begin{equation}
\label{eq:inventory}
    X_t^u := X_0 + \int_0^t u_s \, \dd s, \quad t \in [0,T].
\end{equation}
In particular,
\begin{equation} \label{eq:finite_sup_squared_holding}
    \sup _{t \in[0, T]} \mathbb{E}[|X_t^u|^2]<\infty.
\end{equation}
Furthermore, trading the asset at the rate $u$ impacts the unaffected price $S$ such that the effective trading price $S^{u}$ becomes
\begin{equation} \label{eq:effective_traded_price}
    S^{u} := S + I^{u},
\end{equation}
where the price impact model $I^u$ is specified as follows
\begin{equation} \label{eq:price_impact_model}
    I_t^u := \frac{\gamma}{2} u_{t} + (\mathbf{H} + \mathbf{G} )(u-r^{u})_{t}, \quad t \in [0,T].
\end{equation}
Here, $\mathbf{H}$ and $\mathbf{G}$ are two operators on $\mathcal{L}^2$ modeling the permanent and transient market impacts, respectively, while $\gamma \geq 0$ denotes, from this point onward, the (constant) slippage cost intensity. As motivated in Section~\ref{sec:market_impact_model}, $r^{u}$ denotes the \textit{market's reaction} or \textit{resistance} to the trading strategy $u$, capturing the nonlinear nature of price impact as it solves the following nonlinear Volterra equation
\begin{equation}
\label{eq:r_u}
    r^u_t = \calU \big( \mathbf{G}(u-r^{u})_{t}\big), \quad t \in [0,T],
\end{equation}
where the $\calU$ is called \textit{the resistance function}. In order to ensure the well-posedness of the price impact model~\eqref{eq:price_impact_model}--\eqref{eq:r_u}, we make the following assumptions on the operators $\mathbf{H}$, $\mathbf{G}$ and on the resistance function $\calU$.

\begin{definition}
    A Volterra kernel $G \colon [0, T]^2 \to \mathbb{R}$ is said to be \emph{admissible} if
    \begin{equation} \label{eq:constant_norm_of_G}
        C_{G} :=
        \sup_{t\in [0, T]} \int_{0}^{t}|G(t,s)|^2\dd s < \infty.
    \end{equation}
    Any admissible kernel $G$ induces a unique linear and bounded integral operator $\mathbf{G}: \mathcal{L}^{2} \mapsto \mathcal{L}^{2}$ defined by
    \begin{equation*}
        (\mathbf{G}u)_{t} := \int_{0}^{t} G(t,s) u_{s} \dd s, \quad t \in [0,T], \quad u \in \mathcal{L}^{2},
    \end{equation*}
    such that, by Cauchy-Schwarz's inequality, 
    \begin{equation} \label{eq:estimate_on_admissible_G_operator}
        \| \mathbf{G}u \|^2 \le T C_{G} \| u \|^2, \quad u \in \mathcal{L}^2.
    \end{equation}
    As a consequence of Fubini's theorem and the tower property of the conditional expectation, the unique linear and bounded adjoint operator $\mathbf{G}^{\ast}: \mathcal{L}^{2} \to \mathcal{L}^{2}$ of $\mathbf{G}$ is explicitly given by
    \begin{equation*} 
        (\mathbf{G}^{\ast}u)_{t} := \int_{t}^{T} G(s,t) \E_{t}[ u_{s} ]\dd s, \quad t \in [0,T], \quad u \in \mathcal{L}^{2}.
    \end{equation*}
    The boundedness of $\mathbf{G}^\ast$ is ensured by conditional Jensen's inequality and Fubini's theorem such that
    \begin{equation} \label{eq:estimate_on_adjoint}
        \| \mathbf{G}^{\ast}u \|^2 \leq T C_{G} \| u \|^2,\quad u \in \mathcal{L}^2,
    \end{equation}
    see for example \cite[Chapter 6, Section 2]{dunford1988linear}.
\end{definition}
\begin{Assumption}
\label{assumption:G}
    The operator $\mathbf{G}$ is an admissible convolution operator: there exists an admissible kernel function $G: [0, \infty) \to \bbR$ such that for each $u \in \mathcal{L}^2$ and each $t \in [0,T]$, we have
    \begin{equation*}
        \mathbf{G} u_t = \int_0^t G(t-s) u_s \, \dd s.
    \end{equation*}
    We also assume $G$ is continuous on $(0, \infty)$, completely monotone (see Definition~\ref{def:cm}), and $L^2$ integrable on $[0, T]$ and we write
    \begin{equation*}
        \|G\|_{L^2([0,T])}^2 = \int_0^T G(s)^2 \, \dd s.
    \end{equation*}
    Finally, we assume that $G$ satisfies the following continuity assumption
    \begin{equation*}
        \lim\limits_{t \to t_0} \int_0^{\infty} |G(t_0 - s) - G(t-s)| \, \dd s \to 0.
    \end{equation*}
\end{Assumption}

\begin{Assumption}
\label{assumption:H}
    There exists a nonnegative constant $\kappa_{\infty}$ such that for each $u \in \mathcal{L}^2$ and each $t \in [0,T]$, 
    \begin{equation} \label{eq:permanent_impact_specification}
        \mathbf{H} u_t = \kappa_{\infty} \int_0^t u_s \, \dd s.
    \end{equation}
\end{Assumption}

\begin{Assumption}
\label{assumption:U}
    The function $\calU$ is  $L$-Lipschitz continuous with $\calU(0) = 0$ and satisfies
    \begin{equation*}
        \lim\limits_{|x| \to \infty} \frac{\calU(x)}{x} = \delta \geq 0.
    \end{equation*}
\end{Assumption}
\begin{rem}
    Assumption~\ref{assumption:U} implies that $\calU$ is asymptotically close to linear. In particular, it includes the linear case, the bounded case and all cases of interest discussed in Section~\ref{sec:general_case}. Moreover, the assumption that $\calU(0) = 0$ is not essential but is considered to simplify the computations.
\end{rem}

Given any trading strategy $u \in \mathcal{L}^{2}$, the following lemma ensures that the resistance fixed-point equation~\eqref{eq:r_u} admits a unique solution $r^u \in \mathcal{L}^2$ under a linear growth assumption on $\calU$.
\begin{lemma} 
\label{lem:well_posed}
    Let $u \in \calL^2$. Suppose that Assumptions~\ref{assumption:G} and~\ref{assumption:H} hold and that $\calU$ is Lipschitz continuous, then there exists a unique solution $r^{u} \in \mathcal{L}^2$ to~\eqref{eq:r_u}. Moreover, the mapping $u \mapsto r^u$ is Lipschitz continuous on $\mathcal{L}^2$.
\end{lemma}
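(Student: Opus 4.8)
The natural approach is a Banach fixed-point argument on $\mathcal{L}^2$. Fix $u \in \mathcal{L}^2$ and define the map $\Phi_u \colon \mathcal{L}^2 \to \mathcal{L}^2$ by $\Phi_u(r)_t := \calU\big(\mathbf{G}(u-r)_t\big)$; a solution to \eqref{eq:r_u} is exactly a fixed point of $\Phi_u$. First I would check that $\Phi_u$ maps $\mathcal{L}^2$ into itself: since $\calU$ is Lipschitz with $\calU(0)=0$ we have $|\calU(x)| \le L|x|$, so $\|\Phi_u(r)\|^2 \le L^2 \|\mathbf{G}(u-r)\|^2 \le L^2 T C_G \|u-r\|^2 < \infty$ by \eqref{eq:estimate_on_admissible_G_operator}, and progressive measurability is preserved because $\mathbf{G}$ is a Volterra (causal) operator and $\calU$ is continuous.

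The main obstacle is that $\Phi_u$ need not be a contraction on the whole interval $[0,T]$ when $L^2 T C_G \ge 1$, so a one-shot contraction estimate fails. The standard remedy for Volterra equations is to iterate and exploit the smoothing of the convolution: one shows that some power $\Phi_u^{(n)}$ is a contraction, or equivalently works on short subintervals and patches. Concretely, I would prove by induction the pointwise-in-time estimate
\begin{equation*}
    \E\big[ |\Phi_u^{(n)}(r)_t - \Phi_u^{(n)}(\bar r)_t|^2 \big] \le \frac{(L^2 C_G' t)^n}{n!}\, \sup_{s\le t}\E\big[|r_s - \bar r_s|^2\big]
\end{equation*}
for a suitable constant $C_G'$ controlling $\int_0^t |G(t-s)|^2\,\dd s$ — using that $\calU$ is $L$-Lipschitz and Cauchy--Schwarz applied to the convolution kernel at each fixed $t$. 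Integrating in $t$ gives $\|\Phi_u^{(n)}(r) - \Phi_u^{(n)}(\bar r)\|^2 \le \frac{(L^2 C_G' T)^n}{n!}\|r-\bar r\|^2$, and since $(L^2 C_G' T)^n/n! \to 0$, for $n$ large $\Phi_u^{(n)}$ is a contraction on the Banach space $\mathcal{L}^2$. By the standard extension of Banach's theorem (a map with a contracting iterate has a unique fixed point), $\Phi_u$ has a unique fixed point $r^u \in \mathcal{L}^2$.

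For the Lipschitz dependence $u \mapsto r^u$, I would take $u, v \in \mathcal{L}^2$ with corresponding fixed points $r^u, r^v$ and write
\begin{equation*}
    r^u_t - r^v_t = \calU\big(\mathbf{G}(u-r^u)_t\big) - \calU\big(\mathbf{G}(v-r^v)_t\big),
\end{equation*}
so that $|r^u_t - r^v_t| \le L\,|\mathbf{G}(u-v)_t| + L\,|\mathbf{G}(r^u - r^v)_t|$. Taking $\mathcal{L}^2$ norms and using \eqref{eq:estimate_on_admissible_G_operator} twice gives $\|r^u - r^v\| \le L\sqrt{TC_G}\,\|u-v\| + L\sqrt{TC_G}\,\|r^u-r^v\|$. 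If $L\sqrt{TC_G} < 1$ this closes immediately; in general I would instead run the same $n$-fold Volterra iteration — bounding $\E[|r^u_t - r^v_t|^2]$ in terms of $\sup_{s\le t}\E[|u_s - v_s|^2]$ and $\sup_{s \le t}\E[|r^u_s - r^v_s|^2]$, then iterating the second term away via the factorial decay — to obtain $\|r^u - r^v\| \le C\|u-v\|$ with $C$ depending only on $L$, $T$, and $C_G$. Note $\mathbf{G}$ being a convolution operator under Assumption~\ref{assumption:G} is convenient but not strictly needed here; admissibility of $G$ and the Lipschitz property of $\calU$ suffice, with Assumption~\ref{assumption:H} playing no role in this particular lemma.
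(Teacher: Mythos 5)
Your argument is correct in substance and takes a genuinely different route from the paper. The paper first develops a purely deterministic theory for the pathwise nonlinear Volterra equation $(E_{k,f,g})$ in Appendix~\ref{app:nonlinear_volterra}: it proves a contraction on $L^\infty([0,\varepsilon])$ by choosing $\varepsilon$ so small that $\int_0^\varepsilon k(s)\,\dd s \le 1/(2L)$, then patches by induction to $[0,T]$, obtaining a Lipschitz solution operator $\mathbf{R}_T$ on $\mathcal{C}([0,T])$. The stochastic $\mathcal{L}^2$ statement is then obtained by applying $\mathbf{R}_T$ $\omega$-by-$\omega$ to $f(\omega)=\mathbf{G}u(\omega,\cdot)$, verifying adaptedness separately via the non-anticipativity of $\mathbf{G}$ and the restriction operators $\mathbf{R}_t$, and finally bounding $\|r^u\|_{\mathcal{L}^2}$ via Young's convolution inequality. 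You instead run a single Banach fixed-point argument directly on the Hilbert space $\mathcal{L}^2$, exhibiting some iterate $\Phi_u^{(n)}$ as a contraction via Cauchy--Schwarz against the admissible-kernel constant $C_G$ and the factorial decay of iterated Volterra kernels. What each buys: your route is shorter and makes progressive measurability automatic (it is inherited by Picard iterates and preserved in the $\mathcal{L}^2$ limit), and as you correctly observe it does not use Assumption~\ref{assumption:H}, the convolution structure, or the $L^1$-continuity part of Assumption~\ref{assumption:G}; the paper's route is heavier here but the deterministic Theorem~\ref{thm:existence:E} is a reusable tool that the paper leans on again (e.g.\ in Lemma~\ref{lem:preliminary} and in the implicit-function argument behind Lemma~\ref{lem:frechet:R}), so the investment pays off elsewhere.

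One bookkeeping slip worth flagging: the displayed inductive estimate with $\sup_{s\le t}\E[|r_s-\bar r_s|^2]$ on the right does not integrate in $t$ to $\|r-\bar r\|^2$ as claimed (it gives a bound against $\sup_{s\le T}\E[|r_s-\bar r_s|^2]$, a stronger norm). The iteration does close, but one should keep the integral form: setting $\phi_0(s):=\E[|r_s-\bar r_s|^2]$, one has $\E[|\Phi_u^{(n)}(r)_t-\Phi_u^{(n)}(\bar r)_t|^2]\le (L^2C_G)^n\int_0^t \tfrac{(t-s)^{n-1}}{(n-1)!}\phi_0(s)\,\dd s \le \tfrac{(L^2C_GT)^n}{(n-1)!}\|r-\bar r\|^2/T$, which integrates cleanly to $\|\Phi_u^{(n)}(r)-\Phi_u^{(n)}(\bar r)\|^2\le \tfrac{(L^2C_GT)^n}{(n-1)!}\|r-\bar r\|^2\to 0$. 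The same device (or a Gronwall argument on $\E[|r^u_t-r^v_t|^2]$) closes the Lipschitz dependence $u\mapsto r^u$.
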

\begin{proof}
    See Appendix~\ref{app:proof:lem:well_posed}.
\end{proof}
Lemma~\ref{lem:well_posed} guarantees that the price impact $I^{u}$ from~\eqref{eq:price_impact_model} as well as the execution price $(S^u_t)_{t \in [0,T]}$ from~\eqref{eq:effective_traded_price} are well-defined $\mathcal{L}^{2}$ processes, for every $u \in \calL^2$. We are now in place to introduce the agent's Profit and Loss (PnL) functional
\begin{equation}
\label{eq:def:J}
    \calJ(u):=\E\bigg[-\int_0^T S_t^u u_t \dd t+X_T^u S_T-\frac{\phi}{2} \int_0^T(X_t^u)^2 \dd t-\frac{\varrho}{2}(X_T^u)^2\bigg], \quad u \in \mathcal{L}^2,
\end{equation}
where the first term represents the profit and loss induced by the trading rate $u$, the second one values the terminal inventory against the terminal unaffected price \cite[Remark 2.1]{abi2024trading}, the third one encodes risk aversion with weight $\phi \ge 0$, and the last one penalizes the terminal inventory with weight $\varrho \ge 0$. Note that $\calJ$ is indeed well-defined under Assumptions~\ref{assumption:H},~\ref{assumption:G},~\ref{assumption:U} and~\eqref{eq:finite_sup_squared_holding}.  Our objective is to find trading strategies $\hat{u} \in \mathcal{L}^2$ that maximize the performance functional $\mathcal{J}$ such that
\begin{equation} \label{eq:optimal_trading_problem}
\mathcal{J}(\hat{u})=\sup _{u \in \mathcal{L}^2} \mathcal{J}(u).
\end{equation}
Inserting the definitions of the final inventory $X_T^u$ from~\eqref{eq:inventory} and the effective price $S^u$ from~\eqref{eq:effective_traded_price} into $\mathcal{J}$ from~\eqref{eq:def:J}, while using the tower property of the conditional expectation, allows us to rewrite the functional $\calJ$ in the standard form~\eqref{eq:generic_form_optimal_trading} such that
\begin{equation}
\label{eq:J_final}
\mathcal{J}(u)=\E\bigg[\int_0^T(\alpha_t-I_t^u) u_t \dd t + M(X^{u})\bigg]+X_0\E[S_T], \quad u \in \mathcal{L}^2,
\end{equation}
where $\alpha$ is the alpha-signal from the drift of the fundamental asset price $S$, defined by
\begin{equation} \label{eq:alpha_drift}
    \alpha_t = \E_t[S_T - S_t] = \E_t[P_T - P_t], \quad t \in [0,T],
\end{equation}
and
\begin{equation*}
    M(X^{u}) := -\frac{\phi}{2} \int_0^T(X_t^u)^2 \dd t-\frac{\varrho}{2}(X_T^u)^2,
\end{equation*}
in a similar way to \citet{abi2024trading, abi2025fredholm} and \citet{decarvalho2025thesis}.
\begin{rem}
A natural question concerns the choice of the window length $T$. In a real-time trading setup, how should this model be applied, and what value of $T$ is appropriate?

In practice, trading firms typically rely on statistical alpha signals computed over short horizons. For example, suppose an agent trades every 10 minutes using a 10-minute alpha on a given stock. Solving the associated optimal trading problem requires, in our framework, a discrete iterative scheme that takes as input a known trajectory of the buy/sell signal. This assumption is somewhat unrealistic, since the agent only observes a single alpha value every 10 minutes rather than a continuous signal.

How, then, can this framework be used with a single discrete alpha observation? One approach is to solve a sequence of optimization problems, each over a 10-minute window (or, more generally, at the rebalancing frequency of the strategy). Within each window, one may reconstruct a discrete trajectory for the signal---either by recomputing the statistical alpha on smaller subwindows, or by treating the observed alpha as constant over these subintervals---and then apply the numerical scheme described in Section~\ref{sec:numerical_applications}. This yields a practical real-time implementation consistent with the statistical nature of alpha signals.

\end{rem}



\paragraph{Reformulation with operators on $\mathcal{L}^2$.} To better analyze the problem, we recast it in operator formulation.
First, note that in view of Lemma~\ref{lem:well_posed}, we know that there exists an operator $\mathbf{R}: \mathcal{L}^2 \to \mathcal{L}^2$ such that $r^u = \mathbf{R} (u), \; u \in \mathcal{L}^{2}$. Using this notation, the market impact $I_t^u$ can be rewritten as
\begin{equation*}
    I_t^u = \frac{\gamma}{2} u_{t} + \big((\mathbf{H} + \mathbf{G}) \circ (\mathbf{I} - \mathbf{R}) u \big)_{t} , \quad t \in [0,T]
\end{equation*}
where $\mathbf{I}$ stands for the identity operator in $\calL^2$ such that $\mathbf{I}u_t = u_t$ for every $u \in \calL^2$ and $t \in [0,T]$. To use the same operator notations, we also write $\bfX u = X^u$ which defines an operator in $\calL^2$. Thus, the gain functional~\eqref{eq:J_final} becomes
\begin{equation}
\label{eq:operator:J}
\mathcal{J}(u) 
    = 
    \langle u, \alpha \rangle - \frac{\gamma}{2} \norm{u}^2 - \langle u, (\mathbf{H} + \mathbf{G})\circ(\mathbf{I} - \mathbf{R})u \rangle
    - \frac{\phi}{2} \norm{\bfX u}^{2}
    - \frac{\varrho}{2} \E[(\bfX u)_T^2] + X_0 \mathbb{E}[S_T].
\end{equation}
All these operators act on $\mathcal{L}^2$. However, we can also see them as operators on $L^2$. In the following, we consider either of each representation depending of the context.

\subsection{The linear case}
\label{sec:linear_case}
In this section, we assume $\calU$ to be linear. 
We focus on this case for two reasons. First, the operator $\mathbf{R}$ becomes linear, which renders the analysis of $\mathcal{J}$ straightforward: $\mathcal{J}$ becomes a quadratic form, and the existence and uniqueness of a maximizer follows from the positive semi-definiteness of the operator $(\mathbf{H}+\mathbf{G})\circ(\mathbf{I}-\mathbf{R})$. Second, the linear case serves as a building block for the proofs in the general setting. We now formalize the main results of this section in the following statement.
\begin{theorem}
    \label{thm:linear_case}
    Suppose that $\calU(x) = a x$ for some $a \geq 0$ for every $x \in \mathbb{R}$. Suppose also that Assumptions~\ref{assumption:G} and~\ref{assumption:H} hold. Then the functional $- \mathcal{J}$ is $\gamma$-convex, coercive and therefore $\mathcal{J}$ admits a unique maximizer $\wh u$ characterized by 
    \begin{equation}
    \label{eq:FOC:linear}
        \gamma \wh u + (\mathbf{H} + \mathbf{G}) \circ (\mathbf{I} + a \mathbf{G})^{-1} \wh u +  (\mathbf{I} + a \mathbf{G}^*)^{-1} \circ (\mathbf{H}^* + \mathbf{G}^*) \wh u + \mathbf{H}_{\phi,\varrho}\wh u 
    + \mathbf{H}_{\phi,\varrho}^{*} \wh u = \alpha  - X_0 \big(\phi(T- .) + \varrho\big),
    \end{equation}
    where $\mathbf{H}_{\phi,\varrho}$ is the integral operator associated to the kernel
    \begin{equation}
    \label{eq:def:H}
        H_{\phi,\varrho}(t,s) := \big(\phi(T-t)+\varrho\big) \mathbbm{1}_{\{ t > s \}}, \quad s,t \in [0,T].
    \end{equation}
\end{theorem}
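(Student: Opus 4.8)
When $\calU$ is linear the problem collapses to a quadratic minimisation on the Hilbert space $\calL^2$, and \eqref{eq:FOC:linear} is just its Euler--Lagrange equation. The first step is to eliminate the resistance. With $\calU(x)=ax$, equation \eqref{eq:r_u} reads $(\mathbf{I}+a\mathbf{G})r^u=a\mathbf{G}u$; since $\mathbf{G}$ is a Volterra operator and $a\ge0$, $\mathbf{I}+a\mathbf{G}$ is boundedly invertible on $\calL^2$ (the linear instance of Lemma~\ref{lem:well_posed}, or a Volterra resolvent series), so $u-r^u=(\mathbf{I}+a\mathbf{G})^{-1}u$, i.e.\ $\mathbf{R}=(\mathbf{I}+a\mathbf{G})^{-1}a\mathbf{G}$ is linear and bounded and $\mathbf{I}-\mathbf{R}=(\mathbf{I}+a\mathbf{G})^{-1}$. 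Plugging this into \eqref{eq:operator:J}, writing $X^u_t=X_0+(\mathbf{J}u)_t$ with $(\mathbf{J}u)_t:=\int_0^tu_s\,\dd s$ (so $\mathbf{H}=\kappa_\infty\mathbf{J}$), and using Fubini together with the pathwise identities $\int_0^Tu_t(\mathbf{J}u)_t\,\dd t=\tfrac12(\mathbf{J}u)_T^2$ and $\int_0^T(T-t)u_t(\mathbf{J}u)_t\,\dd t=\tfrac12\int_0^T(\mathbf{J}u)_t^2\,\dd t$ to rewrite the inventory penalties as $\langle u,\mathbf{H}_{\phi,\varrho}u\rangle+X_0\langle u,\phi(T-\cdot)+\varrho\rangle$ plus a constant, I would reach the normal form
\[
-\calJ(u)=\tfrac{\gamma}{2}\norm{u}^2+Q(u)-\langle u,b\rangle+\mathrm{const},\qquad b:=\alpha-X_0\big(\phi(T-\cdot)+\varrho\big),
\]
where $Q(u):=\langle u,(\mathbf{H}+\mathbf{G})(\mathbf{I}+a\mathbf{G})^{-1}u\rangle+\langle u,\mathbf{H}_{\phi,\varrho}u\rangle$.

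\textbf{Positivity of $Q$ (the crux).} The penalty part equals $\tfrac{\phi}{2}\norm{\mathbf{J}u}^2+\tfrac{\varrho}{2}\E[(\mathbf{J}u)_T^2]\ge0$ by the same identities. For the impact part, set $v:=(\mathbf{I}+a\mathbf{G})^{-1}u$, so $u=v+a\mathbf{G}v$ and
\[
\langle u,(\mathbf{H}+\mathbf{G})(\mathbf{I}+a\mathbf{G})^{-1}u\rangle=\langle v,\mathbf{H}v\rangle+\langle v,\mathbf{G}v\rangle+a\langle\mathbf{G}v,\mathbf{H}v\rangle+a\norm{\mathbf{G}v}^2 .
\]
Three of the four terms are clearly $\ge0$: $\langle v,\mathbf{H}v\rangle=\tfrac{\kappa_\infty}{2}\E[(\mathbf{J}v)_T^2]$, $\norm{\mathbf{G}v}^2\ge0$, and $\langle v,\mathbf{G}v\rangle\ge0$ follows from complete monotonicity via the Bernstein representation $G(t)=\int_{[0,\infty)}e^{-\theta t}\mu(\dd\theta)$: $\langle v,\mathbf{G}v\rangle=\int\langle v,\mathbf{G}_\theta v\rangle\mu(\dd\theta)$ with $\mathbf{G}_\theta$ the exponential-kernel Volterra operator, and each $\langle w,\mathbf{G}_\theta w\rangle\ge0$ (if $z=\mathbf{G}_\theta w$ then $z'=w-\theta z$, $z_0=0$, so $\langle w,\mathbf{G}_\theta w\rangle=\tfrac12\E[z_T^2]+\theta\norm{z}^2$). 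The remaining, non-self-adjoint term is handled the same way after an integration by parts: $\langle\mathbf{G}v,\mathbf{H}v\rangle=\kappa_\infty\langle\mathbf{G}v,\mathbf{J}v\rangle=\kappa_\infty\int\langle\mathbf{G}_\theta v,\mathbf{J}v\rangle\mu(\dd\theta)$, and with $y:=\mathbf{J}v$ one has $\mathbf{G}_\theta v=y-\theta\mathbf{G}_\theta y$, hence $\langle\mathbf{G}_\theta v,\mathbf{J}v\rangle=\norm{y}^2-\theta\langle y,\mathbf{G}_\theta y\rangle\ge e^{-\theta T}\norm{y}^2\ge0$, using the Young/Cauchy--Schwarz bound $\langle y,\mathbf{G}_\theta y\rangle\le\norm{y}\,\norm{\mathbf{G}_\theta y}\le\tfrac{1-e^{-\theta T}}{\theta}\norm{y}^2$. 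Thus $Q\ge0$, so $-\calJ$ is $\gamma$-convex and coercive, hence attains its infimum at a unique $\wh u\in\calL^2$.

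\textbf{First-order condition.} As $-\calJ$ is a smooth convex functional, $\wh u$ is its (unique) minimiser if and only if its G\^ateaux derivative vanishes at $\wh u$. Differentiating the normal form,
\[
\nabla(-\calJ)(u)=\gamma u+\mathbf{A}u+\mathbf{A}^{*}u+\mathbf{H}_{\phi,\varrho}u+\mathbf{H}_{\phi,\varrho}^{*}u-b,\qquad \mathbf{A}:=(\mathbf{H}+\mathbf{G})(\mathbf{I}+a\mathbf{G})^{-1},
\]
and $\mathbf{A}^{*}=(\mathbf{I}+a\mathbf{G}^{*})^{-1}(\mathbf{H}^{*}+\mathbf{G}^{*})$ because $(\mathbf{I}+a\mathbf{G})^{*}=\mathbf{I}+a\mathbf{G}^{*}$ and the adjoint of a product reverses the order. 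Setting $\nabla(-\calJ)(\wh u)=0$ is precisely \eqref{eq:FOC:linear}.

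\textbf{Main obstacle.} The one genuinely non-routine point is the positive semi-definiteness of $(\mathbf{H}+\mathbf{G})\circ(\mathbf{I}-\mathbf{R})=(\mathbf{H}+\mathbf{G})(\mathbf{I}+a\mathbf{G})^{-1}$: this operator is not self-adjoint, so PSD-ness does not come for free from ``a product of positive operators'', and the cross term $\langle\mathbf{G}v,\mathbf{H}v\rangle$ must be controlled using the complete-monotonicity structure of $G$ (reduction to exponential kernels and the elementary ODE/Young estimates above). The rest --- inverting $\mathbf{I}+a\mathbf{G}$, the inventory-penalty bookkeeping, and the differentiation --- is standard quadratic-functional calculus.
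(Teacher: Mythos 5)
Your overall strategy coincides with the paper's: linearise the resistance via $\mathbf{I}-\mathbf{R}=(\mathbf{I}+a\mathbf{G})^{-1}$, reduce $-\calJ$ to a quadratic form in $u$, establish positive semi-definiteness of $(\mathbf{H}+\mathbf{G})\circ(\mathbf{I}+a\mathbf{G})^{-1}$ by the substitution $v=(\mathbf{I}+a\mathbf{G})^{-1}u$ and the four-term expansion, and then obtain the first-order condition by Gâteaux differentiation together with the adjoint identity $((\mathbf{H}+\mathbf{G})(\mathbf{I}+a\mathbf{G})^{-1})^{*}=(\mathbf{I}+a\mathbf{G}^{*})^{-1}(\mathbf{H}^{*}+\mathbf{G}^{*})$. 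The decomposition into $\langle v,\mathbf{H}v\rangle+\langle v,\mathbf{G}v\rangle+a\norm{\mathbf{G}v}^2+a\langle\mathbf{G}v,\mathbf{H}v\rangle$ is exactly the paper's Lemma~\ref{L:psd_linear_resistance}.

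Where you genuinely diverge is the treatment of the cross term. The paper sets $y=\mathbf{G}_\lambda v$ and $z=\mathbf{1}v$, solves the pathwise ODEs $y'+\lambda y=v$, $z'=v$, and completes the square to obtain the exact identity
\begin{equation*}
\langle\mathbf{H}v,\mathbf{G}_\lambda v\rangle=\frac{\kappa_\infty}{2\lambda}\bigg(\big(z(T)-y(T)\big)^2+2\lambda\int_0^T y(t)^2\,\dd t\bigg)\geq 0.
\end{equation*}
You instead take $y=\mathbf{J}v$, use the resolvent identity $\mathbf{G}_\theta v=y-\theta\mathbf{G}_\theta y$ (integration by parts), and bound $\theta\langle y,\mathbf{G}_\theta y\rangle\leq(1-e^{-\theta T})\norm{y}^2$ via Young and Cauchy--Schwarz to conclude $\langle\mathbf{G}_\theta v,\mathbf{J}v\rangle\geq e^{-\theta T}\norm{y}^2\geq 0$. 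Both arguments are correct; yours is a shade slicker while the paper's exact sum-of-squares formula is more informative and avoids the Young estimate.

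One gap worth flagging: after reducing to exponential kernels via Bernstein--Widder, the measure $\mu$ in $G(t)=\int_0^\infty e^{-\theta t}\mu(\dd\theta)$ is $\sigma$-finite but in general not finite. For the power-law kernel $G_{\lambda,\nu}(t)=\lambda t^{\nu-1}$ the density $\mu(\dd\theta)\propto\theta^{-\nu}\,\dd\theta$ diverges at both endpoints. Your direct Fubini interchange $\langle\mathbf{H}v,\mathbf{G}v\rangle=\int\langle\mathbf{H}v,\mathbf{G}_\theta v\rangle\,\mu(\dd\theta)$ therefore needs justification: the pointwise integrand mixes signs (the nonnegativity of $\langle\mathbf{H}v,\mathbf{G}_\theta v\rangle$ is precisely what is being proved), so Tonelli does not apply immediately. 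The paper handles this by truncating $\mu_n=\mu\1_{A_n}$ on increasing sets of finite measure, proving positivity for each truncated kernel $G^{(n)}$, and passing to the limit by $L^2$-dominated convergence and continuity of $v\mapsto\mathbf{G}v$. You should include or invoke this approximation step to cover the kernels of interest. Similarly, when claiming $\gamma$-strong convexity directly from $Q\geq 0$, it is worth stating explicitly that a nonnegative quadratic form is convex (equivalently, the paper verifies monotonicity of the gradient); this is routine but should not be left implicit.
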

The proof is postponed to Appendix~\ref{s:proof_linear_case}.


\subsection{The general case}
\label{sec:general_case}
We now aim to extend the results of Section~\ref{sec:linear_case} to the nonlinear case of $\calU$, relying only on Assumption~\ref{assumption:U}. To do this, we first show that $\mathcal{J}$ is Fr\'echet differentiable (in the sense of Definition~\ref{def:frechet:diff}) and we compute its derivative, which allows to identify a first-order condition to be satisfied by any critical point of $\mathcal{J}$. The main difficulty in this step lies in the nonlinearity of $r^u$. We then study the coercivity of $\mathcal{J}$, which ensures that an extremum must exist.

\begin{lemma}
\label{lem:frechet:R}
Suppose that $\calU$ is differentiable and Lipschitz continuous. Then the operator $\mathbf{R}$ is Fr\'echet differentiable in $\mathcal{L}^2$. Moreover, for $u, h$ in $\mathcal{L}^2$, the directional derivative of $\mathbf{R}$ at $u$ in the direction $h$ is the $\mathcal{L}^2$-process $y = D\mathbf{R}(u)(h)$ solution of the Volterra equation
\begin{equation*}
    y_{t} =  \calU' \big(\mathbf{G}(u-r^u)_{t}\big)\mathbf{G}(h - y)_{t}, \quad t \in [0,T].
\end{equation*}
In other words, we have
\begin{align*}
    D\mathbf{R}(u) 
    &= \bigg(\mathbf{I} + \calU' \Big(\mathbf{G}\big(u-\mathbf{R}(u)\big)\Big)\mathbf{G}\bigg)^{-1} \calU' \Big(\mathbf{G}\big(u-\mathbf{R}(u)\big)\Big)\mathbf{G}
    \\
    &= \mathbf{I} - \bigg(\mathbf{I} + \calU' \Big(\mathbf{G}\big(u-\mathbf{R}(u)\big)\Big)\mathbf{G}\bigg)^{-1}.
\end{align*}
\end{lemma}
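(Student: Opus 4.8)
The plan is to differentiate the fixed-point relation~\eqref{eq:r_u} directly, to read off $D\mathbf{R}(u)$ from the resulting \emph{linearised} Volterra equation, and then to control the remainder with the same fixed-point/Volterra machinery that underlies Lemma~\ref{lem:well_posed}. (Applying an implicit function theorem to $F(u,r):=r-\calU(\mathbf{G}(u-r))$ is a formally equivalent route and yields the same formula, but it meets the same core difficulty, so I would keep the self-contained argument.)

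\emph{Candidate derivative.} Fix $u\in\calL^2$, write $r^u=\mathbf{R}(u)$ and set $w:=\calU'\big(\mathbf{G}(u-r^u)\big)$, which satisfies $|w|\le L$ since $\calU$ is $L$-Lipschitz. Let $\mathbf{W}$ denote the linear operator $v\mapsto w\,\mathbf{G}v$ on $\calL^2$. By~\eqref{eq:estimate_on_admissible_G_operator} it is bounded with $\|\mathbf{W}\|\le L\sqrt{T C_G}$, and, being of Volterra type, $\mathbf{I}+\mathbf{W}$ is boundedly invertible on $\calL^2$ with an operator-norm bound depending only on $L,C_G,T$ --- this is shown exactly as in Lemma~\ref{lem:well_posed}, by splitting $[0,T]$ into finitely many short subintervals on which $\mathbf{W}$ is a contraction. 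Then $y:=(\mathbf{I}+\mathbf{W})^{-1}\mathbf{W}h$ defines a bounded linear map $h\mapsto y$ on $\calL^2$, $y$ is the unique $\calL^2$-solution of $y_t=w_t\,\mathbf{G}(h-y)_t$, and the identity $(\mathbf{I}+\mathbf{W})^{-1}\mathbf{W}=\mathbf{I}-(\mathbf{I}+\mathbf{W})^{-1}$ yields the two displayed formulas; this $y$ is the natural candidate for $D\mathbf{R}(u)(h)$.

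\emph{Remainder equation.} Put $\tilde r:=\mathbf{R}(u+h)$ and $\Delta:=\tilde r-r^u$. Lemma~\ref{lem:well_posed} gives $\|\Delta\|\le C_R\|h\|$, and since~\eqref{eq:r_u} holds pathwise the same argument yields the pathwise estimate $\|\Delta(\omega)\|_{L^2([0,T])}\le C_R\|h(\omega)\|_{L^2([0,T])}$ for a.e.\ $\omega$. Subtracting the two instances of~\eqref{eq:r_u} and Taylor-expanding $\calU$ to first order with integral remainder gives
\[
\Delta_t=\bar w_t(h)\,\mathbf{G}(h-\Delta)_t,\qquad \bar w_t(h):=\int_0^1\calU'\!\big(\mathbf{G}(u-r^u)_t+\theta\,\mathbf{G}(h-\Delta)_t\big)\dd\theta ,
\]
so that the remainder $z:=\mathbf{R}(u+h)-\mathbf{R}(u)-y=\Delta-y$ solves $(\mathbf{I}+\mathbf{W})z=\big(\bar w(h)-w\big)\,\mathbf{G}(h-\Delta)=:\mathcal{E}_h$, whence $\|z\|\le C\,\|\mathcal{E}_h\|$. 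It remains to prove $\|\mathcal{E}_h\|=o(\|h\|)$: the pathwise bound above together with Cauchy--Schwarz gives $|\mathbf{G}(h-\Delta)_t(\omega)|\le\sqrt{C_G}\,(1+C_R)\,\|h(\omega)\|_{L^2([0,T])}$ a.e., while $|\bar w_\cdot(h)-w_\cdot|\le 2L$ and, because $\mathbf{G}(h-\Delta)\to0$ in $\calL^2$ and $\calU'$ is continuous, $\bar w_\cdot(h)-w_\cdot\to0$ a.e.\ as $\|h\|\to0$; a dominated-convergence argument then closes the estimate and establishes the Fr\'echet differentiability and the formula for $D\mathbf{R}(u)$.

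The main obstacle is this last step. Because $\mathbf{G}$ regularises only in time and not in $\omega$, one cannot bound $\mathbf{G}(h-\Delta)$ in a norm that is uniform over $\Omega$, so the classical differentiability argument for superposition (Nemytskii) operators is not directly usable; the \emph{pathwise} Lipschitz continuity of $u\mapsto r^u$ coming out of Lemma~\ref{lem:well_posed} is exactly the device that supplies the domination needed to run dominated convergence on $\mathcal{E}_h/\|h\|$. Everything else --- the contraction-and-gluing proof of the invertibility of $\mathbf{I}+\mathbf{W}$ and the Taylor bookkeeping --- is routine and mirrors the proof of Lemma~\ref{lem:well_posed}.
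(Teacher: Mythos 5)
Your route is genuinely different from the paper's: the paper applies the implicit function theorem (Theorem~\ref{thm:implicit}) to $\mathbf{T}(u,r)=r-\calU(\mathbf{G}(u-r))$, invoking Lemma~\ref{lemma:solution:linear_volterra} to verify that the partial derivative in $r$ is an isomorphism, whereas you run a direct remainder estimate. The identification of the candidate derivative, the invertibility of $\mathbf{I}+\mathbf{W}$ by a split-the-interval contraction, the Taylor expansion with integral remainder, and the pathwise Lipschitz bound $\norm{\Delta(\omega)}_{L^2}\le C_R\norm{h(\omega)}_{L^2}$ are all correct. You have also correctly located the only nontrivial step.

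The difficulty is that the final dominated-convergence argument does not close as written. You need
\begin{equation*}
\E\bigg[\int_0^T |\bar w_t(h)-w_t|^2\,|\mathbf{G}(h-\Delta)_t|^2\,\dd t\bigg]=o(\norm{h}^2),
\end{equation*}
and your pathwise bound $|\mathbf{G}(h-\Delta)_t(\omega)|\le\sqrt{C_G}(1+C_R)\norm{h(\omega)}_{L^2}$ reduces this to $\E[f_h\,g_h]\to 0$, where $f_h(\omega):=\norm{h(\omega)}_{L^2}^2/\E\norm{h(\cdot)}_{L^2}^2$ is a probability density and $g_h(\omega):=\int_0^T|\bar w_t(h,\omega)-w_t(\omega)|^2\dd t\le 4L^2T$ satisfies $g_h(\omega)\to0$ whenever $\norm{h(\omega)}_{L^2}\to0$. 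Dominated convergence does not apply here because $f_h$ changes with $h$ and may concentrate on a shrinking part of $\Omega$ where $g_h$ stays of order one. Concretely, with $\Omega$ countably infinite, take $h$ supported on a single atom $\omega^*$ of small probability $p^*$ with $\norm{h(\omega^*)}_{L^2}=1$: then $\norm{h}_{\mathcal{L}^2}^2=p^*\to0$ while $\E[f_h\,g_h]=g_h(\omega^*)$ need not vanish unless $\calU'$ is constant. This is the familiar obstruction that superposition operators on $L^p$ are generically not Fréchet differentiable unless the nonlinearity is affine; $\mathbf{G}$ smooths in $t$ but not in $\omega$, so it does not remove it. Note that the paper's own proof is exposed to the same point when it asserts that $\mathbf{T}$ is Fréchet differentiable ``by composition'', since that implicitly treats $v\mapsto\calU(v)$ on $\mathcal{L}^2$ as Fréchet differentiable; so your attempt has surfaced a genuine subtlety rather than fumbled a routine step. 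The clean fix is to state the result as G\^ateaux differentiability (which is all that the first-order condition~\eqref{eq:gateaux_derivative} actually uses): fixing $h$ and letting a scalar $s\to0$ with $\Delta_s=\mathbf{R}(u+sh)-\mathbf{R}(u)$, the dominating function $2L\sqrt{C_G}(1+C_R)\norm{h(\omega)}_{L^2}$ is independent of $s$, and your dominated-convergence argument then goes through verbatim. Alternatively, if Fr\'echet differentiability is wanted, one can restrict to finite $\Omega$, where $\norm{h}_{\mathcal{L}^2}\to0$ forces $\norm{h(\omega)}_{L^2}\to0$ for every $\omega$ and the concentration phenomenon cannot occur.
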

\begin{proof}
    See Appendix~\ref{ss:proof_lemma_differentiability}.
\end{proof}

Therefore Lemma~\ref{lem:frechet:R} ensures that $\mathcal{J}$ is also Fr\'echet differentiable on $\mathcal{L}^2$. Similar computations to~\cite[Lemma 4.2]{abi2025fredholm} ensure that the Gâteaux derivative of $\calJ$ is given by 
\begin{align} \notag
\nabla\mathcal{J}(u)
    = & \alpha - X_0 \big(\phi(T- .) + \varrho\big) - \gamma u - (\mathbf{H} + \mathbf{G})\circ (\mathbf{I} - \mathbf{R})(u) \\ \label{eq:gateaux_derivative}
    & - \Big(\big(\mathbf{H} + \mathbf{G}\big) \circ \big(\mathbf{I} - D\mathbf{R}(u)\big)\Big)^* u - (\mathbf{H}_{\phi, \varrho} + \mathbf{H}^*_{\phi, \varrho})u,
\end{align}
where $\mathbf{H}^*_{\phi, \varrho}$ is defined in~\eqref{eq:def:H}. We then directly get the following result.

\begin{theorem}
    Any local extrema $u$ of $\mathcal{J}$ must satisfy
    \begin{equation*}
        \alpha - X_0 \big(\phi(T- .) + \varrho\big) - \gamma u - (\mathbf{H} + \mathbf{G}) \circ (\mathbf{I} - \mathbf{R})(u)- \Big(\big(\mathbf{H} + \mathbf{G}\big) \circ \big(\mathbf{I} - D\mathbf{R}(u)\big)\Big)^* u - (\mathbf{H}_{\phi, \varrho} + \mathbf{H^*}_{\phi, \varrho} )u = 0.
    \end{equation*}

    In other words, if $u$ is a local extrema of $\calJ$, then it must satisfy the following nonlinear stochastic Fredholm system equation
    \begin{equation} \label{eq:foc_nonlinear_fredholm_resistance}
    \begin{cases}
        \gamma u + (\mathbf{H} + \mathbf{G} ) u + (\mathbf{H}_{\phi,\varrho}
    + \mathbf{H}_{\phi,\varrho}^{*})u - \mathbf{A}(u) = \alpha - X_0\big(\phi(T-.) + \varrho\big),\\
        r^{u} = \mathbf{R}(u),
    \end{cases}
    \end{equation}
    where we introduce the following nonlinear operator $\mathbf{A}: \mathcal{L}^{2} \to \mathcal{L}^{2}$ such that
    \begin{equation} \label{eq:A_nonlinear_operator}
        \mathbf{A}(u) := (\mathbf{H} + \mathbf{G} ) \circ \mathbf{R}(u) - \Big( \big( \mathbf{I} + ( \mathbf{M}^u \circ \mathbf{G} )^{*} \big)^{-1} \circ \big(\mathbf{H} + \mathbf{G} \big)^{*} \Big)u, \quad u \in \mathcal{L}^{2},
    \end{equation}
    with $\mathbf{M}^u$ the multiplication operator defined by
    \begin{equation*} 
        \mathbf{M}^u v_{t} := \calU' \Big(\mathbf{G}\big(u-\mathbf{R}(u)\big)_{t}\Big) v_{t}, \quad t \in [0,T], \quad v,u \in \mathcal{L}^{2}.
    \end{equation*}
\end{theorem}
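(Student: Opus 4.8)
The plan is to deduce the statement directly from the Fréchet differentiability of $\mathcal{J}$ together with the explicit expression for its Gâteaux derivative in~\eqref{eq:gateaux_derivative}. Since $\mathcal{J}$ is Fréchet differentiable on $\mathcal{L}^2$ (a consequence of Lemma~\ref{lem:frechet:R} applied to the composition defining $\mathcal{J}$), any local extremum $u$ must be a critical point, i.e.\ $\nabla\mathcal{J}(u) = 0$. Substituting the formula~\eqref{eq:gateaux_derivative} for $\nabla\mathcal{J}(u)$ and rearranging immediately gives the first displayed equation in the statement. So the only real content is to show that this equation is equivalent to the system~\eqref{eq:foc_nonlinear_fredholm_resistance}, which amounts to rewriting the adjoint term $\big((\mathbf{H}+\mathbf{G})\circ(\mathbf{I}-D\mathbf{R}(u))\big)^*u$ in terms of the operator $\mathbf{A}$ from~\eqref{eq:A_nonlinear_operator}.

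First I would move the terms $(\mathbf{H}+\mathbf{G})\circ(\mathbf{I}-\mathbf{R})(u)$ and the adjoint term to the left-hand side, so that the left side collects $\gamma u$, $(\mathbf{H}+\mathbf{G})u$, the risk-penalty term $(\mathbf{H}_{\phi,\varrho}+\mathbf{H}_{\phi,\varrho}^*)u$, and then what remains, namely $-(\mathbf{H}+\mathbf{G})\circ\mathbf{R}(u) + \big((\mathbf{H}+\mathbf{G})\circ(\mathbf{I}-D\mathbf{R}(u))\big)^*u$, which I claim equals $\mathbf{A}(u)$. Expanding the adjoint, $\big((\mathbf{H}+\mathbf{G})\circ(\mathbf{I}-D\mathbf{R}(u))\big)^* = (\mathbf{I}-D\mathbf{R}(u))^*\circ(\mathbf{H}+\mathbf{G})^*$, and invoking the second identity of Lemma~\ref{lem:frechet:R}, namely $\mathbf{I}-D\mathbf{R}(u) = \big(\mathbf{I}+\mathbf{M}^u\circ\mathbf{G}\big)^{-1}$ with $\mathbf{M}^u$ the multiplication operator by $\calU'(\mathbf{G}(u-\mathbf{R}(u)))$. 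Taking adjoints, $(\mathbf{I}-D\mathbf{R}(u))^* = \big((\mathbf{I}+\mathbf{M}^u\circ\mathbf{G})^{-1}\big)^* = \big((\mathbf{I}+\mathbf{M}^u\circ\mathbf{G})^*\big)^{-1} = \big(\mathbf{I}+(\mathbf{M}^u\circ\mathbf{G})^*\big)^{-1}$, where I use that the multiplication operator $\mathbf{M}^u$ is self-adjoint (since $\calU'$ is real-valued) and that taking inverse and adjoint commute for bounded invertible operators. This gives exactly the second summand of $\mathbf{A}(u)$ with a minus sign once moved to the left, while $-(\mathbf{H}+\mathbf{G})\circ\mathbf{R}(u)$ is the first summand; hence the left side is $\gamma u + (\mathbf{H}+\mathbf{G})u + (\mathbf{H}_{\phi,\varrho}+\mathbf{H}_{\phi,\varrho}^*)u - \mathbf{A}(u)$, which is~\eqref{eq:foc_nonlinear_fredholm_resistance}. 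The second line $r^u=\mathbf{R}(u)$ is just the definition of $\mathbf{R}$.

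The main obstacle — though it is minor — is justifying the operator-algebra manipulations with the inverse: one must check that $\mathbf{I}+\mathbf{M}^u\circ\mathbf{G}$ is indeed boundedly invertible on $\mathcal{L}^2$ so that the formulas make sense (this follows from $\calU$ being $L$-Lipschitz, hence $\mathbf{M}^u$ bounded with norm at most $L$, combined with the Volterra structure of $\mathbf{G}$: a standard Neumann/Gronwall argument shows $\mathbf{I}+\mathbf{M}^u\circ\mathbf{G}$ is invertible, exactly as in the proof of Lemma~\ref{lem:well_posed}), and that $(\mathbf{T}^{-1})^* = (\mathbf{T}^*)^{-1}$ for such $\mathbf{T}$, which is elementary. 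A second point worth stating carefully is that $\calU'$ exists and is bounded (a.e.) so that $\mathbf{M}^u$ is a well-defined bounded multiplication operator — here one should note that Lemma~\ref{lem:frechet:R} is invoked under the assumption that $\calU$ is differentiable, and that for the general statement one works with the a.e.\ derivative of the Lipschitz function $\calU$, which is enough for all adjoint identities to hold in $\mathcal{L}^2$. Everything else is direct substitution.
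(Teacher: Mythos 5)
Your proof is correct and follows essentially the same route as the paper, which itself simply states that the Gâteaux derivative~\eqref{eq:gateaux_derivative} must vanish at any local extremum and then ``directly'' obtains the system~\eqref{eq:foc_nonlinear_fredholm_resistance}; you merely make explicit the adjoint algebra (using $\mathbf{I}-D\mathbf{R}(u)=(\mathbf{I}+\mathbf{M}^u\circ\mathbf{G})^{-1}$ from Lemma~\ref{lem:frechet:R}, $(\mathbf{T}^{-1})^*=(\mathbf{T}^*)^{-1}$, and linearity of the adjoint) that the paper leaves implicit. One small remark: the parenthetical invoking self-adjointness of $\mathbf{M}^u$ is unnecessary for the chain of identities you write, since $(\mathbf{I}+\mathbf{M}^u\circ\mathbf{G})^*=\mathbf{I}+(\mathbf{M}^u\circ\mathbf{G})^*$ follows from linearity of the adjoint alone.
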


We now establish the existence of an optimal control.

\begin{theorem}
\label{thm:existence}

Suppose that Assumptions~\ref{assumption:G},~\ref{assumption:H} and~\ref{assumption:U} hold and suppose in addition that $\Omega$ is countable or finite and that $\kappa_{\infty} + \gamma > 0$, then there exists a global maximizer $\widehat{u} \in \mathcal{L}^2$ of the functional $\mathcal{J}$.
\end{theorem}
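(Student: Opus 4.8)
I would use the direct method of the calculus of variations applied to $-\mathcal{J}$ on the Hilbert space $\mathcal{L}^2$: prove that $-\mathcal{J}$ is coercive and sequentially weakly lower semicontinuous, so that a minimizing sequence of $-\mathcal{J}$ is bounded, admits a weakly convergent subsequence, and its limit attains $\inf_{\mathcal{L}^2}(-\mathcal{J}) = -\sup_{\mathcal{L}^2}\mathcal{J}$. I would work from the operator form~\eqref{eq:operator:J}, writing $\mathcal{J}(u) = \langle u,\alpha\rangle - \tfrac{\gamma}{2}\|u\|^2 - \langle u, (\mathbf{H}+\mathbf{G})\circ(\mathbf{I}-\mathbf{R})u\rangle - \tfrac{\phi}{2}\|\mathbf{X}u\|^2 - \tfrac{\varrho}{2}\E[(\mathbf{X}u)_T^2] + X_0\E[S_T]$, and treat the four genuine terms separately.

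\textbf{Coercivity.} By Lemma~\ref{lem:well_posed}, $\mathbf{R}$ is Lipschitz with $\mathbf{R}(0)=0$, hence $\|\mathbf{R}(u)\|\le L_R\|u\|$; by Assumptions~\ref{assumption:G}--\ref{assumption:H}, $\mathbf{H}$ and the symmetric part of $\mathbf{G}$ are positive semidefinite (the latter because a completely monotone kernel is positive definite). Decomposing along $v:=u-r^u=(\mathbf{I}-\mathbf{R})u$, one has $\langle u, (\mathbf{H}+\mathbf{G})v\rangle = \langle v, (\mathbf{H}+\mathbf{G})v\rangle + \langle r^u, (\mathbf{H}+\mathbf{G})v\rangle \ge -\,C\|u\|\,\|r^u\|$, which by itself is not enough; the point is to exploit $\kappa_\infty+\gamma>0$. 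One route is to compare with the linear problem: replacing $\calU$ by $x\mapsto\delta x$ gives, by Theorem~\ref{thm:linear_case} and the positive semidefiniteness of $(\mathbf{H}+\mathbf{G})\circ(\mathbf{I}+\delta\mathbf{G})^{-1}$, a coercive quadratic lower bound, while Assumption~\ref{assumption:U} gives $\calU(x)-\delta x = o(|x|)$, so the induced perturbation of the resistance contributes only $o(\|u\|^2)+O(\|u\|)$. This yields $-\mathcal{J}(u)\ge c\|u\|^2 - C(1+\|u\|)$ for some $c>0$; in particular $-\mathcal{J}$ is bounded below and any minimizing sequence $(u_n)$ is bounded in $\mathcal{L}^2$. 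Since $\mathcal{L}^2$ is a closed subspace of $L^2(\dd t\otimes\P)$, up to a subsequence $u_n\rightharpoonup\widehat u$ with $\widehat u\in\mathcal{L}^2$.

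\textbf{Weak lower semicontinuity (the main obstacle).} The term $-\langle u_n,\alpha\rangle$ converges, and $\tfrac{\gamma}{2}\|u_n\|^2$, $\tfrac{\phi}{2}\|\mathbf{X}u_n\|^2$, $\tfrac{\varrho}{2}\E[(\mathbf{X}u_n)_T^2]$ are weakly lower semicontinuous since $\mathbf{X}$ is affine with bounded linear part (hence weakly continuous) and Hilbert norms are weakly lower semicontinuous. The real difficulty is $\langle u_n,(\mathbf{H}+\mathbf{G})(\mathbf{I}-\mathbf{R})u_n\rangle$, because $\mathbf{R}$ is nonlinear, so weak convergence of $u_n$ gives no a priori control on $\mathbf{R}(u_n)$. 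I would use that by~\eqref{eq:r_u} the fixed point $r^u$ depends on $u$ only through $\mathbf{G}u$, i.e. $\mathbf{R}=\Psi\circ\mathbf{G}$ for a globally Lipschitz $\Psi$ (Lemma~\ref{lem:well_posed}); since $G\in L^2([0,T])$, $\mathbf{G}$ is Hilbert--Schmidt in the time variable, hence compact on $L^2([0,T])$. Using that $\Omega$ is countable or finite, a diagonal extraction over $\omega$ together with this compactness and the uniform $\mathcal{L}^2$-bound upgrades $u_n\rightharpoonup\widehat u$ to $\mathbf{G}u_n\to\mathbf{G}\widehat u$ strongly in $\mathcal{L}^2$, whence $r^{u_n}=\Psi(\mathbf{G}u_n)\to\Psi(\mathbf{G}\widehat u)=r^{\widehat u}$ strongly. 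Setting $w_n:=u_n-r^{u_n}\rightharpoonup\widehat u-r^{\widehat u}=:\widehat w$ and splitting $\langle u_n,(\mathbf{H}+\mathbf{G})w_n\rangle = \langle w_n,(\mathbf{H}+\mathbf{G})w_n\rangle + \langle r^{u_n},(\mathbf{H}+\mathbf{G})w_n\rangle$, the first term equals $\|\mathbf{Q}^{1/2}w_n\|^2$ with $\mathbf{Q}$ the positive semidefinite symmetric part of $\mathbf{H}+\mathbf{G}$, hence is weakly lower semicontinuous, while the second converges by pairing the strong convergence $r^{u_n}\to r^{\widehat u}$ with the weak convergence $(\mathbf{H}+\mathbf{G})w_n\rightharpoonup(\mathbf{H}+\mathbf{G})\widehat w$. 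This gives $\liminf_n\langle u_n,(\mathbf{H}+\mathbf{G})(\mathbf{I}-\mathbf{R})u_n\rangle\ge\langle\widehat u,(\mathbf{H}+\mathbf{G})(\mathbf{I}-\mathbf{R})\widehat u\rangle$, i.e. weak lower semicontinuity of the resistance term, and hence of $-\mathcal{J}$.

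\textbf{Conclusion.} Combining the above, $\liminf_n(-\mathcal{J})(u_n)\ge(-\mathcal{J})(\widehat u)$, so $\widehat u$ minimizes $-\mathcal{J}$, hence maximizes $\mathcal{J}$ over $\mathcal{L}^2$. The step I expect to be the main obstacle is the passage $u_n\rightharpoonup\widehat u\Rightarrow\mathbf{G}u_n\to\mathbf{G}\widehat u$ strongly (equivalently $r^{u_n}\to r^{\widehat u}$): this is precisely where the countability of $\Omega$ is needed, since on the full $\mathcal{L}^2$ the operator $\mathbf{G}$ is not compact, and where the nonlinearity of $\calU$ blocks the naive ``bounded operators preserve weak limits'' argument. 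A secondary subtlety is making the coercivity bound uniform, for which the positive semidefiniteness of $\mathbf{H}+\mathbf{G}$ and the asymptotic linearity of $\calU$ (reduction to Theorem~\ref{thm:linear_case}) are the essential inputs.
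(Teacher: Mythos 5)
Your overall strategy — the direct method, coercivity via the decomposition $\calU(x) = \delta x + o(|x|)$, and weak lower semicontinuity — is exactly the route the paper follows (citing \cite[Theorem 1.2, Chapter 1]{struwe2000variational} and establishing Lemmas~\ref{lemma:wslc} and~\ref{lemma:coercive}). The coercivity sketch is on the right track: the paper also subtracts the asymptotic slope $\delta$, writes $\calU = \wt\calU + \delta\cdot$, and uses the positive semidefiniteness of $(\mathbf{H}+\mathbf{G})\circ(\mathbf{I}+\delta\mathbf{G})^{-1}$ together with the sublinear bound $|\wt\calU(x)|\le c + \epsilon|x|$, working in the transformed variable $v=(\mathbf{I}+\mathbf{B})^{-1}u$ whose norm is comparable to $\norm{u}$ (Lemma~\ref{lemma:IRIA}).

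However, there is a genuine gap in your weak lower semicontinuity argument. You claim that countability of $\Omega$, compactness of $\mathbf{G}$ on $L^2([0,T])$, and the uniform $\mathcal{L}^2$-bound together upgrade $u_n\rightharpoonup\widehat u$ to $\mathbf{G}u_n\to\mathbf{G}\widehat u$ \emph{strongly in $\mathcal{L}^2$}, and then deduce $r^{u_n}\to r^{\widehat u}$ strongly. This is true when $\Omega$ is finite (finite sum of pathwise strong limits), but fails when $\Omega$ is countably infinite — which the theorem allows. Take $\Omega = \{\omega_j\}_{j\ge 1}$ with $p_j := \P(\{\omega_j\})>0$ and fix $g\in L^2([0,T])$ with $\mathbf{G}g\ne 0$; set $u_n(\omega_j,\cdot) := \mathbbm{1}_{\{j=n\}}\, p_n^{-1/2}\, g$. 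Then $u_n\rightharpoonup 0$ in $\mathcal{L}^2$ and $\norm{u_n}$ is constant, yet $\norm{\mathbf{G}u_n}_{\mathcal{L}^2} = \norm{\mathbf{G}g}_{L^2([0,T])}$ does not vanish. Pathwise weak convergence (which is what countability of $\Omega$ really gives you) plus compactness only yields $\mathbf{G}u_n(\omega,\cdot)\to\mathbf{G}\widehat u(\omega,\cdot)$ strongly in $L^2([0,T])$ for each fixed $\omega$; without uniform integrability this cannot be upgraded to strong $\mathcal{L}^2$-convergence, so your pairing of ``strong $r^{u_n}$'' against ``weak $(\mathbf{H}+\mathbf{G})w_n$'' breaks down. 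The paper circumvents this: in Lemma~\ref{lemma:wslc} it proves the pathwise ($\omega$-wise) convergence of the full quadratic form in $L^2([0,T])$, uses the uniform lower bound from Lemma~\ref{lemma:coercive:1} to make Fatou's lemma applicable, and concludes $\liminf_n \langle u_n,(\mathbf{H}+\mathbf{G})(\mathbf{I}-\mathbf{R})u_n\rangle \geq \langle\widehat u,(\mathbf{H}+\mathbf{G})(\mathbf{I}-\mathbf{R})\widehat u\rangle$ without ever claiming strong $\mathcal{L}^2$-convergence of $\mathbf{G}u_n$. You should replace the ``upgrade to strong convergence'' step by this pathwise-plus-Fatou argument; the rest of your decomposition and the coercivity outline can then be made rigorous.
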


\section{Numerical scheme and application to optimal round-trips}\label{sec:numerical_applications}

In this section, we introduce an iterative numerical scheme for constructing 
\(M \in \mathbb{N}^{*}\) sample trajectories of a discrete-time approximation of the critical points \(\hat{u}, \hat{r}^{u} \in \mathcal{L}^{2}\) satisfying the First-Order Condition (FOC)~\eqref{eq:foc_nonlinear_fredholm_resistance}. We establish a theoretical convergence result for this scheme and illustrate its output through an example of optimal round-trip strategies in the presence of stochastic ``buy'' signals with various decay rates, as well as through a qualitative analysis of the influence of the resistance function’s convexity on optimal trading. We defer to Appendix~\ref{s:complement_numerical_results} the proof of the convergence result, together with additional illustrations of the qualitative effects of impact decay and permanent impact intensity on the resulting strategies.\\

In what follows, we fix a uniform time-grid $\mathbb{T}_{N} := \{ \frac{iT}{N}, \; i \in \{0, \ldots, N\} \}$, with $N \in \mathbb{N}^{*}$, and we use the notation $\mathbf{A}(u) = \mathbf{A}(u, r^{u})$ in order to explicitly refer to the dependence of $\mathbf{A}$ from~\eqref{eq:A_nonlinear_operator} on both $u$ and $r^{u}$, which will be constructed separately in practice.

\subsection{Approximation of operators} \label{ss:approx_operators}

\paragraph{Nyström approximation of Volterra integral operators.} Given a deterministic Volterra kernel $G : \mathbb{R} \to \mathbb{R}$, the integral operator $\mathbf{G} : \mathcal{L}^{2} \to \mathcal{L}^{2}$ and its dual operator $\mathbf{G}^{*}$ are numerically approximated on $\mathbb{T}_{N}$ by left-rectangles, while integrating the kernels in the same spirit as~\cite{nystrom1930praktische, abi2022optimal, abi2024trading, abi2025fredholm} such that
\begin{equation*}
    ( \mathbf{G}u )_{t_{i}} \approx \sum_{j=0}^{i-1} L_G(i,j) u_{t_{j}}, \quad ( \mathbf{G}^{*}u )_{t_{i}} \approx \sum_{j=0}^{i-1} M_G(i,j) \E_{t_{i}} u_{t_{j}}, \quad t_{i} \in \mathbb{T}_{N}, \quad u \in \mathcal{L}^{2},
\end{equation*}
where we define
\begin{equation*}
    L_{G} := \bigg( \int_{t_{j}}^{t_{j+1}} G(t_{i},s) \dd s \mathds{1}_{\{j \leq i\}} \bigg)_{i,j \in \{0, \ldots, N-1\}}, \quad M_{G} := \bigg( \int_{t_{j}}^{t_{j+1}} G(s,t_{i}) \dd s \mathds{1}_{\{j \geq i\}} \bigg)_{i,j \in \{0, \ldots, N-1\}}.
\end{equation*}

\paragraph{Numerical approximation of the nonlinear operator $\mathbf{A}$ from~\eqref{eq:A_nonlinear_operator}.}
Given $u \in \mathcal{L}^{2}$, let $r^{u} \in \mathcal{L}^{2}$ be the unique solution to the fixed-point equation in~\eqref{eq:foc_nonlinear_fredholm_resistance}. Then, the first term of $\mathbf{A}(u)$ in~\eqref{eq:A_nonlinear_operator}, i.e.,
\begin{equation*}
    (\mathbf{H} + \mathbf{G} ) r^{u}
\end{equation*}
is linear in $r^{u}$ and can be readily approximated via the Nyström technique, while it remains to estimate the nonlinear term given by
\begin{equation} \label{eq:non_linear_adjoint_approx}
    \Big( \big( \mathbf{I} + ( \mathbf{M}^u \circ \mathbf{G} )^{*} \big)^{-1} \circ (\mathbf{H} + \mathbf{G} )^{*} \Big)(u).
\end{equation}
To achieve this, we set
\begin{align} \notag
    & f := \big( \mathbf{I} + ( \mathbf{M}^u \circ \mathbf{G} )^{*} \big)^{-1} \circ (\mathbf{H} + \mathbf{G} )^{*} u  \\ \notag
    \iff & f + ( \mathbf{M}^u \circ \mathbf{G} )^{*} f = (\mathbf{H} + \mathbf{G} )^{*} u \\
    \label{eq:backward_fredholm}
    \iff & f_{t} + \int_{t}^{T} G(s,t) \E_{t} \Big[ \calU' \Big( \big(\mathbf{G} ( u - r^{u} ) \big)_{s} \Big) f_{s} \Big] \dd s = \int_{t}^{T} \big(H(s,t) + G(s,t)\big) \E_{t} u_{s} \dd s, \quad t \in [0,T]
\end{align}
where the explicit expressions of the dual operators are obtained by stochastic Fubini and the tower property, and are well-defined since $\mathbf{M}^u \circ \mathbf{G}$ and $(\mathbf{H} + \mathbf{G} )$ are both linear and bounded (since $\calU'$ is bounded), see \cite[Chapter 6, Section 2]{dunford1988linear}. Given $u \in \mathcal{L}^{2}$ and observing that $f_{T} = 0$, we can solve~\eqref{eq:backward_fredholm} by a backward iterative scheme on the subdivision $\mathbb{T}_{N}$ as follows:
\begin{itemize}
    \item $f_{t_{N}} = 0$,
    \item for $p \in \{ N-1, \cdots, 0 \}$, since $f_{t_{p}}$ and $u_{t_{p}}$ are $\mathcal{F}_{t_{p}}$-measurable, then a left-rectangle approximation of the integrals in~\eqref{eq:backward_fredholm} yields
    \begin{align} \notag
        f_{t_{p}} = \big(1 + M_{G}(p,p) w_{t_{p}}^{u} \big)^{-1} & \bigg( M_{G+H}(p,p) u_{t_{p}} + \sum_{k = p+1}^{N-1} M_{G+H}(p,k) \E_{t_{p}} u_{t_{k}} . \\ \label{eq:numerical_solution_f}
        & \quad . - \sum_{k = p+1}^{N-1} M_{G}(p,k) \E_{t_{p}} [ w_{t_{k}}^{u} f_{t_{k}}] \bigg),
    \end{align}
    where we defined
    \begin{equation*}
        w^{u} := \calU' \big(\mathbf{G} ( u - r^{u} ) \big),
    \end{equation*}
    and the conditional expectations $( \E_{t_{p}} [ w_{t_{k}}^{u} f_{t_{k}}] )_{N-1 \geq k > p \geq 0}, \; (\E_{t_{p}} u_{t_{k}})_{N-1 \geq k > p \geq 0}$ can be estimated by least-squares Monte Carlo similarly as in~\cite[Section 3.3]{abi2024trading} or~\cite[Section 3.1]{abi2025fredholm}.
\end{itemize}

\begin{rem}[Sanity check of the approximation of $\mathbf{A}$]
    The supremum $L^{2}$ error $E^{bf}$ associated to the backward Fredholm equation~\eqref{eq:backward_fredholm} is given by 
\begin{equation*}
    E^{bf} ( f ) := \sup_{\omega \in \Omega} \bigg(\int_0^T \big| f_{t} + \big( ( \mathbf{M}^u \circ \mathbf{G} )^{*} f \big)_{t} - \big( (\mathbf{H} + \mathbf{G} )^{*} u \big)_{t} \big|^2 \dd t\bigg] \bigg)(\omega),
\end{equation*}
can be estimated numerically on a uniform time grid with step $\Delta$ as
\begin{equation} \label{eq:max_error_backward_scheme}
    E_{N,M}^{bf} ( f ) := \sup_{m \in \{1, \ldots, M\}} E_{N,M}^{bf} ( f ) (\omega_{m})
\end{equation}
where
\begin{align*}
    E_{N,M}^{bf} ( f ) (\omega_{m}) & := \Delta \sum_{p=0}^{N-1} \bigg| f_{t_{p}}(\omega_{m}) + M_{G}(p,p) w_{t_{p}}^{u}(\omega_{m}) f_{t_{p}}(\omega_{m}) \\
    & + \sum_{k = p+1}^{N-1} M_{G}(p,k) \E_{t_{p}} [ w_{t_{k}}^{u} f_{t_{k}}](\omega_{m}) 
    - M_{G+H}(p,p) u_{t_{p}}(\omega_{m}) \\
    & - \sum_{k = p+1}^{N-1} M_{G+H}(p,k) \E_{t_{p}} u_{t_{k}}(\omega_{m}) \bigg|^2.
\end{align*}
In practice $E_{N,M}^{bf} ( f )$ from~\eqref{eq:max_error_backward_scheme} is way below machine precision, see the right plot in Figure~\ref{F:convergence_trading_buy_signals_various_decays}.
\end{rem}

\subsection{Iterative numerical scheme} 

\paragraph{Criterion of convergence.} For any given $u, r^{u} \in \mathcal{L}^{2}$, we define the joint supremum $L^{2}$ errors of the FOC~\eqref{eq:foc_nonlinear_fredholm_resistance} by
\begin{align*}
    E^{1} ( u, r^{u} )  &:= \sup_{\omega \in \Omega} \bigg( \int_{0}^{T} \Big|\gamma u_{t} + \big( (\mathbf{H} + \mathbf{G} + \mathbf{H}_{\phi,\varrho}
    + \mathbf{H}_{\phi,\varrho}^{*} ) u \big)_{t} - \big(\mathbf{A}(u, r^{u}) \big)_{t} \\
    & \qquad -\Big( \alpha_{t} - X_{0} \big( \phi(T-t) + \varrho \big) \Big) \Big|^{2} \dd t \bigg)(\omega), \\
    E^{2} ( u, r^{u} ) & := \sup_{\omega \in \Omega} \bigg(\int_{0}^{T} \Big|r_{t}^{u} - \calU \Big( \big(\mathbf{G} ( u - r^{u} ) \big)_{t} \Big) \Big|^{2} \dd t \bigg)(\omega),
\end{align*}
which, over a $N$-step time grid and $M$ sample trajectories, are approximated numerically by
\begin{equation} \label{eq:max_error_foc}
    E_{N,M}^{1} ( u^{N}, r^{N} ) := \sup_{m \in \{1, \ldots, M\}} E_{N,M}^{1} ( u^{N}, r^{N} ) (\omega_{m}),
\end{equation}
where
\begin{align*}
    E_{N,M}^{1} ( u^{N}, r^{N} ) (\omega_{m}) := & \Delta \sum_{i=0}^{N-1} \bigg| \gamma u_{t_{i}}^{N}(\omega_{m}) + \sum_{j=0}^{i-1} L_{G+H+H_{\phi, \varrho}}(i,j) u_{t_{j}}^{N}(\omega_{m}) + M_{H_{\phi, \varrho}}(i,i) u_{t_{i}}^{N} \\
    & \qquad \quad + \sum_{j = i+1}^{N-1} M_{H_{\phi, \varrho}}(i,j) \E_{t_{i}} u_{t_{j}}^{N} (\omega_{m}) - \sum_{j=0}^{i-1} L_{G+H}(i,j) r_{t_{j}}^{N}(\omega_{m}) + f_{t_{i}}^{N}(\omega_{m}) \\
    & \qquad \quad - \alpha_{t_{i}}(\omega_{m}) + X_{0} ( \phi(T-t_{i}) + \varrho ) \bigg|^2,
\end{align*}
with $f_{t_{i}}^{N}$ is given by~\eqref{eq:numerical_solution_f}, and
\begin{equation} \label{eq:max_error_resistance}
    E_{N,M}^{2} ( u^{N}, r^{N} ) := \sup_{m \in \{1, \ldots, M\}} E_{N,M}^{2} ( u^{N}, r^{N} ) (\omega_{m}),
\end{equation}
where
\begin{equation*}
    E_{N,M}^{2} ( u^{N}, r^{N} ) (\omega_{m}) := \Delta \sum_{i=0}^{N-1} \bigg| r_{t_{i}}^{N}(\omega_{m}) - \calU \bigg( \sum_{j=0}^{i-1} L_{G}(i,j) \big( u_{t_{j}}^{N}(\omega_{m}) - r_{t_{j}}^{N}(\omega_{m}) \big) \bigg) \bigg|^2.
\end{equation*}
In what follows, we drop the upper-script $N$ from $u^{N}, r^{N}$ for ease of reading. We then aim to construct a sequence of such discrete-time approximations of $( u, r^{u})$ which we denote by $( u^{[n]}, \; r^{[n]} )_{n \geq 0} = ( u^{[n]}(N,M), \; r^{[n]}(N,M) )_{n \geq 0}$ such that 
\begin{equation*}
    E_{N,M}^{1,2} \big( u^{[n]}(N,M), \; r^{[n]}(N,M) \big) \underset{n \to \infty}{\longrightarrow} 0.
\end{equation*}

\paragraph{Scheme definition.}

Fix $\epsilon > 0$.
\begin{itemize}
    \item Initialize $( u^{[0]}, r^{[0]} ) \equiv 0$.
    \item While $E_{N,M}^{1} ( u^{[n]}, r^{[n]} ) > \epsilon$, do:
    \begin{itemize}
        \item Update $u^{[n]}$ by solving~\eqref{eq:foc_nonlinear_fredholm_resistance} while computing the nonlinear operator $\mathbf{A}$ using $u^{[n-1]}, \; r^{[n-1]}$:
        \begin{equation} \label{eq:approximate_FOC_scheme}
            \gamma u^{[n]} + (\mathbf{H} + \mathbf{G} + \mathbf{H_{\phi, \varrho}} + \mathbf{H_{\phi, \varrho}^{*}} ) u^{[n]} = \alpha - X_{0} \big( \phi(T-t_{i}) + \varrho \big) + \mathbf{A} (u^{[n-1]}, r^{[n-1]} ),
        \end{equation}
        which can be numerically solved as detailed in~\cite[Section 3.2]{abi2024trading}.
    
        \item Update the market resistance $r^{[n]} = r^{[n,\infty]}$ as the limit of the sequence $( r^{[n,p]} )_{p \geq 0}$ solving the fixed-point equation from~\eqref{eq:foc_nonlinear_fredholm_resistance} via Picard iterations over $p \geq 0$ until $E_{N,M}^{2} ( u^{[n]}, r^{[n,p]} )$ is smaller than $\epsilon$:
        \begin{equation} \label{eq:update_resistance_scheme}
        \begin{cases}
            r^{[n,p+1]} = \calU \big( \mathbf{G} ( u^{[n]} - r^{[n,p]} ) \big), \\
            r^{[n,0]} := r^{[n-1,0]}.
        \end{cases}
        \end{equation}

        \item Increment $n$ to $n+1$.
    \end{itemize}
\end{itemize}

\paragraph{Theoretical convergence of the iterative scheme.} 
In the following, we state a result on the theoretical rate of convergence of the iterative scheme if $T$ is sufficiently small or, equivalently, if $\gamma$ is sufficiently large, in the same spirit as~\cite[Proposition 2.14]{abi2025fredholm}.
\begin{theorem}[Exponential convergence of the iterative scheme] \label{T:scheme_convergence}
    Assume the resistance function $\calU$ is Lipschitz-continuous with constant $L > 0$, with bounded derivative $\calU' < C$, for some positive constant $C$. Fix two admissible kernels $H, G$, as well as the trading horizon $T > 0$ and the slippage costs parameter $\gamma > 0$ such that
    \begin{align} \label{eq:condition_1_for_cvge}
        1 & > \sqrt{T C_{G}}\max(L,C), \\ \label{eq:condition_2_for_cvge}
        \gamma & > \tilde{C} := \sqrt{T C_{H+G}} \bigg( L\frac{\sqrt{T C_{G}}}{1-L\sqrt{T C_{G}}} + \frac{1}{1-C\sqrt{T C_{G}}} \bigg),
    \end{align}
    where $C_{H+G}$ and $C_{G}$ are respectively the constants associated to the kernels $H+G$ and $G$ defined by~\eqref{eq:constant_norm_of_G}. Assume the existence of $( \hat{u}, \hat{r} )$ satisfying the first-order-condition~\eqref{eq:foc_nonlinear_fredholm_resistance}, and denote by $( u^{[n]}, r^{[n]} )_{n \geq 0}$ the sequence of controls obtained by the iterative scheme~\eqref{eq:approximate_FOC_scheme}--\eqref{eq:update_resistance_scheme}, then
    $$\lim_{n\to\infty} u^{[n]} = \hat{u}\quad \text{in $\mathcal{L}^2$,}$$ 
    and the convergence rate is bounded by
    \begin{equation} \label{eq:convergence_rate}
      \norm{u^{[n]}-\hat{u}}  \le \bigg(\frac{\widetilde{C}}{\gamma}\bigg)^n\norm{\hat{u}}, \quad n \in \mathbb{N}.
    \end{equation}
    As a consequence, $( \hat{u}, \hat{r} )$ is the unique solution to the FOC~\eqref{eq:foc_nonlinear_fredholm_resistance}.
\end{theorem}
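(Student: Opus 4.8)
The plan is a Banach fixed-point argument: I would read the $u$-update of the scheme as a map $u^{[n]}=\Phi(u^{[n-1]})$ on $\calL^2$, show that $\Phi$ is a $\tfrac{\widetilde C}{\gamma}$-contraction under \eqref{eq:condition_1_for_cvge}--\eqref{eq:condition_2_for_cvge}, and identify its unique fixed point with the postulated $\widehat u$. The first step is to recast the scheme in operator form. Under \eqref{eq:condition_1_for_cvge} the map $r\mapsto\calU(\mathbf{G}(u^{[n]}-r))$ is a contraction with constant $L\sqrt{TC_G}<1$, so the Picard recursion \eqref{eq:update_resistance_scheme} converges and $r^{[n]}=\mathbf{R}(u^{[n]})$; hence $\mathbf{A}(u^{[n-1]},r^{[n-1]})=\mathbf{A}(u^{[n-1]})$ with $\mathbf{A}$ as in \eqref{eq:A_nonlinear_operator}. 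Writing $\mathbf{L}:=\gamma\mathbf{I}+\mathbf{H}+\mathbf{G}+\mathbf{H}_{\phi,\varrho}+\mathbf{H}_{\phi,\varrho}^{*}$ and $\beta:=\alpha-X_0(\phi(T-\cdot)+\varrho)$, the recursion \eqref{eq:approximate_FOC_scheme} reads $\mathbf{L}u^{[n]}=\beta+\mathbf{A}(u^{[n-1]})$, while the assumed solution of \eqref{eq:foc_nonlinear_fredholm_resistance} satisfies $\mathbf{L}\widehat u=\beta+\mathbf{A}(\widehat u)$; subtracting gives $\mathbf{L}(u^{[n]}-\widehat u)=\mathbf{A}(u^{[n-1]})-\mathbf{A}(\widehat u)$. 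Everything then reduces to a lower bound on $\mathbf{L}$ and a Lipschitz bound on $\mathbf{A}$.

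\textbf{Coercivity of $\mathbf{L}$.} Next I would show $\|\mathbf{L}^{-1}\|\le\gamma^{-1}$. Setting $V_t:=\int_0^t v_s\,\dd s$ and integrating by parts yields $\langle\mathbf{H}v,v\rangle=\tfrac{\kappa_\infty}{2}\E[V_T^2]\ge0$ and $\langle\mathbf{H}_{\phi,\varrho}v,v\rangle=\tfrac{\varrho}{2}\E[V_T^2]+\tfrac{\phi}{2}\E\!\int_0^T V_t^2\,\dd t\ge0$, while complete monotonicity of $G$ from Assumption~\ref{assumption:G} (hence positive semi-definiteness, i.e.\ the no-dynamic-arbitrage property) gives $\langle\mathbf{G}v,v\rangle=\tfrac12\langle(\mathbf{G}+\mathbf{G}^{*})v,v\rangle\ge0$. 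Thus $\langle\mathbf{L}v,v\rangle\ge\gamma\|v\|^2$, so by Lax--Milgram $\mathbf{L}$ is boundedly invertible with $\|\mathbf{L}^{-1}\|\le\gamma^{-1}$; in particular each $u^{[n]}$ is well defined, the only other ingredient being invertibility of $\mathbf{I}+(\mathbf{M}^{u^{[n-1]}}\circ\mathbf{G})^{*}$ inside $\mathbf{A}$, which holds by Neumann series since $\|(\mathbf{M}^{u}\circ\mathbf{G})^{*}\|\le C\sqrt{TC_G}<1$.

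\textbf{Lipschitz estimate for $\mathbf{A}$, and the main obstacle.} The core of the proof is the bound $\|\mathbf{A}(u)-\mathbf{A}(u')\|\le\widetilde C\|u-u'\|$ with $\widetilde C$ as in \eqref{eq:condition_2_for_cvge}. I would split $\mathbf{A}=\mathbf{A}_1-\mathbf{A}_2$ with $\mathbf{A}_1(u):=(\mathbf{H}+\mathbf{G})\mathbf{R}(u)$ and $\mathbf{A}_2(u):=\bigl(\mathbf{I}+(\mathbf{M}^u\circ\mathbf{G})^{*}\bigr)^{-1}(\mathbf{H}+\mathbf{G})^{*}u$. For $\mathbf{A}_1$, a standard contraction estimate on the resistance equation gives $\|\mathbf{R}(u)-\mathbf{R}(u')\|\le\tfrac{L\sqrt{TC_G}}{1-L\sqrt{TC_G}}\|u-u'\|$ (Lemma~\ref{lem:well_posed}), which combined with $\|(\mathbf{H}+\mathbf{G})w\|\le\sqrt{TC_{H+G}}\|w\|$ from \eqref{eq:estimate_on_admissible_G_operator} yields the first summand of $\widetilde C$. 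For $\mathbf{A}_2$, the Neumann bound $\|(\mathbf{I}+(\mathbf{M}^u\circ\mathbf{G})^{*})^{-1}\|\le(1-C\sqrt{TC_G})^{-1}$ and \eqref{eq:estimate_on_adjoint} control the ``frozen-symbol'' part $\bigl(\mathbf{I}+(\mathbf{M}^u\circ\mathbf{G})^{*}\bigr)^{-1}(\mathbf{H}+\mathbf{G})^{*}(u-u')$ by $\tfrac{\sqrt{TC_{H+G}}}{1-C\sqrt{TC_G}}\|u-u'\|$, the second summand of $\widetilde C$. The remaining piece stems from the resolvent identity
\[
\bigl(\mathbf{I}+(\mathbf{M}^u\circ\mathbf{G})^{*}\bigr)^{-1}-\bigl(\mathbf{I}+(\mathbf{M}^{u'}\circ\mathbf{G})^{*}\bigr)^{-1}=\bigl(\mathbf{I}+(\mathbf{M}^u\circ\mathbf{G})^{*}\bigr)^{-1}\bigl((\mathbf{M}^{u'}-\mathbf{M}^{u})\circ\mathbf{G}\bigr)^{*}\bigl(\mathbf{I}+(\mathbf{M}^{u'}\circ\mathbf{G})^{*}\bigr)^{-1},
\]
and this is the genuine obstacle: it requires bounding $\|\mathbf{M}^{u}-\mathbf{M}^{u'}\|$, i.e.\ the $\dd t\otimes\P$-essential supremum of $\bigl|\calU'(\mathbf{G}(u-\mathbf{R}(u)))-\calU'(\mathbf{G}(u'-\mathbf{R}(u')))\bigr|$, by a multiple of $\|u-u'\|$. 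The delicate point is precisely to turn an $\calL^2$ perturbation of the multiplication symbol into an operator-norm estimate without strengthening the topology; I expect this to be where the hypotheses on $\calU$, the Lipschitz continuity of $\mathbf{R}$, and the smallness encoded in \eqref{eq:condition_1_for_cvge} have to be used most carefully so that this contribution is absorbed, leaving the overall constant $\widetilde C$.

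\textbf{Conclusion.} Granting the two estimates, $\|u^{[n]}-\widehat u\|\le\|\mathbf{L}^{-1}\|\,\|\mathbf{A}(u^{[n-1]})-\mathbf{A}(\widehat u)\|\le\tfrac{\widetilde C}{\gamma}\|u^{[n-1]}-\widehat u\|$. Since $\widetilde C/\gamma<1$ by \eqref{eq:condition_2_for_cvge}, iterating from $u^{[0]}\equiv0$ gives $\|u^{[n]}-\widehat u\|\le(\widetilde C/\gamma)^n\|\widehat u\|$, which is \eqref{eq:convergence_rate}; moreover $r^{[n]}=\mathbf{R}(u^{[n]})\to\mathbf{R}(\widehat u)=\widehat r$ by Lipschitz continuity of $\mathbf{R}$. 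Finally, if $(\widehat u,\widehat r)$ and $(\widehat u',\widehat r')$ both solve \eqref{eq:foc_nonlinear_fredholm_resistance}, the same subtraction yields $\|\widehat u-\widehat u'\|\le\tfrac{\widetilde C}{\gamma}\|\widehat u-\widehat u'\|$, forcing $\widehat u=\widehat u'$ and hence $\widehat r=\mathbf{R}(\widehat u)=\mathbf{R}(\widehat u')=\widehat r'$, which gives uniqueness.
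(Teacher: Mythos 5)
Your overall strategy coincides with the paper's: subtract the FOC~\eqref{eq:foc_nonlinear_fredholm_resistance} from the scheme step~\eqref{eq:approximate_FOC_scheme}, use coercivity of $\gamma\mathbf{I}+\mathbf{H}+\mathbf{G}+\mathbf{H}_{\phi,\varrho}+\mathbf{H}_{\phi,\varrho}^{*}$ (pairing with $\hat u - u^{[n]}$ and discarding the positive semi-definite part) to extract a $1/\gamma$ prefactor, then decompose $\mathbf{A}$ into the $(\mathbf{H}+\mathbf{G})\circ\mathbf{R}$ part and the ``resolvent'' part $\bigl(\mathbf{I}+(\mathbf{M}^u\circ\mathbf{G})^{*}\bigr)^{-1}(\mathbf{H}+\mathbf{G})^{*}u$. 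The two summands of $\widetilde{C}$ you produce agree exactly with the paper's, and your resistance estimate $\norm{\mathbf{R}(u)-\mathbf{R}(u')}\le\tfrac{L\sqrt{TC_G}}{1-L\sqrt{TC_G}}\norm{u-u'}$ is precisely inequality~\eqref{eq:bound_norm_resistance_foc_scheme}. The one stylistic difference is in the resolvent piece: you invoke the resolvent identity, while the paper introduces $f^{[n-1]}$ and $\hat f$ via the auxiliary Fredholm equations~\eqref{eq:eq_f_n_minus_one}--\eqref{eq:eq_f_hat}, subtracts them, and pairs against $f^{[n-1]}-\hat f$. These are equivalent manipulations.

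Your proposal is, however, incomplete at precisely the point you flag yourself: the ``frozen-symbol'' bound controls only $\bigl(\mathbf{I}+(\mathbf{M}^u\circ\mathbf{G})^{*}\bigr)^{-1}(\mathbf{H}+\mathbf{G})^{*}(u-u')$, and you never show how to absorb the contribution of $\mathbf{M}^{u}-\mathbf{M}^{u'}$. This is a real missing idea, not just a technicality. Be aware that the paper's own written proof runs into the same term and does not resolve it: in the chain of bounds for $\norm{f^{[n-1]}-\hat f}^2$, the passage from
\begin{equation*}
\big\langle (\mathbf{M}^{u^{[n-1]}}\circ\mathbf{G})^{*}f^{[n-1]}-(\mathbf{M}^{\hat u}\circ\mathbf{G})^{*}\hat f,\;f^{[n-1]}-\hat f\big\rangle
\end{equation*}
to $C\,\norm{\mathbf{G}(\hat f-f^{[n-1]})}\,\norm{f^{[n-1]}-\hat f}$ silently discards the cross term $\langle ((\mathbf{M}^{u^{[n-1]}}-\mathbf{M}^{\hat u})\circ\mathbf{G})^{*}\hat f,\,f^{[n-1]}-\hat f\rangle$, treating the two multiplication symbols as if they were equal. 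Controlling $\norm{(\mathbf{M}^{u}-\mathbf{M}^{u'})\hat f}$ by a multiple of $\norm{u-u'}$ would need a Lipschitz bound on $\calU'$ (the hypothesis only bounds $\calU'$) together with an $L^{\infty}$ control on $\hat f$; neither appears in the stated assumptions. So your honest diagnosis is correct: closing the gap requires either strengthening the hypotheses (e.g.\ $\calU\in C^{1,1}$ and $\hat f\in L^{\infty}$, with the induced extra constant absorbed into $\widetilde{C}$) or a different decomposition that avoids differentiating the symbol altogether.
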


\begin{proof}
    See Appendix~\ref{ss:proof_convergence_scheme}.
\end{proof}

\begin{rem}[Stochastic gradient descent]
    Starting from $u^{[0]} = 0$, consider a stochastic gradient iterative scheme of the form
    \begin{equation*}
        u^{[n]} = u^{[n-1]} + \eta_{n} \nabla \mathcal{J}(u^{[n-1]}), \quad n \geq 1,
    \end{equation*}
    where $\left( \eta_{n} \right)$ is a sequence of learning steps and $\nabla \mathcal{J}$ is given by~\eqref{eq:gateaux_derivative}. Although appearing to be a promising candidate to construct critical points satisfying the first-order optimality condition~\eqref{eq:foc_nonlinear_fredholm_resistance} without having to invert any Fredholm equation, we observed in practice that such scheme converges slower than the iterative scheme~\eqref{eq:approximate_FOC_scheme}--\eqref{eq:update_resistance_scheme} and does not converge at all in the regime $\gamma \to 0$. 
\end{rem}

\subsection{Optimal round-trips in presence of ``buy'' signals} \label{ss:application_to_optimal_round_trips}

\paragraph{Impact model specification.} Given fixed constants $\delta > 0$ and $c \geq 1$, we introduce the resistance function $\calU_{\delta,c} : \mathbb{R} \to \mathbb{R}$ as
\begin{equation} \label{eq:resistance_specification}
    \calU_{\delta,c}(x) := sign(x) |x|^{c} \mathds{1}_{\{|x| \leq \delta \}} + \big( c\delta^{c-1}x - sign(x)\delta^{c} ( c - 1 ) \big) \mathds{1}_{\{|x| > \delta \}}, \quad x \in \mathbb{R},
\end{equation}
and its derivative is
\begin{equation} \label{eq:derivative_resistance_specification}
    \calU_{\delta,c}'(x) := c |x|^{c-1} \mathds{1}_{\{|x| \leq \delta \}} + c\delta^{c-1} \mathds{1}_{\{|x| > \delta \}}, \quad x \in \mathbb{R}.
\end{equation}
The parameter \(c \ge 1\) governs the degree of convexity, encompassing in particular the linear (\(c=1\)) and quadratic (\(c=2\)) resistance functions. We also note that, for \(c=2\), the specification~\eqref{eq:resistance_specification} coincides with a modified Huber loss~\cite{huber1992robust}, in which the sign is flipped for negative arguments so that market impact indeed moves the unaffected price unfavorably when executing large sell orders. A finite value for $\delta < \infty$ yields an asymptotic linear growth of the resistance function as in Assumption~\ref{assumption:U}, and guarantees Lemma~\ref{lem:frechet:R} and Theorem~\ref{T:scheme_convergence} are applicable. Although $\delta$ might be a desirable tunable regularizing model parameter, we do not need to explicitly specify it in practice when considering a finite number of bounded sample signal trajectories (as in the following numerical examples), and therefore $\delta$ may remain implicit such that only the convex term $sign(\cdot)|\cdot|^{c} \mathds{1}_{\{|\cdot| \leq \delta \}}$ in~\eqref{eq:resistance_specification} is effective. 

\begin{figure}[H]
\centering
\includegraphics[width=6in]{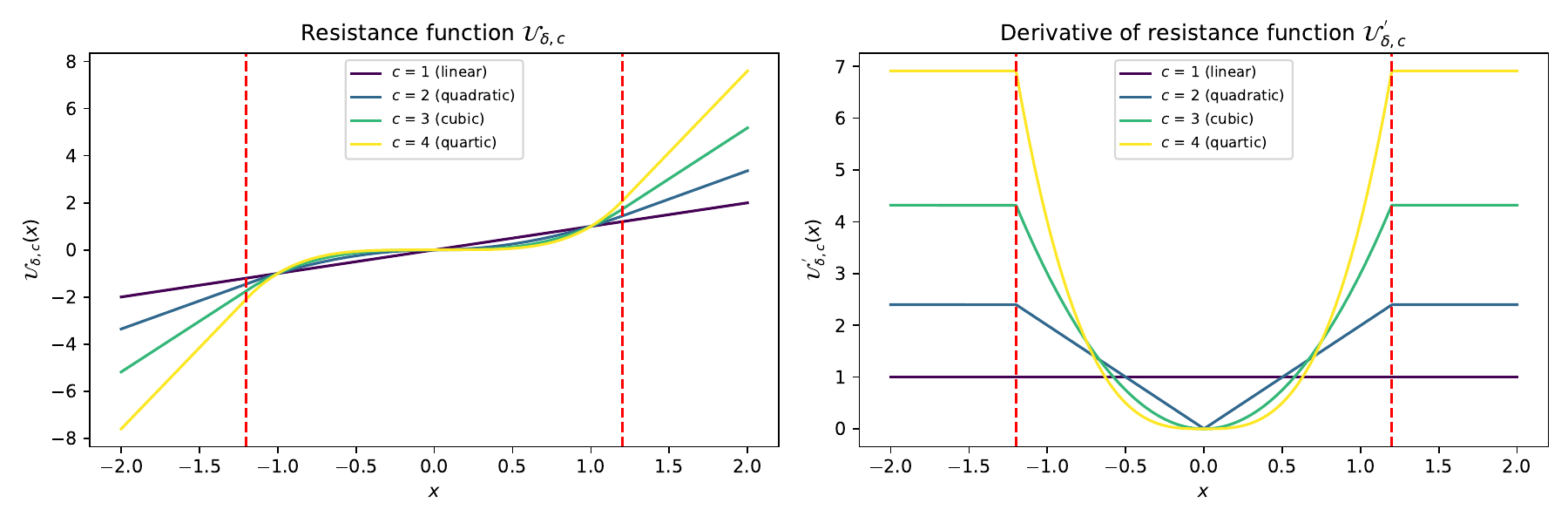}
\caption{Resistance function specification~\eqref{eq:resistance_specification} and its derivative~\eqref{eq:derivative_resistance_specification} for various values of $c$, while fixing $\delta = 1.2$.}
\label{F:resistance_function_and_its_derivative}
\end{figure}

Unless stated otherwise, the parameters of the resistance function are set as follows:
\begin{itemize}
    \item $c = 2$ for the convexity of the resistance function; this corresponds to what we refer to as the \textit{Quadratic Market Resistance} (QMR) model, asymptotically consistent with the square-root law in participation rate as shown in~\cite[Theorem 9]{durin2023two};
    \item $\delta < \infty$ is left implicit, since we consider bounded sample signal trajectories.
\end{itemize}

As motivated in Section~\ref{sec:market_impact_model}, we specify a power-law decay kernel $G_{\lambda, \nu}$ to capture the transient nature of market impact. Therefore, unless stated otherwise, the parameters of the price impact model in~\eqref{eq:price_impact_model} are set as follows:
\begin{itemize}
\item $\gamma = 0.2$ for the intensity of slippage costs;
\item $\lambda = 0.467$ and $\nu = 0.614$ for the transient impact decay (borrowed from~\cite{abi2025fredholm}, Figure~7);
\item $\kappa_{\infty} = 1$ for the permanent impact intensity.
\end{itemize}

\paragraph{Soft constraints specification.} The trader has the possibility to penalize his running inventory as well as the final inventory level via the hyper-parameters $\phi$ and $\varrho$ respectively introduced in~\eqref{eq:def:J}. Unless stated otherwise, we set:
\begin{itemize}
    \item $\phi=0$, i.e., no penalty on the running inventory, since it is obvious that an increasing value of $\phi$ shrinks the optimal inventory toward $0$;
    
    \item $\varrho = 5e2 \gg 1$ to enforce a zero final inventory.
\end{itemize}

\paragraph{Signals and least-squares Monte Carlo specification.} We specify an Ornstein-Uhlenbeck (OU) drift-like price signal $\mu$ and set $P := \int \mu_{t} \dd t$ where $P$ is given in the unaffected price process~\eqref{eq:unaffected_price_process}. We have
\begin{equation} \label{eq:drift_signal_specification}
\dd \mu_t = (\eta - \kappa \mu_t) \dd t + \sigma \dd W_t, \quad \mu_{0} \in \mathbb{R}.
\end{equation}
The alpha-signal $\alpha$ from~\eqref{eq:alpha_drift} is then explicitly given by
\begin{equation} \label{eq:signal_specification}
\alpha_{t} := \mathbb{E}_{t} \bigg[ \int_{t}^{T} \mu_r \dd r \bigg] = \bigg( \mu_{t} - \frac{\eta}{\kappa} \bigg) \frac{ 1 - e^{- \kappa (T-t)}}{\kappa} + \frac{\eta}{\kappa} (T-t), \quad t \in [0,T].
\end{equation}
Unless stated otherwise, we set:
\begin{itemize}
    \item $\eta=10$, for the long-term mean;
    
    \item $\kappa = 1$ for the mean-reversion rate;

    \item Either $\sigma = 1$ or $\sigma = 0$ for the signal noise level, see Remark~\ref{R:setting_vol_to_zero_for_clarity_and_stability};

    \item $\mu_{0} = 1$ for the initial state value of the OU drift-signal.
\end{itemize}
In the case of stochastic OU drift-signals, i.e., $\sigma = 1$, we need to apply a least-squares Monte Carlo to approximate the nonlinear stochastic operator~\eqref{eq:non_linear_adjoint_approx} via~\eqref{eq:numerical_solution_f} as explained in Section~\ref{ss:approx_operators}. In what follows, we apply Ridge regressions on $2000$ signal sample trajectories with penalty $1e-5$ to a basis expansion of all Laguerre polynomials up to degree two of the following family of features
\begin{equation*}
    \bigg( \alpha, \int_{0}^{\cdot} \alpha_{s} \dd s, \int_{0}^{\cdot} e^{- \kappa (\cdot - s)} \alpha_{s} \dd s \bigg),
\end{equation*}
and refer the interested reader to~\cite[Section 3.3]{abi2024trading} and~\cite[Section 3.1 and Appendix A]{abi2025fredholm} for more details. Finally, the number of time steps is set to $N=100$.

\begin{rem} \label{R:setting_vol_to_zero_for_clarity_and_stability}
    Setting the signal volatility to zero yields a deterministic optimal trading problem, leading to the same qualitative conclusions as those obtained from the sample means of the optimal quantities, while improving visual clarity in practice and avoiding the error propagation inherent to the least-squares Monte Carlo estimation procedure.
\end{rem}

\paragraph{Optimal round-trips with the QMR model for various signal decays.}

We specify three stochastic nonnegative ``buy'' signals of the form~\eqref{eq:signal_specification} and vary the signal's mean-reversion $\kappa \in \{ 0.1, 1, 10\}$ from~\eqref{eq:drift_signal_specification}, as illustrated in the top-left plot in Figure~\ref{F:trading_buy_signals_various_decays}.\\

On the one hand, Figure~\ref{F:trading_buy_signals_various_decays} displays the sample averages and the associated $95\%$ normal confidence intervals of the main optimal quantities of interest associated to the optimal round-trip strategies: optimal trading rates, inventories and resistance rates, as well as the resulting price distortions and running trading costs. We also display $5$ sample trajectories out of the $2000$ ones used for the least-squares Monte Carlo. The slower the signal decay, the more aggressive the trading strategy becomes, and the greater the resulting market resistance, price distortion, and running trading costs. Also, note that the resulting trading strategies do not feature any model round-trip model arbitrage as all the sample running trading costs trajectories remain non-negative throughout the trading horizon.\\

On the other hand, Figure~\ref{F:convergence_trading_buy_signals_various_decays} illustrates the exponential convergence of the iterative scheme through iterations: the faster the signal decays, the faster the scheme converges. Note that the convergence rate depends on all the impact model's parameters as suggested by inequality~\eqref{eq:condition_2_for_cvge} from Theorem~\ref{T:scheme_convergence}. In particular, we observe that the smaller $\gamma$, the slower the convergence -- until numerical instability appears -- which is consistent with the regularizing effect of slippage costs already described in~\cite[Section 3.5]{abi2025fredholm}.

\begin{figure}[H]
\centering
\includegraphics[width=6in]{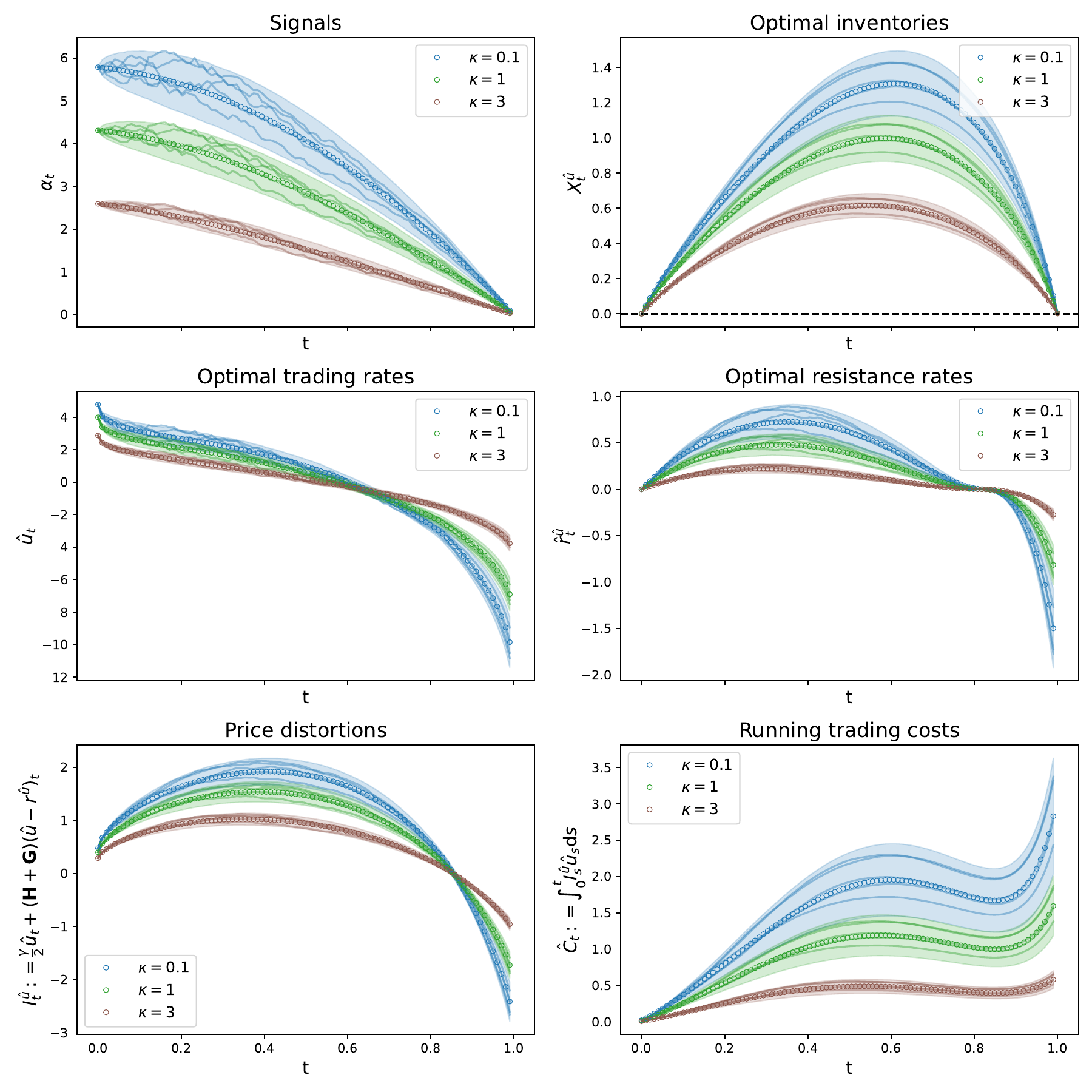}
\caption{Optimal round-trips for three stochastic ``buy'' signals with different signal decays $\kappa$ from~\eqref{eq:drift_signal_specification}. For each quantity, the shaded regions represent their normal $95\%$ confidence intervals estimated from the optimal $2000$ sample trajectories, the empty dot markers denote the corresponding sample means, and we also display $5$ sample trajectories.}
\label{F:trading_buy_signals_various_decays}
\end{figure}

\begin{figure}[H]
\centering
\includegraphics[width=6in]{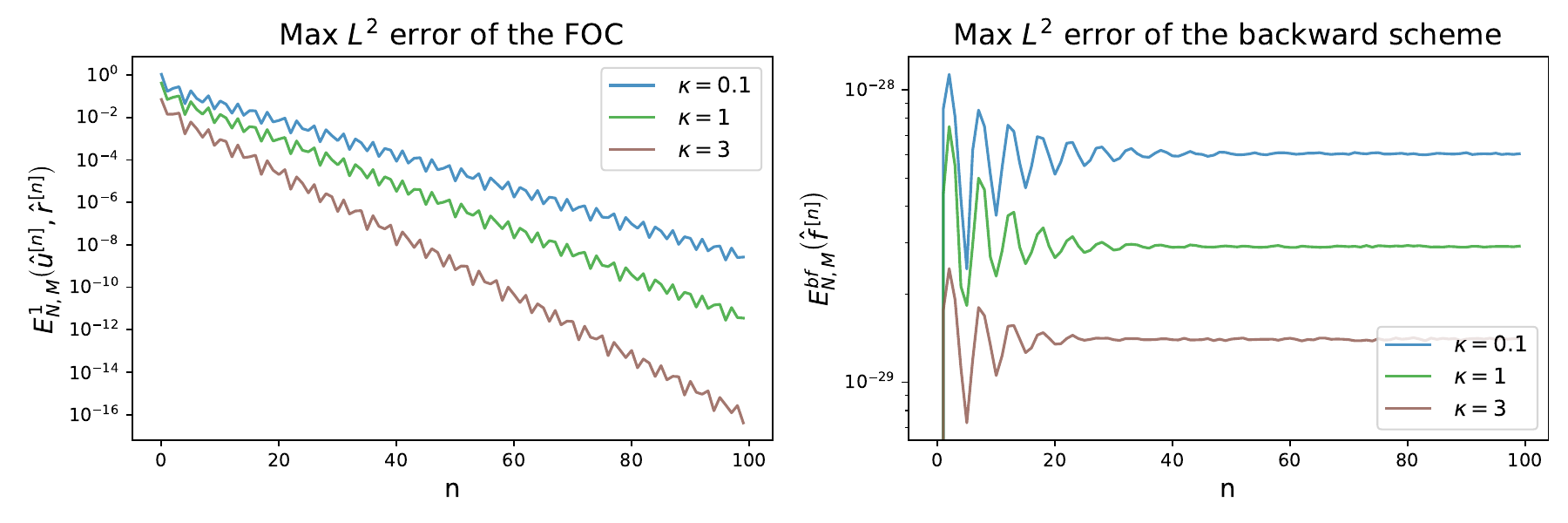}
\caption{Convergence of the numerical scheme~\eqref{eq:approximate_FOC_scheme}--\eqref{eq:update_resistance_scheme} for three stochastic ``buy'' signals with different signal decays $\kappa$ from~\eqref{eq:drift_signal_specification}: the left-hand plot displays the error $E_{N,M}^{1}$ of the FOC defined in~\eqref{eq:max_error_foc} while the right-hand plot shows the numerical error $E_{N,M}^{bf}$ of the backward scheme~\eqref{eq:numerical_solution_f} defined in~\eqref{eq:max_error_backward_scheme} as a function of the iteration step of the scheme. Note that the numerical error $E_{N,M}^{2}$, defined in from~\eqref{eq:max_error_resistance}, and due to Picard iterations when computing the resistance function is set to be lower than $1e-16$ at each iteration.}
\label{F:convergence_trading_buy_signals_various_decays}
\end{figure}

\paragraph{Optimal round-trips for different convexity parameters.}

For the remaining numerical results, including those reported in Appendix~\ref{ss:trading_sensivity_to_model_parameters}, we set the signal noise level to zero, that is, \(\sigma = 0\) in~\eqref{eq:drift_signal_specification}, and run the iterative scheme on a single deterministic signal trajectory. All other parameters are kept fixed as in Section~\ref{ss:application_to_optimal_round_trips}, unless stated otherwise; see Remark~\ref{R:setting_vol_to_zero_for_clarity_and_stability}.\\

Furthermore, in all subsequent numerical experiments, the convergence errors \(E_{N,M}^{1}\) in~\eqref{eq:max_error_foc}, \(E_{N,M}^{2}\) in~\eqref{eq:max_error_resistance}, and \(E_{N,M}^{\mathrm{bf}}\) in~\eqref{eq:max_error_backward_scheme} are found to be below \(10^{-11}\), \(10^{-16}\), and \(10^{-31}\), respectively, after \(100\) iterations of the numerical scheme~\eqref{eq:approximate_FOC_scheme}--\eqref{eq:update_resistance_scheme}.\\

Figure~\ref{F:trading_buy_signals_various_convexity} illustrates the qualitative effect of the convexity parameter $c \geq 1$ of the resistance function $\calU_{M,c}$ from~\eqref{eq:resistance_specification}: in this case, the higher the value of $c$, the more aggressive the optimal trading strategy, and the longer the optimal resistance rate lingers near zero when the trader changes his trading direction, which is consistent with the shapes of $\calU_{\delta, c}, \; c \in \{ 1, 2, 3, 4 \}$ depicted in Figure~\ref{F:resistance_function_and_its_derivative}. Therefore, the convexity captures how easily the market resistance reacts to the trader's strategy, and can be interpreted as a proxy of the ability of a sophisticated trader to detect the executed volume. Compared to the linear propagator model shown in red (i.e., the case \(\mathcal{U} \equiv 0\)), introducing a market resistance leads to less aggressive trading, which is consistent with the interpretation of market resistance as a proxy for sophisticated traders who benefit from the trader's metaorder’s impact and thereby erode part of the trader's alpha.\\

\begin{figure}[H]
\centering
\includegraphics[width=6in]{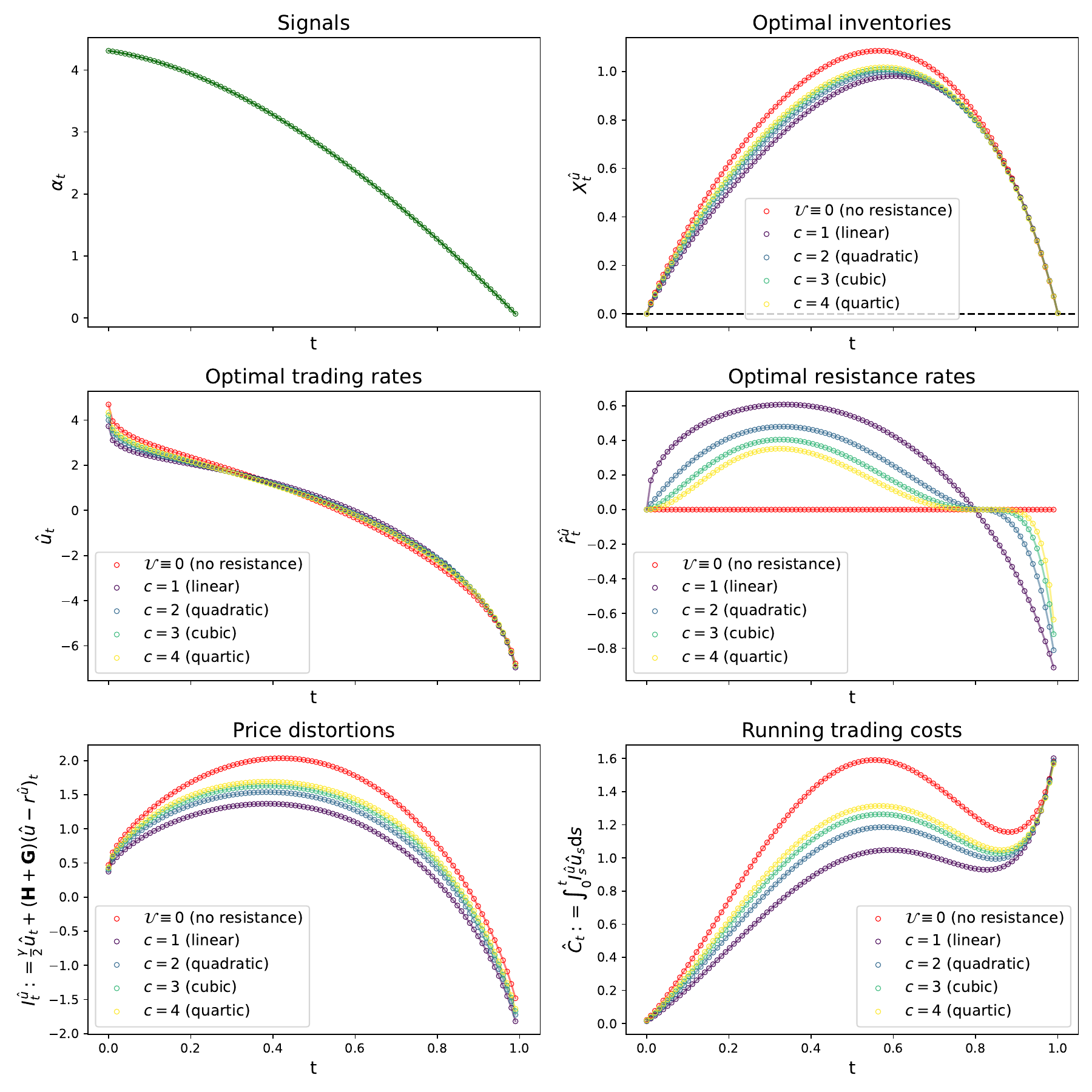}
\caption{Optimal round-trips in absence of resistance $(\calU \equiv 0)$ and with different convexity parameters $c \in \{ 1,2,3,4 \}$ used in $\calU_{\delta, c}$ from~\eqref{eq:resistance_specification}.}
\label{F:trading_buy_signals_various_convexity}
\end{figure}

\bibliographystyle{plainnat}
\bibliography{full_library.bib}

\begin{thebibliography}{39}
\providecommand{\natexlab}[1]{#1}
\providecommand{\url}[1]{\texttt{#1}}
\expandafter\ifx\csname urlstyle\endcsname\relax
  \providecommand{\doi}[1]{doi: #1}\else
  \providecommand{\doi}{doi: \begingroup \urlstyle{rm}\Url}\fi

\bibitem[Abi~Jaber and Neuman(2022)]{abi2022optimal}
Eduardo Abi~Jaber and Eyal Neuman.
\newblock Optimal liquidation with signals: the general propagator case.
\newblock \emph{Available at SSRN 4264823}, 2022.

\bibitem[Abi~Jaber et~al.(2023)Abi~Jaber, Neuman, and
  Vo{\ss}]{jaber2023equilibrium}
Eduardo Abi~Jaber, Eyal Neuman, and Moritz Vo{\ss}.
\newblock Equilibrium in functional stochastic games with mean-field
  interaction.
\newblock \emph{arXiv preprint arXiv:2306.05433}, 2023.

\bibitem[Abi~Jaber et~al.(2024)Abi~Jaber, De~Carvalho, and
  Pham]{abi2024trading}
Eduardo Abi~Jaber, Nathan De~Carvalho, and Huy{\^e}n Pham.
\newblock Trading with propagators and constraints: applications to optimal
  execution and battery storage.
\newblock \emph{arXiv preprint arXiv:2409.12098}, 2024.

\bibitem[Abi~Jaber et~al.(2025)Abi~Jaber, Bondi, De~Carvalho, Neuman, and
  Tuschmann]{abi2025fredholm}
Eduardo Abi~Jaber, Alessandro Bondi, Nathan De~Carvalho, Eyal Neuman, and
  Sturmius Tuschmann.
\newblock Fredholm approach to nonlinear propagator models.
\newblock \emph{Available at SSRN 5167824}, 2025.

\bibitem[Alfonsi et~al.(2010)Alfonsi, Fruth, and Schied]{alfonsi2010optimal}
Aur{\'e}lien Alfonsi, Antje Fruth, and Alexander Schied.
\newblock Optimal execution strategies in limit order books with general shape
  functions.
\newblock \emph{Quantitative finance}, 10\penalty0 (2):\penalty0 143--157,
  2010.

\bibitem[Almgren et~al.(2005)Almgren, Thum, Hauptmann, and
  Li]{almgren2005direct}
Robert Almgren, Chee Thum, Emmanuel Hauptmann, and Hong Li.
\newblock {D}irect estimation of equity market impact.
\newblock \emph{Risk}, 18\penalty0 (7):\penalty0 58--62, 2005.

\bibitem[Bacry et~al.(2015)Bacry, Iuga, Lasnier, and Lehalle]{bacry2015market}
Emmanuel Bacry, Adrian Iuga, Matthieu Lasnier, and Charles-Albert Lehalle.
\newblock {M}arket impacts and the life cycle of investors orders.
\newblock \emph{Market Microstructure and Liquidity}, 1\penalty0 (02):\penalty0
  1550009, 2015.

\bibitem[Bauschke and Combettes(2017)]{bauschke2017}
Heinz~H. Bauschke and Patrick~L. Combettes.
\newblock \emph{Convex analysis and monotone operator theory in Hilbert
  spaces}.
\newblock Springer, 2017.

\bibitem[Benzaquen and Bouchaud(2018)]{benzaquen2018market}
Michael Benzaquen and Jean-Philippe Bouchaud.
\newblock Market impact with multi-timescale liquidity.
\newblock \emph{Quantitative Finance}, 18\penalty0 (11):\penalty0 1781--1790,
  2018.

\bibitem[Bernstein(1929)]{Bernstein1929}
Serge Bernstein.
\newblock Sur les fonctions absolument monotones.
\newblock \emph{Acta Mathematica}, 52:\penalty0 1--66, 1929.
\newblock \doi{10.1007/BF02592629}.

\bibitem[Bershova and Rakhlin(2013)]{bershova2013non}
Nataliya Bershova and Dmitry Rakhlin.
\newblock {T}he non-linear market impact of large trades: {E}vidence from
  buy-side order flow.
\newblock \emph{Quantitative finance}, 13\penalty0 (11):\penalty0 1759--1778,
  2013.

\bibitem[Bouchaud et~al.(2003)Bouchaud, Gefen, Potters, and
  Wyart]{bouchaud2003fluctuations}
Jean-Philippe Bouchaud, Yuval Gefen, Marc Potters, and Matthieu Wyart.
\newblock Fluctuations and response in financial markets: the subtle nature of
  random price changes.
\newblock \emph{Quantitative finance}, 4\penalty0 (2):\penalty0 176, 2003.

\bibitem[Bouchaud et~al.(2008)Bouchaud, Farmer, and
  Lillo]{bouchaud2008marketsslowlydigestchanges}
Jean-Philippe Bouchaud, J.~Doyne Farmer, and Fabrizio Lillo.
\newblock How markets slowly digest changes in supply and demand, 2008.
\newblock URL \url{https://arxiv.org/abs/0809.0822}.

\bibitem[Bouchaud et~al.(2018)Bouchaud, Bonart, Donier, and
  Gould]{bouchaud2018trades}
Jean-Philippe Bouchaud, Julius Bonart, Jonathan Donier, and Martin Gould.
\newblock \emph{Trades, quotes and prices: financial markets under the
  microscope}.
\newblock Cambridge University Press, 2018.

\bibitem[Bucci et~al.(2019)Bucci, Benzaquen, Lillo, and
  Bouchaud]{bucci2019crossover}
Fr{\'e}d{\'e}ric Bucci, Michael Benzaquen, Fabrizio Lillo, and Jean-Philippe
  Bouchaud.
\newblock Crossover from linear to square-root market impact.
\newblock \emph{Physical review letters}, 122\penalty0 (10):\penalty0 108302,
  2019.

\bibitem[Chahdi et~al.(2024)Chahdi, Rosenbaum, and Szymanski]{chahdi2024theory}
Youssef~O. Chahdi, Mathieu Rosenbaum, and Gr{\'e}goire Szymanski.
\newblock A theory of passive market impact.
\newblock \emph{arXiv preprint arXiv:2412.07461}, 2024.

\bibitem[De~Carvalho(2025)]{decarvalho2025thesis}
Nathan De~Carvalho.
\newblock \emph{Lifting Energy Markets: From Volatility Modeling to Optimal
  Trading}.
\newblock Phd thesis, École doctorale de Sciences Mathématiques de Paris
  Centre (ED 386), Paris, May 2025.
\newblock Laboratoire de Probabilités, Statistique et Modélisation (LPSM).
  Supervisors: Eduardo Abi Jaber and Huyên Pham.

\bibitem[Donier et~al.(2015)Donier, Bonart, Mastromatteo, and
  Bouchaud]{donier2015fully}
Jonathan Donier, Julius Bonart, Iacopo Mastromatteo, and Jean-Philippe
  Bouchaud.
\newblock A fully consistent, minimal model for non-linear market impact.
\newblock \emph{Quantitative finance}, 15\penalty0 (7):\penalty0 1109--1121,
  2015.

\bibitem[Dunford and Schwartz(1988)]{dunford1988linear}
Nelson Dunford and Jacob~T. Schwartz.
\newblock \emph{Linear operators, part 1: general theory}, volume~10.
\newblock John Wiley \& Sons, 1988.

\bibitem[Durin et~al.(2023)Durin, Rosenbaum, and Szymanski]{durin2023two}
Bruno Durin, Mathieu Rosenbaum, and Gr{\'e}goire Szymanski.
\newblock The two square root laws of market impact and the role of
  sophisticated market participants.
\newblock \emph{arXiv preprint arXiv:2311.18283}, 2023.

\bibitem[Freyre-Sanders et~al.(2004)Freyre-Sanders, Guobuzaite, and
  Byrne]{freyre2004review}
Andrew Freyre-Sanders, Renate Guobuzaite, and Kevin Byrne.
\newblock {A} review of trading cost models: {R}educing transaction costs.
\newblock \emph{The Journal of Investing}, 13\penalty0 (3):\penalty0 93--115,
  2004.

\bibitem[Gabaix et~al.(2006)Gabaix, Gopikrishnan, Plerou, and
  Stanley]{gabaix2006institutional}
Xavier Gabaix, Parameswaran Gopikrishnan, Vasiliki Plerou, and H~Eugene
  Stanley.
\newblock {I}nstitutional investors and stock market volatility.
\newblock \emph{The Quarterly Journal of Economics}, 121\penalty0 (2):\penalty0
  461--504, 2006.

\bibitem[Gatheral(2010)]{gatheral2010no}
Jim Gatheral.
\newblock {N}o-dynamic-arbitrage and market impact.
\newblock \emph{Quantitative finance}, 10\penalty0 (7):\penalty0 749--759,
  2010.

\bibitem[Hey et~al.(2023)Hey, Bouchaud, Mastromatteo, Muhle-Karbe, and
  Webster]{hey2023cost}
Natascha Hey, Jean-Philippe Bouchaud, Iacopo Mastromatteo, Johannes
  Muhle-Karbe, and Kevin Webster.
\newblock {The Cost of Misspecifying Price Impact}.
\newblock \emph{arXiv preprint arXiv:2306.00599}, 2023.

\bibitem[Hey et~al.(2025)Hey, Mastromatteo, Muhle-Karbe, and
  Webster]{hey2025trading}
Natascha Hey, Iacopo Mastromatteo, Johannes Muhle-Karbe, and Kevin Webster.
\newblock Trading with concave price impact and impact decay—theory and
  evidence.
\newblock \emph{Operations Research}, 73\penalty0 (3):\penalty0 1230--1247,
  2025.

\bibitem[Hopman(2003)]{hopman2003essays}
Carl Hopman.
\newblock \emph{{E}ssays on the relation between stock price movements and
  orders}.
\newblock PhD thesis, Massachusetts Institute of Technology, 2003.

\bibitem[Huber(1992)]{huber1992robust}
Peter~J Huber.
\newblock Robust estimation of a location parameter.
\newblock In \emph{Breakthroughs in statistics: Methodology and distribution},
  pages 492--518. Springer, 1992.

\bibitem[Kyle and Obizhaeva(2023)]{kyle2023large}
Albert~S Kyle and Anna~A Obizhaeva.
\newblock Large bets and stock market crashes.
\newblock \emph{Review of Finance}, 2023.

\bibitem[Lax(2014)]{lax2014functional}
Peter~D. Lax.
\newblock \emph{Functional analysis}.
\newblock John Wiley \& Sons, 2014.

\bibitem[Lillo et~al.(2003)Lillo, Farmer, and Mantegna]{lillo2003master}
Fabrizio Lillo, J~Doyne Farmer, and Rosario~N Mantegna.
\newblock {M}aster curve for price-impact function.
\newblock \emph{Nature}, 421\penalty0 (6919):\penalty0 129--130, 2003.

\bibitem[{Moro, Esteban and Vicente, Javier and Moyano, Luis G and Gerig,
  Austin and Farmer, J Doyne and Vaglica, Gabriella and Lillo, Fabrizio and
  Mantegna, Rosario N}(2009)]{moro2009market}
{Moro, Esteban and Vicente, Javier and Moyano, Luis G and Gerig, Austin and
  Farmer, J Doyne and Vaglica, Gabriella and Lillo, Fabrizio and Mantegna,
  Rosario N}.
\newblock {M}arket impact and trading profile of hidden orders in stock
  markets.
\newblock \emph{{Physical Review E}}, 80\penalty0 (6):\penalty0 066102, 2009.

\bibitem[Nystr{\"o}m(1930)]{nystrom1930praktische}
EJ~Nystr{\"o}m.
\newblock {\"U}ber die {P}raktische {A}ufl{\"o}sung von {I}ntegralgleichungen
  mit {A}nwendungen auf {R}andwertaufgaben.
\newblock \emph{Acta Mathematica}, 54:\penalty0 185--204, 1930.

\bibitem[Obizhaeva and Wang(2013)]{obizhaeva2013optimal}
Anna~A Obizhaeva and Jiang Wang.
\newblock Optimal trading strategy and supply/demand dynamics.
\newblock \emph{Journal of Financial markets}, 16\penalty0 (1):\penalty0 1--32,
  2013.

\bibitem[Robert et~al.(2012)Robert, Robert, and Jeffrey]{robert2012measuring}
Engle Robert, Ferstenberg Robert, and Russell Jeffrey.
\newblock {M}easuring and modeling execution cost and risk.
\newblock \emph{The Journal of Portfolio Management}, 38\penalty0 (2):\penalty0
  14--28, 2012.

\bibitem[Sato and Kanazawa(2024)]{Sato_24}
Yuki Sato and Kiyoshi Kanazawa.
\newblock Does the square-root price impact law belong to the strict universal
  scalings?: quantitative support by a complete survey of the {T}okyo stock
  exchange market.
\newblock \emph{arXiv preprint arXiv:2411.13965}, 2024.

\bibitem[Struwe(2000)]{struwe2000variational}
Michael Struwe.
\newblock \emph{Variational Methods: Applications to Nonlinear Partial
  Differential Equations and {H}amiltonian Systems}, volume 991.
\newblock Springer, 2000.

\bibitem[T{\'o}th et~al.(2010)T{\'o}th, Lillo, and
  Farmer]{toth2010segmentation}
Bence T{\'o}th, Fabrizio Lillo, and J~Doyne Farmer.
\newblock Segmentation algorithm for non-stationary compound poisson processes:
  With an application to inventory time series of market members in a financial
  market.
\newblock \emph{The European Physical Journal B}, 78:\penalty0 235--243, 2010.

\bibitem[T{\'o}th et~al.(2011)T{\'o}th, Lemperiere, Deremble, De~Lataillade,
  Kockelkoren, and Bouchaud]{toth2011anomalous}
Bence T{\'o}th, Yves Lemperiere, Cyril Deremble, Joachim De~Lataillade, Julien
  Kockelkoren, and Jean-Philippe Bouchaud.
\newblock {A}nomalous price impact and the critical nature of liquidity in
  financial markets.
\newblock \emph{Physical Review X}, 1\penalty0 (2):\penalty0 021006, 2011.

\bibitem[Vaglica et~al.(2008)Vaglica, Lillo, Moro, and
  Mantegna]{vaglica2008scaling}
Gabriella Vaglica, Fabrizio Lillo, Esteban Moro, and Rosario~N Mantegna.
\newblock Scaling laws of strategic behavior and size heterogeneity in agent
  dynamics.
\newblock \emph{{Physical Review E}}, 77\penalty0 (3):\penalty0 036110, 2008.

\end{thebibliography}

\appendix

\section{Complement to numerical results} \label{s:complement_numerical_results}

\subsection{Proof of Theorem~\ref{T:scheme_convergence}} \label{ss:proof_convergence_scheme}

Fix $n \in \mathbb{N}^{*}$. Subtracting the FOC~\eqref{eq:foc_nonlinear_fredholm_resistance} satisfied by $( \hat{u}, \hat{r} )$ to the scheme equations~\eqref{eq:approximate_FOC_scheme}--\eqref{eq:update_resistance_scheme} defining $( u^{[n]}, r^{[n]} )$ yields
\begin{equation} \label{eq:difference_system_foc_scheme}
    \begin{cases}
        \gamma ( \hat{u} - u^{[n]} ) + (\mathbf{H}+\mathbf{G}+\mathbf{H}_{\phi,\varrho}+\mathbf{H}_{\phi,\varrho}^{*}) ( \hat{u} - u^{[n]} ) = \mathbf{A}(\hat{u}, \hat{r}) - \mathbf{A}(u^{[n-1]}, r^{[n-1]}), \\
        \hat{r} - r^{[n-1]} = \calU \big( \mathbf{G} ( \hat{u} - \hat{r} ) \big) - \calU \big( \mathbf{G} ( u^{[n-1]} - r^{[n-1]} ) \big).
    \end{cases}
\end{equation}
On the one hand, take the inner product of the second equation from~\eqref{eq:difference_system_foc_scheme} against the process $\hat{r} - r^{[n-1]}$ such that
\begin{align*}
    \norm{\hat{r} - r^{[n-1]}}^{2} & = \langle \calU \big( \mathbf{G} ( \hat{u} - \hat{r} ) \big) - \calU \big( \mathbf{G} ( u^{[n-1]} - r^{[n-1]} ) \big), \hat{r} - r^{[n-1]} \rangle \\
    & \leq \norm{\calU \big( \mathbf{G} ( \hat{u} - \hat{r} ) \big) - \calU \big( \mathbf{G} ( u^{[n-1]} - r^{[n-1]} ) \big)} \norm{\hat{r} - r^{[n-1]}} \\
    & \leq L \norm{\mathbf{G} ( \hat{u} - u^{[n-1]} ) + \mathbf{G} ( r^{[n-1]} - \hat{r} )} \norm{\hat{r} - r^{[n-1]}} \\
    & \leq L \sqrt{T C_{G}} \norm{\hat{u} - u^{[n-1]}} \norm{\hat{r} - r^{[n-1]}} + L \sqrt{T C_{G}} \norm{\hat{r} - r^{[n-1]}}^{2}
\end{align*}
where the respective inequalities are derived by applying Cauchy-Schwartz, and by using the Lipschitz-continuity of $\calU$, the triangular inequality, and the upper bound~\eqref{eq:estimate_on_admissible_G_operator}. Therefore, using~\eqref{eq:condition_1_for_cvge}, we deduce
\begin{equation} \label{eq:bound_norm_resistance_foc_scheme}
    \norm{\hat{r} - r^{[n-1]}} \leq \frac{L \sqrt{T C_{G}}}{1-L \sqrt{T C_{G}}} \norm{\hat{u} - u^{[n-1]}}.
\end{equation}

On the other hand, take the inner product of the first equation from~\eqref{eq:difference_system_foc_scheme} with the process ($\hat{u} - u^{[n]})$ such that
\begin{align*}
    \gamma \norm{\hat{u} - u^{[n]}}^{2} & = - \langle (\mathbf{H}+\mathbf{G}+\mathbf{H}_{\phi,\varrho}+\mathbf{H}_{\phi,\varrho}^{*}) ( \hat{u} - u^{[n]} ), \hat{u} - u^{[n]} \rangle\\
    & \quad + \langle \mathbf{A}(\hat{u}, \hat{r}) - \mathbf{A}(u^{[n-1]}, r^{[n-1]}), \hat{u} - u^{[n]} \rangle \\
    & \leq \norm{\mathbf{A}(\hat{u}, \hat{r}) - \mathbf{A}(u^{[n-1]}, r^{[n-1]})} \norm{\hat{u} - u^{[n]}},
\end{align*}
where we used the positive semi-definite property of $(\mathbf{H}+\mathbf{G}+\mathbf{H}_{\phi,\varrho}+\mathbf{H}_{\phi,\varrho}^{*})$, as well as the Cauchy-Schwartz inequality. Therefore, we have
\begin{align} \notag
    \gamma \norm{\hat{u} - u^{[n]}} & \leq \norm{\mathbf{A}(\hat{u}, \hat{r}) - \mathbf{A}(u^{[n-1]}, r^{[n-1]})} \\ \notag
    & = \Big| \Big| (\mathbf{H}+\mathbf{G}) ( \hat{r} - r^{[n-1]} ) + \big( \mathbf{Id} + ( \mathbf{M}^{u^{[n-1]}} \circ \mathbf{G} )^{*} \big)^{-1} \big( (\mathbf{H}+\mathbf{G})^{*} u^{[n-1]} \big) \\
    & \quad - \big( \mathbf{Id} + ( \mathbf{M}^{\hat{u}} \circ \mathbf{G} )^{*} \big)^{-1} \big( (\mathbf{H}+\mathbf{G})^{*} \hat{u} \big) \Big| \Big| \\ \label{eq:last_ineq_bounding_error}
    & \leq \sqrt{T C_{H+G}} \norm{\hat{r} - r^{[n-1]}} + \norm{f^{[n-1]}-\hat{f}},
\end{align}
where we set
\begin{align*}
    f^{[n-1]} & := \big( \mathbf{Id} + (  \mathbf{M}^{u^{[n-1]}} \circ \mathbf{G} )^{*} \big)^{-1} \big( (\mathbf{H}+\mathbf{G})^{*} u^{[n-1]} \big), \\
    \hat{f} & := \big( \mathbf{Id} + ( \mathbf{M}^{\hat{u}} \circ \mathbf{G} )^{*} \big)^{-1} \big( (\mathbf{H}+\mathbf{G})^{*} \hat{u} \big).
\end{align*}
Note that we have equivalently
\begin{align} \label{eq:eq_f_n_minus_one}
    f^{[n-1]} + ( \mathbf{M}^{u^{[n-1]}} \circ \mathbf{G} )^{*} f^{[n-1]} & := (\mathbf{H}+\mathbf{G})^{*} u^{[n-1]}, \\ \label{eq:eq_f_hat}
    \hat{f} + ( \mathbf{M}^{\hat{u}} \circ \mathbf{G} )^{*} \hat{f} & := (\mathbf{H}+\mathbf{G})^{*} \hat{u}.
\end{align}
Therefore, subtracting~\eqref{eq:eq_f_hat} to~\eqref{eq:eq_f_n_minus_one}, and taking the inner product with $f^{[n-1]}-\hat{f}$ yields
\begin{align*}
    \norm{f^{[n-1]}-\hat{f}}^{2} & = \langle (\mathbf{H}+\mathbf{G})^{*} ( u^{[n-1]} - \hat{u} ), f^{[n-1]}-\hat{f} \rangle - \langle ( \mathbf{M}^{u^{[n-1]}} \circ \mathbf{G} )^{*} f^{[n-1]} - ( \mathbf{M}^{\hat{u}} \circ \mathbf{G} )^{*} \hat{f}, f^{[n-1]}-\hat{f} \rangle \\
    & \leq \norm{(\mathbf{H}+\mathbf{G})^{*} ( u^{[n-1]} - \hat{u} )} \norm{f^{[n-1]}-\hat{f}} + C \norm{\mathbf{G} ( \hat{f} - f^{[n-1]} )} \norm{f^{[n-1]}-\hat{f}} \\
    & \leq \sqrt{T C_{H+G}} \norm{u^{[n-1]} - \hat{u}} + C \sqrt{T C_{G}} \norm{\hat{f} - f^{[n-1]}},
\end{align*}
where the inequalities are respectively obtained by applying Cauchy-Schwartz, and using the boundedness of $\calU'$ and the bounds~\eqref{eq:estimate_on_admissible_G_operator}--\eqref{eq:estimate_on_adjoint}. Thus, using~\eqref{eq:condition_1_for_cvge}, we obtain
\begin{equation} \label{eq:bound_f_n_minus_one_minus_hat_f}
    \norm{f^{[n-1]}-\hat{f}} \leq \frac{\sqrt{T C_{H+G}}}{1-C\sqrt{T C_{G}}} \norm{u^{[n-1]} - \hat{u}}.
\end{equation}
Finally, injecting~\eqref{eq:bound_f_n_minus_one_minus_hat_f} and~\eqref{eq:bound_norm_resistance_foc_scheme} into~\eqref{eq:last_ineq_bounding_error} readily leads to
\begin{equation*}
    \norm{\hat{u} - u^{[n]}} \leq \frac{\tilde{C}}{\gamma} \norm{u^{[n-1]} - \hat{u}},
\end{equation*}
where $\tilde{C}$ is given by~\eqref{eq:condition_2_for_cvge} such that $(\norm{\hat{u} - u^{[n]}} )_{n}$ is sub-geometric with common ratio $\frac{\tilde{C}}{\gamma}$ and therefore converges to zero at exponential rate given by~\eqref{eq:convergence_rate}.

\subsection{Qualitative effects of impact decay and permanent impact} \label{ss:trading_sensivity_to_model_parameters}

Figure~\ref{F:trading_buy_signals_various_impact_decay} illustrates the qualitative effect of the impact decay parameter $\nu \in ( \frac{1}{2}, 1 )$ of the power-law kernel $G_{\lambda, \nu}$ from~\eqref{eq:decay_specification}: the larger $\nu$ is, the faster the decay becomes, and consequently the more aggressive the resulting optimal trades are.\\

Finally, Figure~\ref{F:trading_buy_signals_various_permanent_impact} illustrates the qualitative effect of the $\kappa_\infty \in \{ 0.5, 1, 1.5 \}$ from the permanent impact kernel $H$ from~\eqref{eq:permanent_impact_specification}: the smaller \(\kappa_{\infty}\) is, the smaller the permanent impact becomes, and consequently the more aggressive the resulting optimal trades are.

\begin{figure}[H]
\centering
\includegraphics[width=6in]{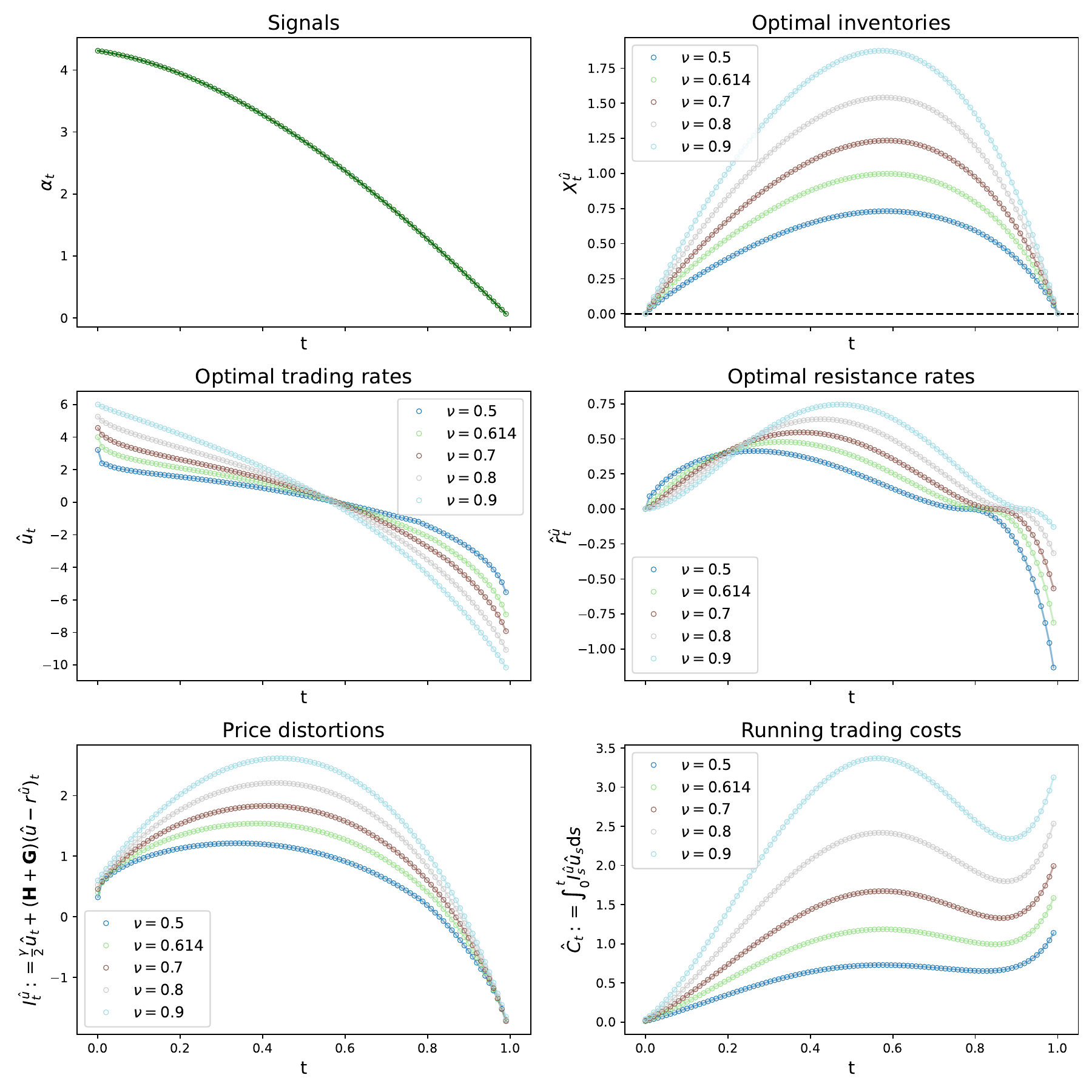}
\caption{Optimal round-trips with different impact decay parameters $\nu \in \{ 0.5, 0.614, 0.7, 0.8, 0.9 \}$ used in the power-law kernel $G_{\lambda, \nu}$ from~\eqref{eq:decay_specification}.}
\label{F:trading_buy_signals_various_impact_decay}
\end{figure}

\begin{figure}[H]
\centering
\includegraphics[width=6in]{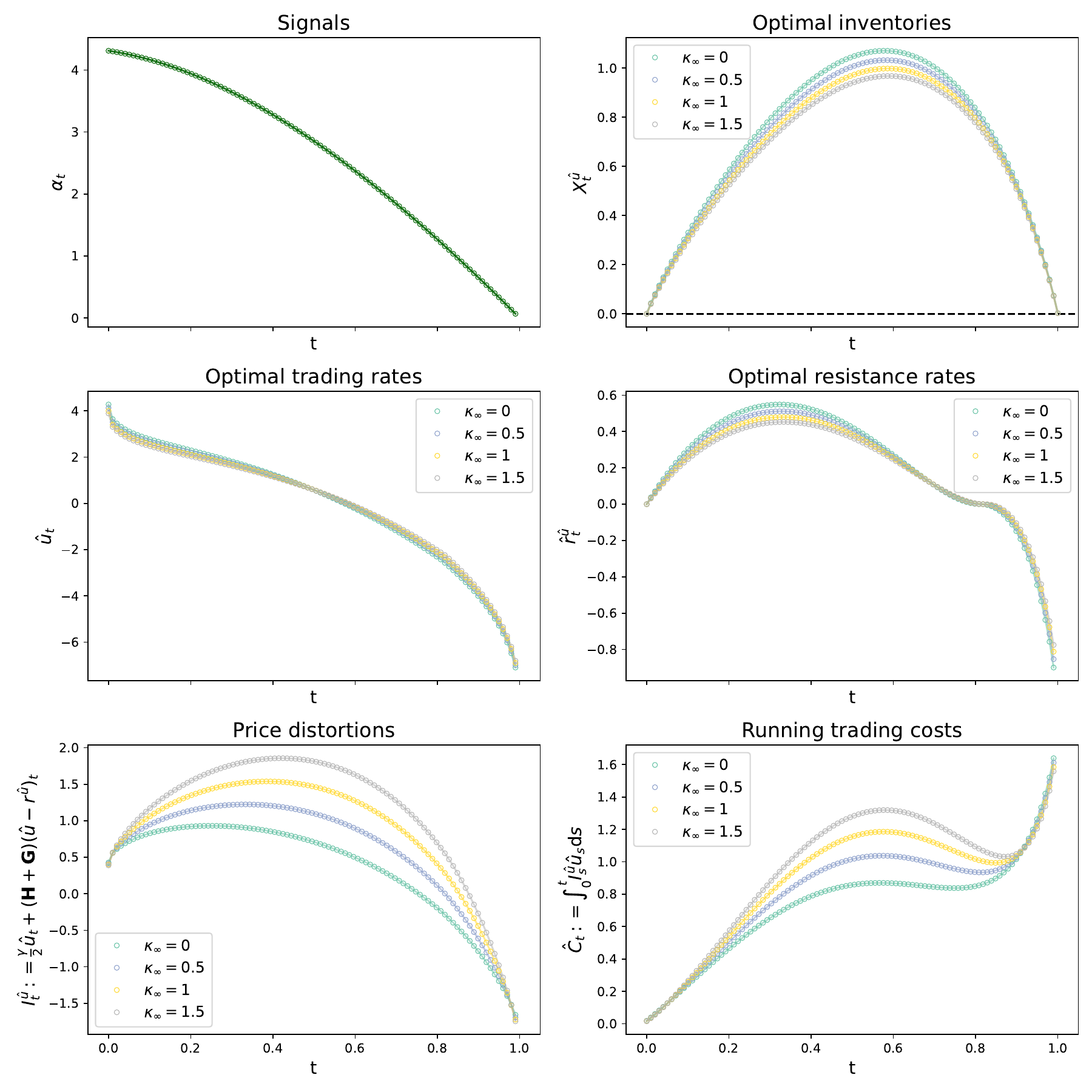}
\caption{Optimal round-trips with different permanent impact parameters $\kappa_\infty \in \{ 0.5, 1, 1.5 \}$ used in the permanent impact kernel $H$ from~\eqref{eq:permanent_impact_specification}.}
\label{F:trading_buy_signals_various_permanent_impact}
\end{figure}

\section{Mathematical tools}

\subsection{Completely monotone functions}

\begin{definition}[Completely monotone function]
\label{def:cm}
    A function $G:(0,\infty)\to\R$ is said to be completely monotone if $G\in C^\infty(0,\infty)$ and
    \begin{equation*}
        (-1)^n G^{(n)}(t)\ge 0,\qquad \forall\, t>0,\ \forall\, n \geq 0.
    \end{equation*}
    If, in addition, $G$ admits a (finite) right limit at $0$, we extend it to $[0,\infty)$ by setting
    $G(0):=\lim_{t\to 0^+}G(t)$.
\end{definition}

Completely monotone functions are, in particular, nonnegative, nonincreasing, and convex on $(0,\infty)$.
We refer to \citet{Bernstein1929} for further properties and historical background.

\begin{example}
\label{ex:cm}
On $[0,\infty)$, the following functions are completely monotone:
\begin{itemize}
    \item Exponentials: for any $\lambda\ge 0$, $t\mapsto e^{-\lambda t}$.
    \item Finite or countable sums (or mixtures) of exponentials with nonnegative weights:
    \begin{equation*}
        G(t)=\sum_{k\ge 1} a_k e^{-\lambda_k t},\qquad a_k\ge 0,\ \lambda_k\ge 0,
    \end{equation*}
    whenever the series converges (e.g.\ pointwise for all $t\ge 0$).
    \item Power laws: for any $\alpha>0$ and any $b \geq 0$, $t\mapsto (b+t)^{-\alpha}$.
\end{itemize}
\end{example}

The key structural result is that completely monotone functions are exactly Laplace transforms of positive measures.

\begin{theorem}[Bernstein-Widder representation]
\label{thm:bernstein_widder}
    Let $G:(0,\infty)\to[0,\infty)$ be completely monotone.
    Then there exists a (unique) $\sigma$-finite Borel measure $\mu$ on $[0,\infty)$ such that
    \begin{equation}
    \label{eq:cm_laplace_representation}
        G(t)=\int_{[0,\infty)} e^{-\lambda t}\,\mu(\dd\lambda),\qquad t>0.
    \end{equation}
    If moreover $G(0+)<\infty$, then $\mu$ is a finite measure and $\mu([0,\infty))=G(0+)$, so that
    \eqref{eq:cm_laplace_representation} holds for all $t\ge 0$ by defining $G(0):=G(0+)$.
\end{theorem}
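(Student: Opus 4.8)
The plan is to prove the representation first in the normalized case where $G$ is completely monotone on all of $[0,\infty)$ with $G(0)=G(0+)=1$, and then deduce the general $\sigma$-finite statement by a shift-and-limit argument. For the reduction, observe that for each $\epsilon>0$ the function $t\mapsto G(t+\epsilon)$ is completely monotone, continuous at $0$ with finite value $G(\epsilon)$; granting the normalized case it has a representation $G(t+\epsilon)=\int_{[0,\infty)}e^{-\lambda t}\,\mu_\epsilon(\dd\lambda)$ with $\mu_\epsilon$ finite of total mass $G(\epsilon)$. Uniqueness of Laplace transforms of finite measures forces the consistency relation $\mu_{\epsilon_2}(\dd\lambda)=e^{-\lambda(\epsilon_2-\epsilon_1)}\mu_{\epsilon_1}(\dd\lambda)$ for $\epsilon_2>\epsilon_1>0$, so that $\mu(\dd\lambda):=e^{\lambda\epsilon}\mu_\epsilon(\dd\lambda)$ is independent of $\epsilon$; it is $\sigma$-finite because $\mu([0,n])\le e^{n\epsilon}G(\epsilon)<\infty$, and monotone convergence as $\epsilon\downarrow 0$ (using continuity of $G$ on $(0,\infty)$) gives $G(t)=\int_{[0,\infty)}e^{-\lambda t}\,\mu(\dd\lambda)$ for $t>0$.

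For the normalized case, the key point is that complete monotonicity turns the values of $G$ along an arithmetic grid into a Hausdorff moment sequence. Fix $h>0$ and set $a_j:=G(jh)$, $j\ge 0$. Iterating $\Delta a_j=\int_0^h G'(jh+s)\,\dd s$ yields $\Delta^m a_j=\int_{[0,h]^m}G^{(m)}(jh+s_1+\cdots+s_m)\,\dd s$, so $(-1)^m\Delta^m a_j\ge 0$ for all $m,j$; since $a_0=1$, Hausdorff's theorem (itself established via Bernstein-polynomial approximation) provides a probability measure $\nu_h$ on $[0,1]$ with $G(jh)=\int_0^1 x^j\,\nu_h(\dd x)$. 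Pushing $\nu_h$ forward under the homeomorphism $x=e^{-\lambda h}$ of $[0,1]$ onto $[0,\infty]$ (with $0\mapsto+\infty$) produces a probability measure $\mu_h$ on $[0,\infty]$ with $G(jh)=\int_{[0,\infty]}e^{-\lambda jh}\,\mu_h(\dd\lambda)$ for all $j\ge 0$.

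It then remains to glue these grid representations together. Comparing the grids of mesh $h$ and $h/2$ and using determinacy of the Hausdorff moment problem on $[0,1]$, one checks that $\mu_{h/2}=\mu_h$; hence along $h_n=2^{-n}$ the measures coincide with a single measure $\mu$, and $G(t)=\int_{[0,\infty]}e^{-\lambda t}\,\mu(\dd\lambda)$ holds for every dyadic rational $t\ge 0$. For $t>0$ any atom at $\lambda=+\infty$ contributes nothing, and dominated convergence makes $t\mapsto\int_{[0,\infty)}e^{-\lambda t}\,\mu(\dd\lambda)$ continuous, so by density of the dyadics it agrees with $G$ on all of $(0,\infty)$; letting $t\downarrow 0$ and invoking $G(0+)=1$ with monotone convergence shows $\mu([0,\infty))=1$, i.e.\ no mass escapes to infinity and $\mu([0,\infty))=G(0+)$. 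Uniqueness in the stated $\sigma$-finite class (where the transform is finite) follows by fixing $t_0>0$, pushing $e^{-\lambda t_0}\mu(\dd\lambda)$ forward under $\lambda\mapsto e^{-\lambda}$ to a finite measure on $(0,1]$ whose moments are the determined numbers $G(t_0+k)$, $k\ge 0$, and appealing again to determinacy of the moment problem.

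The step I expect to be the main obstacle — or at least the one demanding the most care — is the passage from the grid representations to a single measure valid for all $t>0$: one must verify the compatibility $\mu_{h/2}=\mu_h$, exclude leakage of mass to $\lambda=+\infty$, and justify the continuity needed to exploit density of the dyadics. If one prefers to avoid routing through Hausdorff's theorem, an alternative is to build explicit approximants $\mu_k$ out of $(-1)^kG^{(k)}$ (a Post--Widder / Gamma-smoothing construction), bound $\mu_k([0,\infty))$ by $G(0+)$, extract a weak-$*$ limit by Helly's selection theorem, and prove $\int e^{-\lambda t}\,\mu_k(\dd\lambda)\to G(t)$; there the delicate point is precisely this last moment-type convergence.
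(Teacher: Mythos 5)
The paper does not prove this theorem at all: its ``proof'' is a one-line citation to Theorem~3 in Lax's \emph{Functional Analysis}, which (as is typical of the functional-analytic literature) deduces the representation by identifying the extreme points of the convex set of normalized completely monotone functions and invoking Krein--Milman/Choquet. Your argument takes the classical, more elementary ``moment-theoretic'' route: complete monotonicity of $G$ on a grid of mesh $h$ produces a completely monotone sequence $a_j=G(jh)$, Hausdorff's moment theorem furnishes a probability measure $\nu_h$ on $[0,1]$, you push it to $[0,\infty]$, and compatibility across refinements of the grid plus density of the dyadics and a short continuity argument glue everything into a single measure. This is a correct and self-contained proof modulo Hausdorff's theorem (itself standard via Bernstein polynomials), and the reduction to the normalized case via $G(\cdot+\epsilon)$ together with the consistency relation $\mu_{\epsilon_2}=e^{-\lambda(\epsilon_2-\epsilon_1)}\mu_{\epsilon_1}$ is exactly the right way to recover the $\sigma$-finite statement. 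The trade-off is the usual one: the route through Krein--Milman is shorter if you already have the soft tools, while your approach is more hands-on, makes the $\sigma$-finite extension and the mass identity $\mu([0,\infty))=G(0+)$ entirely transparent, and needs nothing beyond Weierstrass approximation.

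Two small points worth tightening if you were to write this out in full. First, the reduction normalizes by dividing by $G(\epsilon)$; you should note the degenerate alternative $G(\epsilon)=0$ for some $\epsilon>0$, which (since $G\ge 0$ is nonincreasing and, by complete monotonicity, each $(-1)^nG^{(n)}$ is again nonnegative and nonincreasing) forces $G\equiv 0$ on $(0,\infty)$, so $\mu=0$ works and the theorem is trivial there. Second, once $\mu(\dd\lambda):=e^{\lambda\epsilon}\mu_\epsilon(\dd\lambda)$ is shown independent of $\epsilon$, you can obtain $G(t)=\int e^{-\lambda t}\,\mu(\dd\lambda)$ for $t>0$ directly by choosing any $\epsilon<t$, without appealing to monotone convergence; your limit argument is also fine, just slightly longer than necessary.
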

\begin{proof}
    See Theorem~3 in \cite{lax2014functional}.
\end{proof}

\subsection{Operators in $\mathcal{L}^2$ and in $L^2$}

\begin{definition}
    We say that an operator $\mathbf{A}$ on $L^2$ is non-anticipative if for all $0 \leq t \leq T$ and all $u, v$ in $L^2$, we have 
    \begin{equation*}
        u \1_{[0,t]} = v\1_{[0,t]} \Rightarrow \mathbf{A} u(t) = \mathbf{A} v(t).
    \end{equation*}
\end{definition}


Non-anticipative operators on $L^2$ play a special role because they naturally induce operators on $\mathcal{L}^2$.

\begin{lemma}
    Suppose that $\mathbf{A}$ is a non-anticipative operator on $L^2$ such that there exists $C > 0$ such that
    \begin{equation*}
        \norm{\mathbf{A} u}_{L^2} \leq C \norm{u}_{L^2}.
    \end{equation*}
    For each $u \in \mathcal{L}^2$, we define 
    \begin{equation*}
        \mathbf{B} (u)(\omega, t) = \mathbf{A} (u(\omega, \cdot))(t).
    \end{equation*}
    Then $\mathbf{B}$ defines an operator on $\mathcal{L}^2$ such that
    \begin{equation*}
         \norm{\mathbf{B} u}_{\mathcal{L}^2} \leq C \norm{u}_{\mathcal{L}^2}.
    \end{equation*}    
    In that case, we write $\mathbf{B} = \widehat{\mathbf{A}}$.
\end{lemma}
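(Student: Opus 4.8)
The plan is to build $\mathbf{B}u$ pathwise in $\omega$, verify it has finite $\mathcal{L}^2$-norm with the stated bound, and — the only delicate point — that it admits a progressively measurable version. Since $u \in \mathcal{L}^2$, Tonelli's theorem gives $\E\big[\int_0^T |u_t|^2\,\dd t\big] < \infty$, hence $u(\omega,\cdot) \in L^2([0,T])$ for $\P$-a.e.\ $\omega$; on this full-measure event I set $\mathbf{B}u(\omega,\cdot) := \mathbf{A}(u(\omega,\cdot))$ and $\mathbf{B}u(\omega,\cdot) := 0$ otherwise, which determines $\mathbf{B}u$ on $\Omega \times [0,T]$ up to $\dd t \otimes \P$-null sets. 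Once we know this object is (equivalent to) a progressively measurable process, the norm estimate is immediate from the hypothesis on $\mathbf{A}$ and Tonelli:
\begin{equation*}
    \norm{\mathbf{B}u}_{\mathcal{L}^2}^2 = \E\big[\norm{\mathbf{A}(u(\omega,\cdot))}_{L^2([0,T])}^2\big] \le C^2\, \E\big[\norm{u(\omega,\cdot)}_{L^2([0,T])}^2\big] = C^2 \norm{u}_{\mathcal{L}^2}^2 ,
\end{equation*}
so $\widehat{\mathbf{A}} := \mathbf{B}$ maps $\mathcal{L}^2$ into itself with operator norm at most $C$.

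The heart of the argument is the measurability of $\mathbf{B}u$, and this is exactly where non-anticipativity is used. When $\mathbf{A}$ is linear (the case relevant to every operator built in the main text, such as $\mathbf{M}^u\circ\mathbf{G}$ and its adjoint), I would approximate $u$ in $\mathcal{L}^2$ by simple progressively measurable processes $u_n = \sum_k \xi_k^{(n)}\1_{(t^{(n)}_{k-1}, t^{(n)}_k]}$ with each $\xi_k^{(n)}$ being $\mathcal{F}_{t^{(n)}_{k-1}}$-measurable. By linearity $\mathbf{B}u_n(\omega,t) = \sum_k \xi_k^{(n)}(\omega)\, g^{(n)}_k(t)$ with $g^{(n)}_k := \mathbf{A}(\1_{(t^{(n)}_{k-1}, t^{(n)}_k]}) \in L^2([0,T])$ deterministic, and non-anticipativity (applied against $v \equiv 0$) forces $g^{(n)}_k$ to vanish a.e.\ on $[0, t^{(n)}_{k-1}]$; hence on $\Omega \times [0,s]$ the sum $\mathbf{B}u_n$ only involves coefficients $\xi_k^{(n)}$ with $t^{(n)}_{k-1} < s$, all $\mathcal{F}_s$-measurable, so $\mathbf{B}u_n$ is honestly progressively measurable. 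The norm bound already established gives $\norm{\mathbf{B}u_n - \mathbf{B}u}_{\mathcal{L}^2} = \norm{\mathbf{B}(u_n - u)}_{\mathcal{L}^2} \le C\norm{u_n - u}_{\mathcal{L}^2} \to 0$, so along a subsequence $\mathbf{B}u_n \to \mathbf{B}u$ $\dd t\otimes\P$-a.e., exhibiting $\mathbf{B}u$ as an a.e.\ limit of progressively measurable processes, hence progressively measurable up to modification. For a possibly nonlinear but Borel $\mathbf{A}$, the same conclusion is reached by noting that, for each $t$, the map $\omega \mapsto u(\omega,\cdot)\1_{[0,t]}$ is a strongly measurable $\mathcal{F}_t$-valued element of the separable space $L^2([0,T])$ (Pettis' theorem, using progressive measurability of $u$), composing with $\mathbf{A}$, and invoking non-anticipativity to identify $\mathbf{A}(u(\omega,\cdot))$ with $\mathbf{A}(u(\omega,\cdot)\1_{[0,t]})$ on $[0,t]$.

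The main obstacle is precisely this measurability step: $\mathbf{B}u$ is defined only $\dd t\otimes\P$-almost everywhere, so "adaptedness at a fixed time $t$" is not literally meaningful and one genuinely needs to produce a progressively measurable representative rather than merely argue pointwise in $t$. The linear/step-function route sidesteps this cleanly because each $\mathbf{B}u_n$ is a finite sum of products (deterministic path)$\,\times\,$($\mathcal{F}_t$-random variable), and the passage to the limit uses nothing beyond the $\mathcal{L}^2$-contraction bound. Combining the two displays yields $\mathbf{B}u \in \mathcal{L}^2$ with $\norm{\mathbf{B}u}_{\mathcal{L}^2} \le C \norm{u}_{\mathcal{L}^2}$, which is the assertion.
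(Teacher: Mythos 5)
Your proof is correct; the paper states this lemma without a proof, so there is no argument of theirs to compare against. Your decomposition of the work is exactly the natural one: define $\mathbf{B}u$ pathwise via $\mathbf{A}(u(\omega,\cdot))$, obtain the $\mathcal{L}^2$-bound from Tonelli, and isolate progressive measurability of $(\omega,t)\mapsto\mathbf{A}(u(\omega,\cdot))(t)$ as the only nontrivial step, pinned down by non-anticipativity. Both of your routes to measurability are sound: the step-process approximation for linear $\mathbf{A}$, where non-anticipativity forces the deterministic paths $g^{(n)}_k = \mathbf{A}(\1_{(t^{(n)}_{k-1},t^{(n)}_k]})$ to vanish a.e.\ on $[0,t^{(n)}_{k-1}]$ and $\mathcal{L}^2$-convergence preserves progressive measurability up to modification; and the Pettis-based argument, where strong $\mathcal{F}_t$-measurability of $\omega\mapsto u(\omega,\cdot)\1_{[0,t]}$ composed with $\mathbf{A}$, together with non-anticipativity to replace $\mathbf{A}(u(\omega,\cdot))$ by $\mathbf{A}(u(\omega,\cdot)\1_{[0,t]})$ on $[0,t]$, yields an $\mathcal{F}_t\otimes\mathcal{B}([0,t])$-measurable version.

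One hypothesis you correctly smuggle in but which is worth naming explicitly: as literally written, the lemma assumes only boundedness and non-anticipativity of $\mathbf{A}$, with no mention of linearity, continuity, or Borel measurability. For an arbitrary set-map satisfying only the stated bound, even $\omega\mapsto\norm{\mathbf{A}(u(\omega,\cdot))}_{L^2([0,T])}$ need not be measurable, so the Tonelli display would not be meaningful and the lemma would fail. Your two branches add precisely what is needed (linearity in one, Borel measurability in the other), and this is harmless in context — every operator the paper applies this to is a convolution operator, possibly composed with a Lipschitz nonlinearity, hence continuous — but it is an additional hypothesis relative to the bare statement.
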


\begin{lemma}
\label{lem:calL2_to_L2}
    Suppose that $\mathbf{B}$ is an operator on $\mathcal{L}^2$ for which there exists $C > 0$ such that
    \begin{equation*}
        \norm{\mathbf{B} u}_{\mathcal{L}^2} \leq C \norm{u}_{\mathcal{L}^2}
    \end{equation*}
    and suppose that if $u$ is a deterministic process, then $\mathbf{B} u$ is also deterministic. Then for each $u \in L^2$, we define 
    \begin{equation*}
        \mathbf{A}u(t) = \mathbf{B} u(t).
    \end{equation*}
    Then $\mathbf{A}$ defines an non-anticipative operator on $L^2$ such that
    \begin{equation*}
        \norm{\mathbf{A} u}_{L^2} \leq C \norm{ u }_{L^2}.
    \end{equation*}   
    In that case, we write $\mathbf{A} = \widecheck{\mathbf{B}}$.
    Moreover, we have $\widehat{\widecheck{\mathbf{B}}} = \mathbf{B}.$
\end{lemma}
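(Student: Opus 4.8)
The statement is the mirror image of the preceding lemma: there a non-anticipative operator on $L^2$ was lifted to $\calL^2$ by acting path-by-path, while here an operator on $\calL^2$ is restricted to the deterministic processes and read back as an operator on $L^2$. The plan is to set up that restriction, transfer the norm bound through the natural isometric embedding $L^2 \hookrightarrow \calL^2$, and then check non-anticipativity and the inversion identity. Concretely, I identify $u \in L^2$ with the deterministic (hence progressively measurable) process $\wt u \in \calL^2$ given by $\wt u(\omega,t) = u(t)$; this embedding is isometric since $\|\wt u\|_{\calL^2}^2 = \E[\int_0^T |u(t)|^2\,\dd t] = \|u\|_{L^2}^2$. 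For $u \in L^2$, the hypothesis that $\mathbf{B}$ sends deterministic processes to deterministic processes means $\mathbf{B}\wt u$ agrees $\dd t \otimes \P$-a.e.\ with a deterministic process whose path lies in $L^2$, and I define $\mathbf{A}u$ to be that path. Independence of the chosen representative of $u$ is immediate ($u = u'$ a.e.\ forces $\wt u = \wt u'$ in $\calL^2$, hence $\mathbf{B}\wt u = \mathbf{B}\wt u'$), and the embedding isometry gives $\|\mathbf{A}u\|_{L^2} = \|\mathbf{B}\wt u\|_{\calL^2} \le C\|\wt u\|_{\calL^2} = C\|u\|_{L^2}$, so $\mathbf{A}$ is a well-defined bounded operator on $L^2$.

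Next I would establish non-anticipativity of $\mathbf{A}$. Fix $t \in [0,T]$ and $u,v \in L^2$ with $u\1_{[0,t]} = v\1_{[0,t]}$; then $\wt u\,\1_{[0,t]} = \wt v\,\1_{[0,t]}$ in $\calL^2$, and the non-anticipativity of $\mathbf{B}$ on $\calL^2$ (understood in the sense analogous to the $L^2$ definition, in parallel with the hypothesis imposed on $\mathbf{A}$ in the previous lemma) yields $(\mathbf{B}\wt u)(t) = (\mathbf{B}\wt v)(t)$ $\P$-a.s. Both sides are deterministic, so this reads $\mathbf{A}u(t) = \mathbf{A}v(t)$, which is the required property.

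It remains to prove $\widehat{\widecheck{\mathbf{B}}} = \mathbf{B}$, and this is the step I expect to be the real obstacle. Unwinding both constructions, for $u \in \calL^2$ one has $(\widehat{\widecheck{\mathbf{B}}}\,u)(\omega,t) = (\widecheck{\mathbf{B}})\big(u(\omega,\cdot)\big)(t) = \mathbf{A}\big(u(\omega,\cdot)\big)(t)$, i.e.\ the value at time $t$ of $\mathbf{B}$ applied to the \emph{frozen} path $s \mapsto u(\omega,s)$ regarded as a deterministic process. Thus $\widehat{\widecheck{\mathbf{B}}} = \mathbf{B}$ amounts to the pathwise-locality statement that, for a.e.\ $(\omega,t)$, the value $(\mathbf{B}u)(\omega,t)$ depends on $u$ only through the single trajectory $u(\omega,\cdot)$ — equivalently, that $\mathbf{B}$ lies in the range of the $\widehat{\cdot}$ construction. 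Care is needed here, since boundedness together with determinism-preservation do not by themselves entail it; I would derive it from the explicit structure of the operators used in the paper (integral operators against deterministic Volterra kernels, multiplication by adapted processes, and the conditional-expectation operators $\E_t$), each of which acts trajectory by trajectory on $\calL^2$, a property preserved under composition, inversion and addition. Granting pathwise locality, the identity $\widehat{\widecheck{\mathbf{B}}} = \mathbf{B}$ follows at once, and the only remaining work is routine measurability bookkeeping, namely joint measurability of $(\omega,t)\mapsto (\mathbf{B}u)(\omega,t)$ and consistency of the various a.e.\ identifications.
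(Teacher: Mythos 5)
The paper supplies no proof for Lemma~\ref{lem:calL2_to_L2}, so there is nothing to compare against; the real question is whether your argument closes, and you yourself have correctly sensed that it does not, for reasons that go deeper than ``routine measurability bookkeeping''. The lemma, as written, is under-hypothesized. Boundedness together with determinism-preservation does give you the first half: your isometric-embedding argument yields that $\mathbf{A}$ is well defined and $\norm{\mathbf{A}u}_{L^2}\le C\norm{u}_{L^2}$. But neither hypothesis forces non-anticipativity of $\mathbf{A}$ nor the identity $\widehat{\widecheck{\mathbf{B}}}=\mathbf{B}$, and counterexamples are easy to build. For non-anticipativity, take $\mathbf{B}u(\omega,t)=\E\big[\int_0^T u_s\,\dd s\big]\,\1_{\{t>T/2\}}$: this is bounded on $\calL^2$, always outputs a deterministic (hence progressively measurable) process, yet its restriction $\mathbf{A}u(t)=\big(\int_0^T u(s)\,\dd s\big)\1_{\{t>T/2\}}$ looks into the future. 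For the identity $\widehat{\widecheck{\mathbf{B}}}=\mathbf{B}$, even adding the non-anticipativity hypothesis on $\mathbf{B}$ that you posit in parallel with the preceding lemma is not enough: $\mathbf{B}u(\omega,t)=\E[u_t]$ is bounded, non-anticipative, and fixes deterministic processes, yet $\widehat{\widecheck{\mathbf{B}}}$ is the identity while $\mathbf{B}$ is not. So pathwise locality is genuinely independent of the stated conditions, exactly as you suspected.

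Where your proposal then stalls is that you defer the decisive step --- establishing pathwise locality --- to ``the explicit structure of the operators used in the paper'' without carrying it out, and you simultaneously invoke a non-anticipativity hypothesis on $\mathbf{B}$ that the lemma does not state. Those are not embellishments; they are the content. In the paper's actual application (Section~\ref{sec:prop:B}, where $\mathbf{B}u=\calU(\mathbf{G}u)$), both the Volterra convolution $\mathbf{G}$ and the pointwise map $\calU$ act $\omega$-by-$\omega$, and this property survives the inversion $(\mathbf{I}+\mathbf{B})^{-1}$ because the fixed-point solution is itself built trajectory-by-trajectory in Appendix~\ref{app:proof:lem:well_posed}. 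A clean repair is therefore either to add pathwise locality (equivalently, ``$\mathbf{B}$ lies in the image of the $\widehat{\cdot}$ construction'') as an explicit hypothesis --- after which the proof really is the short verification you sketch, and non-anticipativity of $\mathbf{A}$ drops out of progressive measurability of $\mathbf{B}u$ for the right choice of $u$ --- or to abandon the abstract statement and verify the conclusion directly for the concrete operator $\mathbf{B}u=\calU(\mathbf{G}u)$ that the paper needs.
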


\subsection{Fr\'echet differentiability}

\begin{definition}
\label{def:frechet:diff}
    Let $E$ and $F$ be two Banach spaces and $V$ be an open subset of $E$. An application $f: V \to F$ is said to be Fr\'echet differentiable at $u \in V$ if there exists a linear application $Df(u): E \to F$ such that the following limit holds
    \begin{equation*}
        \lim \limits_{\norm{h}_E \to 0} \frac{\norm{f(u+h) - f(u) - Df(u)(h)}_F}{\norm{h}_E} = 0.
    \end{equation*}
\end{definition}

\begin{theorem}[Implicit function theorem]
\label{thm:implicit}
    Let $X, Y, Z$ be Banach spaces. Let the mapping $f : X \times Y \to Z$ be continuously Fréchet differentiable in the sense of Definition~\ref{def:frechet:diff}.
    Then if $(x_0, y_0) \in X \times Y$ satisfies $f(x_0, y_0) = 0$ and if $y \mapsto Df(x_0, y_0)(0,y)$ is a Banach space isomorphism from $Y$ onto $Z$, then there exist neighborhoods $U$ of $x_0$ and $V$ of $y_0$ and a Fréchet differentiable function $g : U \to V$ such that $f(x, g(x)) = 0$ and $f(x, y) = 0$ if and only if $y = g(x)$, for all $( x , y ) \in U \times V$.
\end{theorem}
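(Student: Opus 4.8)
The plan is to prove existence by the \emph{direct method of the calculus of variations}: we show that $-\mathcal{J}$ is bounded below with sublevel sets bounded in $\mathcal{L}^2$ (coercivity) and that $-\mathcal{J}$ is sequentially weakly lower semicontinuous on $\mathcal{L}^2$. Since $\mathcal{L}^2$ is a Hilbert space, a bounded maximizing sequence $(u_n)$ for $\mathcal{J}$ then has a weakly convergent subsequence $u_{n_k}\rightharpoonup \widehat u$, and weak lower semicontinuity of $-\mathcal{J}$ forces $\mathcal{J}(\widehat u)\ge\limsup_k\mathcal{J}(u_{n_k})=\sup_{\mathcal{L}^2}\mathcal{J}$, so $\widehat u$ is the desired maximizer. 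Throughout we work with the operator form~\eqref{eq:operator:J} of $\mathcal{J}$ and with the Lipschitz operator $\mathbf{R}$ from Lemma~\ref{lem:well_posed}, noting that $\mathbf{R}(0)=0$ (since $r\equiv 0$ solves~\eqref{eq:r_u} for $u\equiv 0$ and the solution is unique), hence $\norm{\mathbf{R}(u)}\le C_{\mathbf R}\norm{u}$.

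\textbf{Coercivity.} Writing $\langle u, I^u\rangle = \tfrac{\gamma}{2}\norm{u}^2 + \langle u, (\mathbf{H}+\mathbf{G})(u-\mathbf{R}(u))\rangle$, and using that the kernel $\kappa_\infty + G$ is positive semidefinite (the no-dynamic-arbitrage property, which follows from complete monotonicity of $G$ via the Bernstein--Widder representation, Theorem~\ref{thm:bernstein_widder}), together with the Lipschitz bound on $\mathbf{R}$ and the nonpositivity of $M(X^u)$, one bounds $\mathcal{J}(u)$ from above by a quantity tending to $-\infty$ as $\norm{u}\to\infty$: when $\gamma>0$ the slippage term $\tfrac{\gamma}{2}\norm{u}^2$ already forces this, while when $\gamma=0$ (hence $\kappa_\infty>0$ by hypothesis) the coercive contribution comes from the permanent-impact kernel $\kappa_\infty\mathbbm{1}_{\{s<t\}}$ — controlling the terminal inventory, reinforced by the penalty $\tfrac{\varrho}{2}\E[(X_T^u)^2]$ — together with the strict transient positivity. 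In either case every maximizing sequence is bounded in $\mathcal{L}^2$ and $\sup_{\mathcal{L}^2}\mathcal{J}<\infty$.

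\textbf{Weak lower semicontinuity.} The terms $\tfrac{\gamma}{2}\norm{u}^2$, $\tfrac{\phi}{2}\norm{\mathbf{X}u}^2$ and $\tfrac{\varrho}{2}\E[(\mathbf{X}u)_T^2]$ are squared norms of bounded linear images of $u$, hence weakly lower semicontinuous, and $u\mapsto\langle u,\alpha\rangle$ is weakly continuous; the only delicate term is $\langle u, (\mathbf{H}+\mathbf{G})(\mathbf{I}-\mathbf{R})u\rangle$, where the nonlinearity of $\mathbf{R}$ is the obstruction. We use two structural facts. First, \eqref{eq:r_u} is a \emph{pathwise} Volterra equation: $\mathbf{G}$ acts by convolution in the time variable only, so for each $\omega$ the path $r^u(\omega,\cdot)\in L^2([0,T])$ is the unique solution of $r_t=\calU\big((\mathbf{G}_{L^2([0,T])}(u(\omega,\cdot)-r))_t\big)$, a function of $u(\omega,\cdot)$ alone, and the deterministic analogue of Lemma~\ref{lem:well_posed} shows this map depends Lipschitz-continuously on $\mathbf{G}_{L^2}u(\omega,\cdot)$. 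Second, the Volterra operators $\mathbf{G}$, $\mathbf{H}$ and $u\mapsto\int_0^{\cdot}u$ are Hilbert--Schmidt, hence \emph{compact}, on $L^2([0,T])$ by Assumption~\ref{assumption:G}. This is where the hypothesis that $\Omega$ is countable or finite enters: enumerating the atoms $\{\omega_i\}$ with masses $p_i>0$, boundedness gives $\sup_n\norm{u_n(\omega_i,\cdot)}_{L^2([0,T])}\le\sqrt{C/p_i}<\infty$ for each $i$, so a diagonal extraction yields a subsequence with $u_n(\omega_i,\cdot)\rightharpoonup u_\infty(\omega_i,\cdot)$ in $L^2([0,T])$ for every $i$, whence $u_n\rightharpoonup u_\infty$ in $\mathcal{L}^2$; compactness upgrades this to $\mathbf{G}u_n(\omega_i,\cdot)\to\mathbf{G}u_\infty(\omega_i,\cdot)$ \emph{strongly}, and Lipschitz continuity of the pathwise resistance map plus continuity of $\calU$ give $r^{u_n}(\omega_i,\cdot)\to r^{u_\infty}(\omega_i,\cdot)$, hence $(\mathbf{H}+\mathbf{G})(u_n-r^{u_n})(\omega_i,\cdot)\to(\mathbf{H}+\mathbf{G})(u_\infty-r^{u_\infty})(\omega_i,\cdot)$ strongly. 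Decomposing $\langle u_n,(\mathbf{H}+\mathbf{G})(u_n-r^{u_n})\rangle=\sum_i p_i\langle u_n(\omega_i,\cdot),(\mathbf{H}+\mathbf{G})(u_n-r^{u_n})(\omega_i,\cdot)\rangle_{L^2([0,T])}$, each summand converges (weak times strong), and a uniform-tail estimate based on $\sum_i p_i\norm{u_n(\omega_i,\cdot)}^2=\norm{u_n}^2\le C$ (vacuous when $\Omega$ is finite) lets us pass the limit through the series. Together with the weak lower semicontinuity of the quadratic penalty terms and weak continuity of $\langle u,\alpha\rangle$, this yields $\limsup_n\mathcal{J}(u_n)\le\mathcal{J}(u_\infty)$, which completes the proof.

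\textbf{Main obstacle.} The crux is the weak semicontinuity of the nonlinear term $\langle u,(\mathbf{H}+\mathbf{G})(\mathbf{I}-\mathbf{R})u\rangle$: weak convergence of $u_n$ does not by itself control $\mathbf{R}(u_n)$. The resolution rests on reducing the resistance fixed point to a pathwise Volterra problem whose data $\mathbf{G}u(\omega,\cdot)$ converges \emph{strongly} by compactness of Volterra operators on $L^2([0,T])$, which is precisely why the countability/finiteness restriction on $\Omega$ is needed in order to assemble the pathwise strong convergences into convergence of the $\mathcal{L}^2$ inner product; the uniform-tail interchange of limit and sum over atoms in the countable case is a secondary technical point.
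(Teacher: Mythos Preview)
Your proposal does not address the stated theorem at all. The statement in question is the \emph{Implicit Function Theorem} in Banach spaces (Theorem~\ref{thm:implicit}), a classical result which the paper merely quotes as a tool and does not prove. What you have written is instead a sketch of a proof of Theorem~\ref{thm:existence} (existence of a global maximizer of $\mathcal{J}$): you discuss coercivity of $-\mathcal{J}$, weak lower semicontinuity, the nonlinear term $\langle u,(\mathbf{H}+\mathbf{G})(\mathbf{I}-\mathbf{R})u\rangle$, and the countability assumption on $\Omega$. None of this bears on the implicit function theorem, which concerns the local solvability of $f(x,y)=0$ for $y$ as a function of $x$ near a point where the partial derivative $D_y f$ is an isomorphism.

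If your intention was to prove Theorem~\ref{thm:implicit}, you would need an entirely different argument (typically a contraction-mapping construction in $Y$, parametrized by $x$, using the invertibility of $D_y f(x_0,y_0)$). If instead you meant to prove Theorem~\ref{thm:existence}, then your sketch is broadly aligned with the paper's approach via Lemmas~\ref{lemma:wslc} and~\ref{lemma:coercive}, but it is attached to the wrong statement here.
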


\section{Study of a nonlinear Volterra equation}
\label{app:nonlinear_volterra}

\subsection{Definition and main results.}

In this section, we are interested in the following nonlinear Volterra equation.

\begin{definition}
Let $k: [0,\infty) \to [0,\infty)$ be a locally integrable function  such that for any $t_0 > 0$, we have
\begin{equation*}
\lim\limits_{t \to t_0} \int_0^{\infty} |k(t_0 - s) - k(t-s)| \, \dd s \to 0.
\end{equation*}
For $T>0$, we consider $f: [0, T] \to \mathbb{R}$ and $g: \mathbb{R} \to \mathbb{R}$. We write $(E_{k,f,g})$ the Volterra integral Equation given by
\begin{equation*}
x(t) + \int_0^t k(t-s) g\big(x(s)\big) \, \dd s = f(t) 
\end{equation*}
where $x: [0,T] \to \mathbb{R}$.
\end{definition}

We first state the following existence and uniqueness result.

\begin{theorem}
\label{thm:existence:E}
    Suppose that $g$ is Lipschitz continuous and let $T > 0$. Then, for every locally bounded function $f$ on $[0, T]$, there exists a unique solution to $(E_{k,f,g})$. Moreover, there exists a constant $C_T > 0$, depending only on the Lipschitz constant of $g$, on $T$, and on the kernel $k$, such that for any two bounded functions $f_1$ and $f_2$ on $[0, T]$, if $x_1$ and $x_2$ denote the corresponding solutions to $(E_{k,f_1,g})$ and $(E_{k,f_2,g})$, respectively, then we have
    \begin{equation}
    \label{eq:lip:E}
        \norm{x_1-x_2}_{\infty} \leq C_T \norm{f_1-f_2}_{\infty}.
    \end{equation}
\end{theorem}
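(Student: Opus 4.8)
The plan is to prove existence and uniqueness by a contraction/fixed-point argument on $C([0,T])$ with a suitably weighted (Bielecki-type) norm, exploiting the local integrability of $k$ and the Lipschitz property of $g$. First I would set up the solution operator: given $f$ bounded on $[0,T]$, define $\Phi : C([0,T]) \to C([0,T])$ by $\Phi(x)(t) := f(t) - \int_0^t k(t-s) g(x(s))\,\dd s$. The continuity assumption on $k$ — namely $\lim_{t\to t_0}\int_0^\infty |k(t_0-s)-k(t-s)|\,\dd s = 0$ — together with local integrability of $k$ ensures that $\Phi(x)$ is indeed continuous on $[0,T]$ whenever $x$ is bounded and measurable (so that $g(x(\cdot))$ is bounded), so $\Phi$ maps $C([0,T])$ into itself; this is the point where the hypothesis on $k$ is essential, since $k$ itself need not be continuous or bounded near $0$.

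Next I would establish the contraction estimate. Let $L_g$ denote the Lipschitz constant of $g$. For $x_1, x_2 \in C([0,T])$, we have
\begin{equation*}
    |\Phi(x_1)(t) - \Phi(x_2)(t)| \leq L_g \int_0^t |k(t-s)|\,|x_1(s) - x_2(s)|\,\dd s.
\end{equation*}
Introducing the weighted norm $\|x\|_\beta := \sup_{t\in[0,T]} e^{-\beta \Lambda(t)} |x(t)|$ where $\Lambda(t) := \int_0^t |k(u)|\,\dd u$ is nondecreasing and finite (by local integrability of $k$), one estimates
\begin{equation*}
    e^{-\beta\Lambda(t)} |\Phi(x_1)(t) - \Phi(x_2)(t)| \leq L_g \|x_1 - x_2\|_\beta \int_0^t |k(t-s)| e^{-\beta(\Lambda(t) - \Lambda(s))}\,\dd s \leq \frac{L_g}{\beta}\|x_1-x_2\|_\beta,
\end{equation*}
where the last inequality uses $\frac{\dd}{\dd s}\Lambda(s) = |k(s)|$ together with the change of variable making $\int_0^t |k(t-s)| e^{-\beta(\Lambda(t)-\Lambda(s))}\,\dd s \leq \int_0^{\Lambda(t)} e^{-\beta v}\,\dd v \leq 1/\beta$ (with the mild caveat that this bound on the kernel integral should be justified carefully since $\Lambda$ involves $|k(t-s)|$ rather than $|k(s)|$; alternatively one iterates $\Phi$ and bounds iterated kernels, which is the more robust route). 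Choosing $\beta > L_g$ makes $\Phi$ a strict contraction on the Banach space $(C([0,T]), \|\cdot\|_\beta)$, which is equivalent to the sup norm since $\Lambda$ is bounded on $[0,T]$. The Banach fixed-point theorem then yields a unique $x \in C([0,T])$ with $\Phi(x) = x$, i.e. a unique solution to $(E_{k,f,g})$.

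Finally, for the Lipschitz dependence \eqref{eq:lip:E}: if $x_1, x_2$ solve $(E_{k,f_1,g})$ and $(E_{k,f_2,g})$, subtracting gives $|x_1(t) - x_2(t)| \leq \|f_1 - f_2\|_\infty + L_g\int_0^t |k(t-s)|\,|x_1(s)-x_2(s)|\,\dd s$, and a generalized Grönwall inequality for Volterra-type kernels (or, equivalently, iterating the estimate and summing the resulting series of iterated kernel integrals, which converges because $\Lambda(T) < \infty$) yields $\|x_1 - x_2\|_\infty \leq C_T \|f_1 - f_2\|_\infty$ with $C_T$ depending only on $L_g$, $T$, and $k$ through $\Lambda(T)$. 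The main obstacle I anticipate is the bookkeeping around the kernel integral $\int_0^t |k(t-s)| e^{-\beta(\Lambda(t)-\Lambda(s))}\,\dd s$: because the natural weight is built from $\Lambda(s) = \int_0^s|k|$ while the kernel in the convolution is $k(t-s)$, the clean ODE-type bound $1/\beta$ is not completely immediate, so I would either symmetrize the argument or fall back on the classical iterated-kernel estimate $\int k(t-s_1)\cdots k(s_{n-1}-s_n) \leq \Lambda(T)^n/n!$, which makes $\sum_n L_g^n \Lambda(T)^n/n! < \infty$ and gives both the contraction (of some iterate $\Phi^n$) and the Lipschitz constant at once.
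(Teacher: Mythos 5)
Your Bielecki-weight approach, as written, has a gap that you yourself flag but then paper over with an incorrect estimate. The weight $\Lambda(t)=\int_0^t|k(u)|\,\dd u$ does not match the convolution structure: after substituting $r=t-s$ the relevant integral is $\int_0^t |k(r)|\,e^{-\beta\int_{t-r}^t|k(u)|\dd u}\,\dd r$, and since $k$ is singular near $0$ (indeed decreasing in the completely monotone case) one has $\int_{t-r}^t|k|\leq \int_0^r|k|$, so the exponential damping goes in the \emph{wrong} direction and the clean bound $1/\beta$ does not follow. You acknowledge this, but the fallback you offer is wrong: the iterated convolution kernel satisfies only $\int_0^T k^{*n}(u)\,\dd u\leq\Lambda(T)^n$ (Young), \emph{not} $\Lambda(T)^n/n!$. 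The $n!$ gain from the simplex is an ODE phenomenon, where the iterated integrand $b(s_1)\cdots b(s_n)$ is symmetric in $(s_1,\dots,s_n)$; for the convolution integrand $k(t-s_1)k(s_1-s_2)\cdots k(s_{n-1}-s_n)$ there is no such symmetry. Concretely, for $k(t)=t^{-1/2}$ (so $\Lambda(t)=2\sqrt{t}$) and $n=2$, one computes $\int_{u_1+u_2<t,\,u_i>0}u_1^{-1/2}u_2^{-1/2}\,\dd u=\pi t$, while $\Lambda(t)^2/2!=2t$, so $\pi t>2t$ and your claimed bound fails. Without the $1/n!$ the series $\sum L_g^n\Lambda(T)^n$ need not converge, so neither the contraction of $\Phi^n$ nor the Lipschitz constant $C_T$ is obtained by this route.

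The paper sidesteps all of this by a short-interval-plus-gluing argument: choose $\varepsilon>0$ with $L\int_0^\varepsilon k\leq 1/2$, so that the Volterra operator $\mathcal{T}_0 x(t)=\int_0^t k(t-s)g(x(s))\,\dd s$ is a $\tfrac12$-contraction on $L^\infty([0,\varepsilon])$, giving existence, uniqueness, and a $2$-Lipschitz inverse $(I+\mathcal{T}_0)^{-1}$ on that short interval; then assuming the result on $[0,T]$ one restarts the equation on $[T,T+\varepsilon]$, absorbing the history $\int_0^T k(t-s)g(x_T(s))\,\dd s$ into a modified forcing term $\tilde f$, glues the two solutions at $t=T$, and tracks the Lipschitz constant through the induction via $L_{T+\varepsilon}\leq 2(1+\|f\|_\infty L\int_0^T k\cdot L_T)$. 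This is elementary and works for any $k\in L^1_{\mathrm{loc}}$ satisfying the stated continuity-in-$L^1$ hypothesis, with no need for a clever weight or a resolvent estimate. If you want to salvage a one-shot Picard argument, you would need to show $\|k^{*n}\|_{L^1([0,T])}\to 0$ fast enough (which holds, e.g., for power-law kernels by Stirling, but requires a separate argument in general) or use the genuine Volterra resolvent as the weight; as written, your two alternatives both have holes.
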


For a given Lipschitz continuous function $g$, Theorem~\ref{thm:existence:E} implies that there exists an operator $\mathbf{S}_T : L^\infty([0,T]) \to L^\infty([0,T])$ such that for every locally bounded function $f$ on $[0,T]$, $x = \mathbf{S}_T f$ is the unique solution of $(E_{k,f,g})$. By Theorem~\ref{thm:existence:E}, we already know that this functional is Lipschitz continuous in $L^\infty([0,T])$. This means that there exists $c_T > 0$ such that 
\begin{equation*}
    c_T \norm{x_1 - x_2}_{L^\infty} \leq \norm{\mathbf{S}_T^{-1} x_1 - \mathbf{S}_T^{-2} x_2}_{L^\infty}.
\end{equation*}

\subsection{Proof of Theorem~\ref{thm:existence:E}}

Let $g$ be a Lipschitz continuous function, $L$ its Lipschitz constant and let $\varepsilon > 0$ be small enough so that 
\begin{equation*}
    \int_0^\varepsilon k(s) \, \dd s \leq \frac{1}{2L}.
\end{equation*}
The proof is split in two parts. First, we prove Theorem~\ref{thm:existence:E} on $[0, \varepsilon]$. Then, we show that if Theorem~\ref{thm:existence:E} is true on $[0,T]$, then it also holds on $[0, T+\varepsilon]$. Since $\bbR$ is Archimedean, we can then conclude by induction.\\

\textbf{Step 1.} 
We define
\begin{equation*}
\mathcal{T}_0 x(t) = \int_0^t k(t-s) g\big(x(s)\big) \, \dd s
\end{equation*}
for any function $x$ in $L^{\infty}([0,\varepsilon])$ and all $0 \leq t \leq \varepsilon$. This defines an operator $\mathcal{T}_0$ on $L^{\infty}([0,\varepsilon])$. Moreover, for all bounded functions $x_0$ and $x_1$ on $[0, \varepsilon]$, we have
\begin{align*}
|\mathcal{T}_0 x_1(t) - \mathcal{T}_0 x_2(t)| 
&\leq \int_0^t k(t-s) | g\big(x_1(s)\big) - g\big(x_2(s)\big)| \, \dd s
\\
&\leq L \int_0^t k(t-s) | x_1(s) - x_2(s)| \, \dd s
\\
&\leq L \int_0^\varepsilon k(s) \, \dd s \norm{x_1 - x_2}_{\infty}\\
&\leq \frac{1}{2} \norm{x_1 - x_2}_{\infty}.
\end{align*}
Thus, if $f$ is a bounded function on $[0, \varepsilon]$, the operator $\wt {\mathcal{T}}_0$ defined on $L^\infty([0, \varepsilon])$ by
\begin{equation*}
\wt {\mathcal{T}}_0 x(t) := f(t) -  {\mathcal{T}}_0 x(t)
\end{equation*}
is a contraction of $L^{\infty}([0,\varepsilon])$ and therefore admits a unique fixed point $x \in L^{\infty}([0,\varepsilon])$, which is also a solution of $(E_{k,f,g})$ on $[0, \varepsilon]$. This ensures that $I + \mathcal{T}_0$ is invertible, where $I$ stands here for the identity operator on $L^\infty([0, \varepsilon])$,  and therefore that $x = (I + \mathcal{T}_0)^{-1} f$.

We now prove that $(I + \mathcal{T}_0)^{-1}$ is $2$-Lipschitz continuous, which implies~\eqref{eq:lip:E} with $T = \varepsilon$. We consider $f_1, f_2$ two bounded functions on $[0, \varepsilon]$ and we write $x_1 = (I + \mathcal{T}_0)^{-1} f_1$ and $x_2 = (I + \mathcal{T}_0)^{-1} f_2$. We then have
\begin{align*}
    \norm{f_1 - f_2}_\infty 
    &= \norm{(I + \mathcal{T}_0)x_1 - (I + \mathcal{T}_0)x_2}_\infty 
\\
    &\geq \norm{x_1 - x_2}_\infty  - \norm{\mathcal{T}_0 x_1 - \mathcal{T}_0x_2}_\infty.
\end{align*}
We then obtain
\begin{align*}
    \norm{x_1 - x_2}_\infty &\leq \norm{f_1 - f_2}_\infty  + \norm{\mathcal{T}_0 x_1 - \mathcal{T}_0x_2}_\infty 
    \\
    &\leq \norm{f_1 - f_2}_\infty  + \frac{1}{2} \norm{x_1 - x_2}_\infty,
\end{align*}
and thus
\begin{equation*}
     \norm{x_1 - x_2}_\infty \leq 2\norm{f_1 - f_2}_\infty
\end{equation*}
which concludes the proof of the first step. \\

\textbf{Step 2.}
We now assume the result holds on $[0, T]$ and want to extend it on $[0, T+\varepsilon]$. To this end, we introduce the operator $\mathcal{T}$ acting on $L^\infty([T, T+\varepsilon])$ functions as follows
\begin{equation*}
\mathcal{T} x(t) = \int_T^t k(t-s) g\big(x(s)\big) \, \dd s, \quad T \le t \le T+\varepsilon.
\end{equation*}
Proceeding as previously, this operator in $\frac{1}{2}$-Lipschitz and $(I+\mathcal{T})^{-1}$ is invertible, where $I$ stands here for the identity operator on $L^\infty([T, T+\varepsilon])$. Moreover, we can also prove that $(I+\mathcal{T})^{-1}$ is $2$-Lipschitz continuous on $L^\infty([T, T+\varepsilon])$.

Now, let $f$ be a bounded function on $[0,T+\varepsilon]$. By assumption, we know that $(E_{k,f,g})$ admits a solution $x_T$ on $[0,T]$. Consider $\wt f \in L^\infty([T, T+\varepsilon])$ defined by
\begin{equation*}
    \wt f(t) = f(t) - \int_0^T k(t-s) g\big(x_T(s)\big) \, \dd s, \quad T \le t \le T + \varepsilon.
\end{equation*}
Setting $\wt x = (I+\mathcal{T})^{-1} \wt f$, we have for all $T \leq t \leq T+\varepsilon$
\begin{equation*}
    \wt x(t) + \int_T^t k(t-s) g\big(x(s)\big) \, \dd s = f(t) - \int_0^T k(t-s) g\big(x_T(s)\big) \, \dd s.
\end{equation*}
Since $x_T$ is a solution of $(E_{k,f,g})$ on $[0,T]$, we have that
$f(T) - \int_0^T k(T-s) g(x_T(s)) \, \dd s = x_T(T).$
We can then define without ambiguity 
\begin{equation*}
x(t) = 
\begin{cases}
x_T(t) & \text{ if } t \leq T,
\\
\wt x(t) & \text{ if } t \geq T
\end{cases}
\end{equation*}
which is a continuous function and is, by construction, a solution of $(E_{k,f,g})$ on $[0, T + \varepsilon]$.

It remains to prove~\eqref{eq:lip:E}. Consider $f_1, f_2$ two bounded functions on $[0, T+\varepsilon]$ and let $x_1$ and $x_2$ be the solutions of $(E_{k,f_1,g})$ and $(E_{k,f_2,g})$ respectively. Therefore, restricting the solutions on $[0,T]$, we know from the initial assumption that there exists a positive constant $C_T$ such that
\begin{equation*}
    \sup_{t \leq T} |x_1(t) - x_2(t)| \leq C_T \sup_{t \leq T} |f_1(t) - f_2(t)|.
\end{equation*}
Now, we define $\wt f_1$ and $\wt f_2$ for $t \in [T, T + \varepsilon]$ by
\begin{align*}
    \wt f_1(t) &= f_1(t) - \int_0^T k(t-s) g\big(x_1(s)\big) \, \dd s,
    \\
    \wt f_2(t) &= f_2(t) - \int_0^T k(t-s) g\big(x_2(s)\big) \, \dd s
\end{align*}
so that 
\begin{equation*}
\begin{split}
    (x_1(t))_{T \leq t \leq T+\varepsilon} &= (I+\mathcal{T})^{-1} \wt f_1,
\\
    (x_2(t))_{T \leq t \leq T+\varepsilon} &= (I+\mathcal{T})^{-1} \wt f_2.
\end{split}
\end{equation*}
We then conclude using that $(I+\mathcal{T})^{-1}$ is $2$-Lipschitz and that
\begin{align*}
    |\wt f_1(t) - \wt f_2(t)| 
    &
    \leq |f_1(t) - f_2(t)| + 
    \bigg| \int_0^T k(t-s) g\big(x_1(s)\big) \dd s - \int_0^T k(t-s) g\big(x_2(s)\big)  \dd s\bigg|
\\
    &
    \leq |f_1(t) - f_2(t)| + 
    \bigg| \int_0^T k(t-s) g\big(x_1(s)\big) \, \dd s - \int_0^T k(t-s) g\big(x_2(s)\big)  \dd s\bigg|
\\
    &
    \leq \norm{f_1 - f_2}_{\infty} + L \norm{x_1 - x_2}_{\infty} \bigg| \int_0^T k(t-s) \dd s\bigg|
\\
    &
    \leq \norm{f_1 - f_2}_{\infty} \bigg(1 + L \int_0^T k(t-s) \dd s \bigg).
\end{align*}

\section{Properties of the market resistance $r^u$}

\subsection{Proof of Lemma~\ref{lem:well_posed}}
\label{app:proof:lem:well_posed}

We first state the following lemma which follows directly from Theorem~\ref{thm:existence:E}.

\begin{lemma}
\label{lem:preliminary}
    Suppose that $\calU$ is Lipschitz continuous with linear growth. For each $T \geq 0$, there exists an operator $\mathbf{R}_T: \mathcal{C}([0, T]) \to \mathcal{C}([0, T])$ such that for each $f : [0, T] \to \bbR$, the function $r(s) = (\mathbf{R}_T f)(s)$ is the unique solution of
    \begin{equation}
    \label{eq:volterra:r}
    r(t)  = \calU \bigg( f(t) - \int_0^t G(t-s) r(s) \dd s \bigg).
    \end{equation}
    Moreover, $\mathbf{R}_T$ is Lipschitz continuous on $\mathcal{C}([0, t])$ and its Lipschitz constant depends only on $G$, $T$ and on the Lipschitz constant of $\calU$.
\end{lemma}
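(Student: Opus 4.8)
The plan is to deduce the statement from Theorem~\ref{thm:existence:E} by a change of unknown. Given $f \in \mathcal{C}([0,T])$ and a candidate solution $r$ of~\eqref{eq:volterra:r}, I would introduce the auxiliary function $y(t) := f(t) - \int_0^t G(t-s) r(s)\,\dd s$, so that~\eqref{eq:volterra:r} becomes $r = \calU(y)$. Substituting this back into the definition of $y$ yields
\begin{equation*}
    y(t) + \int_0^t G(t-s)\, \calU\big(y(s)\big)\,\dd s = f(t), \qquad t \in [0,T],
\end{equation*}
which is exactly $(E_{k,f,g})$ with kernel $k = G$ and nonlinearity $g = \calU$. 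Conversely, if $y$ solves $(E_{G,f,\calU})$ and one sets $r := \calU(y)$, then $\int_0^t G(t-s) r(s)\,\dd s = f(t) - y(t)$, hence $r = \calU\big(f - \mathbf{G}r\big)$ solves~\eqref{eq:volterra:r}. This sets up a bijection between solutions of~\eqref{eq:volterra:r} and solutions of $(E_{G,f,\calU})$.

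\textbf{Applying Theorem~\ref{thm:existence:E}.} Since $\calU$ is Lipschitz continuous, Theorem~\ref{thm:existence:E} applies to $(E_{G,f,\calU})$ (only the Lipschitz property is needed here, not the linear growth): for every bounded $f$ there is a unique solution $y = \mathbf{S}_T f$, and the map $f \mapsto y$ is Lipschitz on $L^\infty([0,T])$ with a constant $C_T$ depending only on $G$, $T$ and the Lipschitz constant $L$ of $\calU$. I would then define $\mathbf{R}_T f := \calU \circ (\mathbf{S}_T f)$. By the bijection above, $r = \mathbf{R}_T f$ is the unique solution of~\eqref{eq:volterra:r}. For the Lipschitz estimate, given $f_1, f_2$ with solutions $r_i = \calU(y_i)$ where $y_i = \mathbf{S}_T f_i$, one obtains $\norm{r_1 - r_2}_\infty \le L \norm{y_1 - y_2}_\infty \le L\, C_T \norm{f_1 - f_2}_\infty$, so $\mathbf{R}_T$ is Lipschitz with constant $L\,C_T$, which has the required dependence on $G$, $T$ and $L$.

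\textbf{Continuity transfer.} It remains to check that $\mathbf{R}_T$ actually maps $\mathcal{C}([0,T])$ into itself, i.e., that $y$, and hence $r = \calU(y)$, is continuous when $f$ is. This is the only point requiring genuine care, since Theorem~\ref{thm:existence:E} a priori delivers $y$ only in $L^\infty$. However $y$ is bounded, so $\calU(y)$ is bounded, and using the local integrability of $G$ together with the translation-continuity hypothesis $\lim_{t\to t_0}\int_0^\infty |G(t_0 - s) - G(t-s)|\,\dd s = 0$, dominated convergence shows that $t \mapsto \int_0^t G(t-s)\calU\big(y(s)\big)\,\dd s$ is continuous; since $f$ is continuous, so is $y$, and continuity of $\calU$ then gives continuity of $r$. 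Combining the three steps produces the operator $\mathbf{R}_T \colon \mathcal{C}([0,T]) \to \mathcal{C}([0,T])$ with all the stated properties.
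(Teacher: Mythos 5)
Your argument is correct and is essentially the paper's own proof: both reduce~\eqref{eq:volterra:r} to the canonical equation $(E_{G,f,\calU})$ via the substitution $y = f - \mathbf{G}r$, $r = \calU(y)$, then invoke Theorem~\ref{thm:existence:E} for existence, uniqueness, and the Lipschitz estimate $\norm{r_1-r_2}_\infty \le L\,C_T\norm{f_1-f_2}_\infty$. Your extra verification that $\mathbf{R}_T$ lands in $\mathcal{C}([0,T])$ (via the translation-continuity assumption on $G$) is a detail the paper leaves implicit, and your observation that linear growth is superfluous given Lipschitz continuity is accurate.
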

\begin{proof}
    First note that if $r$ is a solution of~\eqref{eq:volterra:r}, then $x(t) = f(t) - \int_0^t G(t-s) r(s) \dd s$ is solution of $(E_{k,f,\calU})$. Therefore Theorem~\ref{thm:existence:E} ensures that the solution of~\eqref{eq:volterra:r} is unique. Moreover, if $x$ is a  solution of $(E_{k,f,\calU})$, then $r(t) = \calU(x(t))$ is clearly a solution of~\eqref{eq:volterra:r}. This ensures that we can define an operator $\mathbf{R}_T: \mathcal{C}([0, T]) \to \mathcal{C}([0, T])$ as prescribed in Lemma~\ref{lem:preliminary}. The Lipschitz property of $\mathbf{R}_T$ follows directly from~\eqref{eq:lip:E}.
\end{proof}

To use this result, note that since $u \in \mathcal{L}^2$, the function $u_{.}(\omega)$ is in $L^2$ for almost all $\omega \in \Omega$. Fix now $\omega \in \Omega$ such that it is the case. Then we know that $f(\omega) = \mathbf{G} u(\omega)$ is a continuous function and thus using Lemma~\ref{lem:preliminary}, we can define $r^u(\omega) = \mathbf{R}_T f(\omega)$ which is solution of
\begin{equation*}
    r^u_t(\omega) = \calU \Big(\mathbf{G}\big(u(\omega)-r^{u}(\omega)\big)_{t}\Big).
\end{equation*}
Since this is done for almost all $\omega$, this allows us to build a stochastic process $r^u$. Note that by construction, this process is almost surely continuous. Moreover, it is also adapted, seeing that for each $t \in [0,T]$, we have
\begin{equation*}
    \big(r^u_s(\omega)\big)_{0 \leq s \leq t} = \mathbf{R}_t f(\omega) = \mathbf{R}_t \mathbf{G} u(\omega)
\end{equation*}
and $\mathbf{R}_t \mathbf{G} u(\omega)$ is $\calF_t$-measurable because $(u_s)_{s\leq t}$ is $\calF_t$-measurable and $\mathbf{G}$ is non-anticipative. Thus $r^u$ is adapted and continuous, and therefore it is progressively measurable. 

It remains to prove that this process belongs to $\mathcal{L}^2$. First, note that $\mathbf{R}_T 0 = 0$ because $\calU(0) =0$. Then, the Lipschitz property of $\mathbf{R}_T$ ensures that
\begin{equation*}
    \sup_{t \leq T} |r^u_t| \leq C \sup_{t \leq T} |\mathbf{G} u_t| 
\end{equation*}
for some constant $C$ depending only on the Lipschitz constant of $\calU$, $T$ and on $G$. Using Young's convolution inequality, we obtain
\begin{equation*}
    \sup_{t \leq T}  |\mathbf{G} u_t| \leq \norm{G}_{L^2([0,T])} \norm{u}_{L^2([0,T])}.
\end{equation*}
and therefore \begin{equation*}
    \E \Big[ \norm{r^u}_{L^2}^2 \Big] 
    \leq C^2 T \norm{G}_{L^2([0,T])}^2 \E[\norm{u}_{L^2}^2]
\end{equation*}
which is finite because $G \in L^2([0,T])$ and $\E[\norm{u}_{L^2}^2] < \infty$ as $u \in \mathcal{L}^2$.

\subsection{Proof of Lemma~\ref{lem:frechet:R}} \label{ss:proof_lemma_differentiability}

The proof of Lemma~\ref{lem:frechet:R} is based on the implicit function theorem recalled in Theorem~\ref{thm:implicit}. We apply this to study the Fréchet differentiability of $u \mapsto r^u$. For each $u \in \mathcal{L}^2$, recall Lemma~\ref{lem:well_posed} yields the existence and uniqueness of $r^u \in \mathcal{L}^2$ such that
\begin{equation*}
    r_{t}^{u} = \calU \Big( \big(\mathbf{G}_{2}(u-r^{u})\big)_{t} \Big),
\end{equation*}
defining unambiguously the operator $\mathbf{R}$ such that $r^u = \mathbf{R}(u)$. Let us introduce the operator $\mathbf{T} : \mathcal{L}^2 \times \mathcal{L}^2 \to \mathcal{L}^2$ by
\begin{equation*}
    \mathbf{T}(u,r) := r - \calU \big(\mathbf{G}(u-r)\big), \quad u,r \in \mathcal{L}^{2},
\end{equation*}
so that we always have
\begin{equation}
\label{eq:TR0}
    \mathbf{T}\big(u,\mathbf{R}(u)\big) = 0, \quad u \in \mathcal{L}^{2}.
\end{equation}
We want to apply the implicit function theorem to the operator $\mathbf{T}$ to get the Fréchet differentiability of $\mathbf{R}$. First we need to check that $\mathbf{T}$ is Fréchet differentiable, which is the case, since $\calU$ is differentiable, by composition of Fréchet differentiable functions. Moreover, explicit computations show that 
\begin{equation*}
    D\mathbf{T}(u,r)(s, v) = v - \calU' (\mathbf{G}(u-r))\mathbf{G}(s-v), \quad s,v \in \mathcal{L}^{2}.
\end{equation*}
Finally, when $u$ is fixed in $\mathcal{L}^2$, then 
\begin{equation*}
    v \mapsto D\mathbf{T}\big(u,\mathbf{R}(u)\big)(0, v) = v + \calU' \Big(\mathbf{G}\big(u-\mathbf{R}(u)\big)\Big) \mathbf{G} v
\end{equation*}
which is a Banach space isomorphism from $\mathcal{L}^2$ onto $\mathcal{L}^2$. Concretely, we check that for each process $w \in \mathcal{L}^2$, there exists a process $y \in \mathcal{L}^2$ such that 
\begin{equation}
\label{eq:goal:vw}
    y + \calU' (\mathbf{G}(u-r^u)) \mathbf{G} y = w.
\end{equation}
The existence and uniqueness of a stochastic process $v$ satisfying~\eqref{eq:goal:vw} is guaranteed by the following lemma.
\begin{lemma}
    \label{lemma:solution:linear_volterra}
    For each $T \geq 0$, there exists a continuous operator $\mathbf{Y}_T: \calC([0,T]) \times L^2([0,T]) \to L^2([0,T])$ such that for each continuous function $f$ on $[0,T]$ and each $g$ in $L^2([0,T])$, $y(t) = \mathbf{Y}_T(f,g)(t)$ is solution of
    \begin{equation}
    \label{eq:volterra_product}
        y(t) + f(t) \mathbf{G} y(t) = g(t).
    \end{equation}
    Moreover, the solution $y$ of~\eqref{eq:volterra_product} is unique and there exists a deterministic function $\psi : [0, \infty) \to [0, \infty) $ such that we have
\begin{equation}
\label{eq:bound:solution:linear_volterra}
    \norm{y}_{L^2([0,T])} \leq \psi(\norm{f}_{L^{\infty}} )\norm{g}_{L^2([0,T])}.
\end{equation}
\end{lemma}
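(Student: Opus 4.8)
The plan is to recognize \eqref{eq:volterra_product} as a linear Volterra equation of the second kind and to invert the associated operator quantitatively. Writing $\mathbf{M}_f$ for the multiplication operator $v \mapsto fv$, which is bounded on $L^2([0,T])$ with norm $\|f\|_\infty$, equation \eqref{eq:volterra_product} reads $(\mathbf{I} + \mathbf{M}_f\mathbf{G})y = g$, so the lemma amounts to showing that $\mathbf{I} + \mathbf{M}_f\mathbf{G}$ is boundedly invertible on $L^2([0,T])$ with inverse norm controlled by a deterministic function of $\|f\|_\infty$ (and of $G$ and $T$). I would carry this out in four steps: a local Neumann-series inversion on short intervals, a gluing argument to reach $[0,T]$, a Gronwall a priori estimate producing the clean constant $\psi$, and finally continuity of $\mathbf{Y}_T$ read off from that estimate.

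For the local step, I would first note that $G \in L^2([0,T]) \subset L^1([0,T])$ by Cauchy--Schwarz, with $\|G\|_{L^1([0,\varepsilon])} \le \sqrt{\varepsilon}\,\|G\|_{L^2([0,T])} \to 0$ as $\varepsilon \to 0$. By Young's convolution inequality, on $L^2$ of any subinterval of length $\varepsilon$ the operator $\mathbf{M}_f\mathbf{G}$ (with the convolution truncated to that interval) has norm at most $\|f\|_\infty\|G\|_{L^1([0,\varepsilon])}$; choosing $\varepsilon = \varepsilon(\|f\|_\infty,G)$ so that $\|f\|_\infty\sqrt{\varepsilon}\,\|G\|_{L^2([0,T])} \le \tfrac12$, the operator $\mathbf{I}+\mathbf{M}_f\mathbf{G}$ is invertible there by a Neumann series, with inverse norm $\le 2$. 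I would then solve \eqref{eq:volterra_product} successively on $[0,\varepsilon], [\varepsilon,2\varepsilon],\dots$: on $[k\varepsilon,(k+1)\varepsilon]$, the tail $f(t)\int_0^{k\varepsilon}G(t-s)y(s)\,\dd s$ depends only on the already-constructed part of $y$ and is, again by Young's inequality, an $L^2$ function of $t$; moving it to the right-hand side leaves an equation of the same form, solvable by the previous step. After $\lceil T/\varepsilon\rceil$ steps this yields a solution $y \in L^2([0,T])$. This is essentially the $L^2$ transcription of the contraction-plus-induction scheme used in the proof of Theorem~\ref{thm:existence:E}.

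Next, for the a priori bound, I would take any solution $y \in L^2([0,T])$ and start from $|y(t)| \le |g(t)| + \|f\|_\infty\int_0^t |G(t-s)|\,|y(s)|\,\dd s$; squaring, applying Cauchy--Schwarz to the convolution, and using $\int_0^t|G(t-s)|^2\,\dd s \le \|G\|_{L^2([0,T])}^2$ give
\begin{equation*}
    |y(t)|^2 \le 2|g(t)|^2 + 2\|f\|_\infty^2\|G\|_{L^2([0,T])}^2\int_0^t |y(s)|^2\,\dd s .
\end{equation*}
Integrating in $t$ and applying Gronwall's inequality yields $\|y\|_{L^2([0,T])}^2 \le 2\,e^{2\|f\|_\infty^2\|G\|_{L^2([0,T])}^2 T}\,\|g\|_{L^2([0,T])}^2$, which is \eqref{eq:bound:solution:linear_volterra} with $\psi(x) := \sqrt{2}\,\exp(x^2\|G\|_{L^2([0,T])}^2 T)$; taking $g \equiv 0$ also gives uniqueness, so $\mathbf{Y}_T(f,g):=y$ is well defined and linear in $g$. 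For continuity, I would observe that the difference $y_1 - y_2$ of the solutions for data $(f_1,g_1)$ and $(f_2,g_2)$ solves \eqref{eq:volterra_product} with coefficient $f_1$ and right-hand side $(g_1-g_2) - (f_1-f_2)\mathbf{G}y_2$, so the estimate just proved gives $\|y_1-y_2\|_{L^2([0,T])} \le \psi(\|f_1\|_\infty)\big(\|g_1-g_2\|_{L^2([0,T])} + \|f_1-f_2\|_\infty\,\|\mathbf{G}y_2\|_{L^2([0,T])}\big)$; since $\psi$ is locally bounded and $\mathbf{G}$ is bounded on $L^2([0,T])$, this is local Lipschitz continuity of $\mathbf{Y}_T$ on $\mathcal{C}([0,T])\times L^2([0,T])$, hence continuity.

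The one genuine difficulty is that $\mathbf{G}$ is not a contraction on all of $L^2([0,T])$ — the kernel $G$ can have large $L^1$ norm on $[0,T]$, and for the power-law kernel \eqref{eq:decay_specification} it is even unbounded near $0$ — so no single global fixed-point argument is available and one must exploit the causal (Volterra) structure through the interval decomposition. What keeps this harmless is that the step length $\varepsilon$ depends only on $\|f\|_\infty$ and $G$, not on $g$, so the number of gluing steps is uniformly controlled; the Gronwall estimate then upgrades the construction to the sharp, $g$-linear bound \eqref{eq:bound:solution:linear_volterra} with a purely deterministic constant, which is exactly what the later application to the Fréchet differentiability of $\mathbf{R}$ requires.
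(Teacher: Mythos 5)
Your proof is correct, and its skeleton matches the paper's: a local inversion on a short interval $[0,\varepsilon]$ with $\varepsilon$ calibrated to $\norm{f}_{L^\infty}$, a causal gluing argument to reach $[0,T]$, and then the same algebraic trick for joint continuity in $(f,g)$, namely rewriting the difference of solutions as a solution with coefficient $f_1$ and right-hand side $(g_1-g_2) - (f_1-f_2)\mathbf{G} y_2$ and invoking the $g$-Lipschitz estimate. Where you genuinely diverge is in how the quantitative bound \eqref{eq:bound:solution:linear_volterra} is obtained. The paper carries the Lipschitz constant of $g \mapsto y$ through the interval decomposition, producing the inductive recursion $L_{t+\varepsilon} \le 2\big(1 + \norm{f}_\infty \int_0^T G\, \dd s \; L_t\big)$ and hence a geometric-sum expression for $\psi$ after $\lfloor T/\varepsilon\rfloor$ steps. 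You instead prove the estimate by a self-contained Gronwall argument applied to $t\mapsto\int_0^t |y(s)|^2\,\dd s$, after squaring the pointwise inequality $|y(t)| \le |g(t)| + \norm{f}_\infty \int_0^t G(t-s)|y(s)|\,\dd s$ and using Cauchy--Schwarz on the convolution. This buys two things: an explicit, closed-form $\psi(x)=\sqrt{2}\exp\!\big(x^2\norm{G}_{L^2([0,T])}^2 T\big)$ rather than a geometric sum whose length depends on $\varepsilon(\norm{f}_\infty)$, and a cleaner derivation of uniqueness (apply the bound with $g\equiv 0$) that is logically independent of the construction, since the Gronwall estimate holds for \emph{any} $L^2$ solution. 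The slight cost is a small redundancy: once you have the Gronwall a priori bound, the gluing is needed only for existence; the paper, by contrast, extracts both existence and the bound from the same induction. Both routes are sound, and both ultimately rest on the Volterra structure of $\mathbf{G}$ plus $G\in L^2([0,T])$, which your proof invokes via Young's inequality and Cauchy--Schwarz in exactly the places where the paper does.

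One small observation in your favor: the paper's displayed smallness condition $\norm{f}_{L^2}\int_0^\varepsilon G(s)\,\dd s \le \tfrac12$ is a typo — the contraction on $L^2([0,\varepsilon])$ requires $\norm{f}_{L^\infty}$, not $\norm{f}_{L^2}$, which is what you use.
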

Note that $\mathbb{P}$-almost all $\omega$, the function $f(t) = \calU' (\mathbf{G}(u-r^u)(\omega, t))$ is continuous on $[0,T]$, bounded by $\|\mathcal{U}\|_{L^{\infty}}$, and $g = w(\omega, .)$ is well in $L^2([0,T])$. Therefore, the stochastic process $v$ defined on $[0,T]$ by
\begin{equation*}
    v(\omega, t) = \mathbf{Y}_T \bigg( \calU' \Big(\mathbf{G}\big(u(\omega, \cdot)-\mathbf{R}(u)(\omega, \cdot)\big)\Big), w(\omega, \cdot)\bigg)(t)
\end{equation*}
is the unique solution of~\eqref{eq:bound:solution:linear_volterra} for the specified function $f$ and $g$. Furthermore, for each $t$, $v(\cdot, t)$ is measurable as the composition of measurable mappings. Using the same arguments as in Section~\ref{app:proof:lem:well_posed}, we see that the stochastic process $v(\omega, \cdot)$ is adapted. In fact, for each $0 \leq t \leq T$, we have  
\begin{equation*}
    \big( v(\omega, s) \big)_{0\leq s \leq t} = \mathbf{Y}_t \bigg( \Big( 
    \calU' \big(\mathbf{G}u(\omega, s)-\mathbf{G}\mathbf{R}(u)(\omega, s)\big)\Big)_{0\leq s \leq t}, 
    \big( w(\omega, s) \big)_{0\leq s \leq t}
    \bigg )
\end{equation*}
by uniqueness of the solution of~\eqref{eq:volterra_product}.
The right-hand side of this identity is $\mathcal{F}_t$-measurable because $\mathbf{G}u$, $\mathbf{G}\mathbf{R}(u)$ and $w$ are adapted, and because $\mathbf{Y}_t$ is a continuous application. This means that the left hand side is also $\mathcal{F}_t$-measurable, and in particular $v(\omega, t)$ is $\mathcal{F}_t$-measurable, implicating that $v$ is adapted, Moreover, since it is continuous almost surely by construction, it is also progressively measurable. Using~\eqref{eq:bound:solution:linear_volterra}, we know that
\begin{equation*}
    \norm{v}_{L^2} \leq \psi(\norm{\calU'}_{L^{\infty}} )\norm{w}_{L^2}
\end{equation*}
almost surely, and since $\calU'$ is deterministic and bounded we get
\begin{equation*}
    \norm{v}_{\calL^2} \leq \psi(\norm{\calU'}_{L^{\infty}} )\norm{w}_{\calL^2}
\end{equation*}
which ensures that $v \in \calL^2$. Finally, the implicit function theorem from Theorem~\ref{thm:implicit} applies, which shows that $\mathbf{R}$ is Fr\'echet differentiable. It remains to explicitly compute $D \mathbf{R}$ to conclude the proof of Lemma~\ref{lem:frechet:R}. Differentiating~\eqref{eq:TR0}, we see that $D \mathbf{R}$ must satisfy
\begin{equation*}
    D\mathcal{T}\big(u, \mathbf{R}(u)\big)\big(v, D\mathbf{R}(v)\big) = 0,
\end{equation*}
and therefore
\begin{equation*}
    D\mathbf{R}(u) (v) =  \calU' \Big(\mathbf{G}\big(u-\mathbf{R}(u)\big)\Big)\mathbf{G}\big(v - D\mathbf{R}(u)(v)\big).
\end{equation*}
Finally, we have
\begin{align*}
    D\mathbf{R}(u) 
    &= \bigg(\mathbf{I} + \calU' \Big(\mathbf{G}\big(u-\mathbf{R}(u)\big)\Big)\mathbf{G}\bigg)^{-1} \calU' \Big(\mathbf{G}\big(u-\mathbf{R}(u)\big)\Big)\mathbf{G}
    \\
    &= \mathbf{I} - \bigg(\mathbf{I} + \calU' \Big(\mathbf{G}\big(u-\mathbf{R}(u)\big)\Big)\mathbf{G}\bigg)^{-1}.
\end{align*}

\subsection{Poof of Lemma~\ref{lemma:solution:linear_volterra}}

The proof of Lemma~\ref{lemma:solution:linear_volterra} is split into two parts. In the first part, we fix a continuous function $f$ on $[0,T]$ and prove that~\eqref{eq:volterra_product} admits a solution $y$ for each continuous function $g$ on $[0,T]$ and that the mapping $g \mapsto y$ is Lipschitz continuous with a Lipschitz constant that only depends on $\norm{f}_{L^\infty}$. In the second part of the proof, we show this can be used to conclude that the mapping $\mathbf{Y}_T$ is continuous.\\

\textbf{Step 1.} Suppose that $f$ is a continuous function on $[0,T]$. We introduce $\varepsilon > 0$ small enough so that
\begin{equation*}
    \norm{f}_{L^2} \int_0^\varepsilon G(s) \, \dd s \leq \frac{1}{2}.
\end{equation*}
Similarly to the proof of Theorem~\ref{thm:existence:E}, we can prove that the equation~\eqref{eq:volterra_product} can be solved uniquely on $[0,\varepsilon]$. Moreover, the mapping $\mathbf{V}_\varepsilon^f$ defined such that for each continuous function $g$, $y = \mathbf{V}_\varepsilon^f g$ is the unique solution of~\ref{thm:existence:E}, is $2$-Lipschitz continuous on $L^2([0, \varepsilon])$. Following again the proof of Theorem~\ref{thm:existence:E}, we see that we can then extend this solution to a solution on $[0, T]$. Furthermore, if we define $\mathbf{V}_t^f$ as $\mathbf{V}_\varepsilon^f$ on $L^2([0, t])$ instead of $L^2([0, \varepsilon])$, we see that $\mathbf{V}_t^f$ is $L_t$-Lipschitz continuous on $L^2([0, t])$ for some $L_t > 0$. Repeating the computations of the proof of Theorem~\ref{thm:existence:E}, we can check that for each $0 \leq t \leq T-\varepsilon$, we have
\begin{equation*}
    L_{t+\varepsilon} \leq 2 \bigg(1 + \norm{f}_\infty \int_0^T G(s) \dd s L_t\bigg).
\end{equation*}
By induction, we deduce that
\begin{equation*}
    L_T \leq 2 \sum_{k=0}^{n} b^k
\end{equation*}
where $b = 2 \norm{f}_\infty \int_0^T G(s) \dd s$ and $n=\lfloor T / \varepsilon \rfloor$. Therefore, we deduce that there exists an increasing function $\psi$ such that $\mathbf{V}_T^f$ is $\psi(\norm{f}_\infty)$-Lipschitz continuous. In particular, we obtain that for each continuous function $g$, we have
\begin{equation*}
    \norm{\mathbf{V}_T^f g}_{L^2([0,T])} \leq \psi(\norm{f}_{L^\infty([0,T])}) \norm{g}_{L^2([0,T])}
\end{equation*}
because $\mathbf{V}_T^f (0) = 0$, which proves~\eqref{eq:bound:solution:linear_volterra}.\\

\textbf{Step 2.} Using the results of the first step, we see that we can always take
\begin{equation*}
    \mathbf{Y}_T(f,g) = \mathbf{V}_T^f g.
\end{equation*}
It remains to prove that this application is continuous. To do this, we prove that for $M > 0$, $\mathbf{Y}_T$ is Lipschitz-continuous on $\mathcal{B}_{\mathcal{C}([0,T])}(0, M) \times \mathcal{B}_{L^2([0,T])}(0, M)$ where $\mathcal{B}_{E}(x, \epsilon)$ denotes the ball in $E$ centered in $x$ with radius $\epsilon$. From the first step, we know that if $f \in \mathcal{B}_{\mathcal{C}([0,T])}(0, M)$, the application $g \mapsto \mathbf{Y}_T(f,g)$ is $\psi(M)$-Lipschitz continuous. Now let $f_1$ and $f_2$ be two continuous functions bounded by $M$ and let $g \in L^2([0,T])$ with $\norm{g}_{L^2([0,T])} \leq M$. We write $y_i = \mathbf{Y}_T(f_i,g)$. By definition, we have for all $0 \leq t \leq T$ 
\begin{equation*}
    y_2(t) + f_2(t) \mathbf{G}y_2(t) = g(t)
\end{equation*}
which is equivalent to
\begin{equation*}
    y_2(t) + f_1(t) \mathbf{G}y_2(t) = g(t) - (f_2(t)-f_1(t)) \mathbf{G}y_2(t).
\end{equation*}
Therefore, we have
\begin{equation*}
    y_2 = \mathbf{Y}_T(f_1,g - (f_2-f_1) \mathbf{G}y_2).
\end{equation*}
This implies in particular that
\begin{align*}
    \norm{y_2 - y_1}_{L^2([0,T])} 
    &\leq
    \norm{\mathbf{Y}_T(f_1,g - (f_2-f_1) \mathbf{G}y_2) - \mathbf{Y}_T(f_1,g)}_{L^2([0,T])} 
\\
    &\leq
    \psi(M) \norm{(f_2-f_1) \mathbf{G}y_2)}_{L^2([0,T])}
    \\
    &\leq
    \psi(M) \norm{f_2-f_1}_{L^\infty([0,T])} \norm{\mathbf{G}}_{L^2([0,T])}
    \norm{y_2}_{L^2([0,T])}.
\end{align*}
Therefore, we have that 
\begin{equation*}
    \norm{\mathbf{Y}_T(f_2,g) - \mathbf{Y}_T(f_1,g)}_{L^2([0,T])} 
    \leq
    C \norm{f_2-f_1}_{L^\infty([0,T])} 
\end{equation*}
with $C = M \psi(M) \norm{\mathbf{G}}_{L^2([0,T])}$ and thus $f \mapsto \mathbf{Y}_T(f,g)$ is also Lipschitz continuous on $\mathcal{B}_{\mathcal{C}([0,T])}(0, M)$. Since both coordinates are Lipschitz on bounded sets, we deduce that $\mathbf{Y}_T$ is jointly continuous on $\mathcal{C}([0,T]) \times L^2([0,T])$, which completes the proof.

\section{Proof of Theorem~\ref{thm:linear_case}} 
\label{s:proof_linear_case}

First, using the linearity of $\calU$, Equation~\eqref{eq:r_u} rewrites 
\begin{equation*}
    r^u + a \mathbf{G} r^u = a \mathbf{G} u
\end{equation*}
and therefore 
\begin{equation*}
     r^u = \mathbf{R}(u) = a (\mathbf{I} + a \mathbf{G})^{-1} \mathbf{G} u.
\end{equation*}
Note that the invertibility of $\mathbf{I} + a \mathbf{G}$ is guaranteed by the existence and uniqueness of $r^u$ from Lemma~\ref{lem:well_posed}, or alternatively by a direct application of Theorem~\ref{thm:existence:E}.
Now, algebraic computations ensure that
\begin{equation*}
    \mathbf{I} - \mathbf{R} = \mathbf{I} - a (\mathbf{I} + a \mathbf{G})^{-1} \mathbf{G} = (\mathbf{I} + a \mathbf{G})^{-1},
\end{equation*}
so that the gain functional~\eqref{eq:operator:J} rewrites
\begin{equation*}
\mathcal{J}(u) = 
    \langle u, \alpha \rangle - \frac{\gamma}{2} \|u\|^2 - \langle u, (\mathbf{H} + \mathbf{G}) \circ (\mathbf{I} + a \mathbf{G})^{-1}u \rangle - \frac{\phi}{2} \norm{\bfX u}^{2}
    - \frac{\varrho}{2} \E[(\bfX u)_T^2] + X_0 \mathbb{E}[S_T].
\end{equation*}
Coercivity of the gain functional readily follows from the positive semi-definiteness of the operator $(\mathbf{H} + \mathbf{G}) \circ (\mathbf{I} + a \mathbf{G})^{-1}$, which is guaranteed hereafter by Lemma~\ref{L:psd_linear_resistance}. Moreover, to prove the $\gamma-$strong convexity of $-\mathcal{J}$, we prove equivalently the convexity of the functional
\begin{equation*}
    \widetilde{\mathcal{J}}(u) := - \mathcal{J}(u) - \frac{\gamma}{2} \norm{u}^2,\quad u \in \mathcal{L}^{2},
\end{equation*}
see \citet[Proposition 10.8]{bauschke2017}.
\newpage
By straightforward calculus, $\mathcal{J}$ is Gâteaux differentiable at any $u \in \mathcal{L}^{2}$, with Gâteaux gradient
\begin{align*} 
    \nabla \mathcal{J}(u) = &\alpha - X_0 \big(\phi(T- .) + \varrho\big) - \gamma u - \big((\mathbf{H} + \mathbf{G})\circ (\mathbf{I} + a \mathbf{G})^{-1} + (\mathbf{I} + a \mathbf{G}^*)^{-1} \circ (\mathbf{H}^* + \mathbf{G}^*)  \big) u\\
    & - (\mathbf{H}_{\phi,\varrho}
    + \mathbf{H}_{\phi,\varrho}^{*}) u.
\end{align*}
Therefore
\begin{equation*}
    \nabla \widetilde{\mathcal{J}}(u) = - \alpha - X_0 \big(\phi(T- .) + \varrho\big) + \big((\mathbf{H} + \mathbf{G})\circ (\mathbf{I} + a \mathbf{G})^{-1} + (\mathbf{I} + a \mathbf{G}^*)^{-1} \circ (\mathbf{H}^* + \mathbf{G}^*) \big) u + (\mathbf{H}_{\phi,\varrho}
    + \mathbf{H}_{\phi,\varrho}^{*}) u
\end{equation*}
By \citet[Proposition 17.10]{bauschke2017}, it is sufficient to show the monotonicity of $\nabla\widetilde{\mathcal{J}}$ to prove the convexity of $\widetilde{\mathcal{J}}$, i.e.,
\begin{equation} \label{eq:monotonicity_tilde_J}
    \langle u-v, \nabla \widetilde{\mathcal{J}}(u)-\nabla \widetilde{\mathcal{J}}(v)\rangle\ge0,\quad u,\, v\in \mathcal{L}^{2}.
\end{equation}
By linearity, property~\eqref{eq:monotonicity_tilde_J} is equivalent to the positive semi-definiteness of the operators $\mathbf{H}_{\phi,\varrho}$ and $(\mathbf{H} + \mathbf{G}) \circ (\mathbf{I} + a \mathbf{G})^{-1}$ which are proven in \citet[Lemma 4.3]{abi2025fredholm} and Lemma~\ref{L:psd_linear_resistance} respectively. Consequently, \citet[Theorem 4.1 (ii)]{abi2025fredholm} yields the existence and uniqueness of the optimal trading strategy $\hat{u}$ satisfying~\eqref{eq:optimal_trading_problem} and the first-order condition $\nabla \mathcal{J}(\hat{u}) = 0$ stated in~\eqref{eq:FOC:linear}. Finally, to complete the argument, we prove the following lemma.
\begin{lemma} \label{L:psd_linear_resistance}
    The operator $(\mathbf{H} + \mathbf{G}) \circ (\mathbf{I} - \mathbf{R})$ is positive semi-definite, i.e.,
    \begin{equation*}
        \langle u, \, (\mathbf{H} + \mathbf{G}) \circ (\mathbf{I} - \mathbf{R}) u \rangle \geq 0, \quad u \in \mathcal{L}^2.
    \end{equation*}
\end{lemma}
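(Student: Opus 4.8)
The plan is to restrict to the linear regime $\calU(x)=ax$, $a\ge0$, of Theorem~\ref{thm:linear_case} and exploit the identity $\mathbf{I}-\mathbf{R}=(\mathbf{I}+a\mathbf{G})^{-1}$ established in the first part of that proof. Writing $w:=(\mathbf{I}-\mathbf{R})u=u-r^u$, the linear fixed-point relation $r^u=a\mathbf{G}(u-r^u)$ reads $r^u=a\mathbf{G}w$, so $u=w+a\mathbf{G}w$ and therefore
\[
\langle u,(\mathbf{H}+\mathbf{G})(\mathbf{I}-\mathbf{R})u\rangle=\langle w,(\mathbf{H}+\mathbf{G})w\rangle+a\,\langle\mathbf{G}w,(\mathbf{H}+\mathbf{G})w\rangle .
\]
Since $a\ge0$, it is enough to show that both inner products on the right are nonnegative. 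The first is the familiar ``no transient price manipulation'' inequality for the completely monotone kernel $K:=\kappa_\infty+G$; the second is the genuinely new term. I want to stress that one cannot dispatch it by a naive ``product of positive operators is positive'' argument: causal Volterra convolutions of positive type do not compose into operators of positive type (already $\mathbf{E}_0^2$, convolution with the kernel $t$, fails to be positive semi-definite), so a structural argument is required.

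To treat both terms at once I would decompose spectrally through Bernstein--Widder (Theorem~\ref{thm:bernstein_widder}). Under Assumptions~\ref{assumption:G}--\ref{assumption:H} both $G$ and $K=\kappa_\infty+G$ are completely monotone and integrable on $[0,T]$, hence admit representations $G(t)=\int_{[0,\infty)}e^{-\lambda t}\mu(\dd\lambda)$ and $K(t)=\int_{[0,\infty)}e^{-\lambda t}\nu(\dd\lambda)$ for $\sigma$-finite positive Borel measures with $\nu=\kappa_\infty\delta_0+\mu$. Introducing the exponential convolution operators $(\mathbf{E}_\lambda w)_t:=\int_0^t e^{-\lambda(t-s)}w_s\,\dd s$ (so $\mathbf{H}=\kappa_\infty\mathbf{E}_0$), a Fubini/Bochner argument---licit because $\int_{[0,\infty)}\tfrac{1-e^{-\lambda T}}{\lambda}\,\nu(\dd\lambda)=\int_0^T K(t)\,\dd t<\infty$, and likewise for $\mu$---yields $\mathbf{G}w=\int_{[0,\infty)}\mathbf{E}_\lambda w\,\mu(\dd\lambda)$ and $(\mathbf{H}+\mathbf{G})w=\int_{[0,\infty)}\mathbf{E}_\lambda w\,\nu(\dd\lambda)$ in $\mathcal{L}^2$. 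This turns both quantities into (double) integrals of the scalars $\langle\mathbf{E}_\lambda w,\mathbf{E}_{\lambda'}w\rangle$ against the product measures, so the whole lemma reduces to the positivity of these pairings.

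The heart of the matter, and the step I expect to be the main obstacle, is the two-parameter inequality $\langle\mathbf{E}_\lambda w,\mathbf{E}_{\lambda'}w\rangle\ge0$ for all $\lambda,\lambda'\ge0$ and all $w\in\mathcal{L}^2$. I would prove it pathwise: setting $y:=\mathbf{E}_\lambda w$ and $z:=\mathbf{E}_{\lambda'}w$, these are absolutely continuous on $[0,T]$ with $y_0=z_0=0$, $y'=w-\lambda y$, $z'=w-\lambda' z$; integrating $\tfrac{\dd}{\dd t}(y_tz_t)=w_t(y_t+z_t)-(\lambda+\lambda')y_tz_t$ over $[0,T]$, together with $\int_0^T w_ty_t\,\dd t=\tfrac12 y_T^2+\lambda\int_0^T y_t^2\,\dd t$ and its analogue for $z$ (obtained by integrating $\tfrac{\dd}{\dd t}(\tfrac12 y_t^2)$), gives after rearrangement the clean identity
\[
(\lambda+\lambda')\int_0^T y_tz_t\,\dd t=\tfrac12(y_T-z_T)^2+\lambda\int_0^T y_t^2\,\dd t+\lambda'\int_0^T z_t^2\,\dd t\ \ge\ 0 ,
\]
so $\int_0^T y_tz_t\,\dd t\ge0$ when $\lambda+\lambda'>0$, and trivially when $\lambda=\lambda'=0$ (then $y=z$); taking expectations gives the $\mathcal{L}^2$ claim.

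Finally I would assemble the pieces: the $\lambda'=\lambda$ case of the identity gives $\langle w,\mathbf{E}_\lambda w\rangle=\tfrac12\E[(\mathbf{E}_\lambda w)_T^2]+\lambda\|\mathbf{E}_\lambda w\|^2\ge0$, hence $\langle w,(\mathbf{H}+\mathbf{G})w\rangle=\int_{[0,\infty)}\langle w,\mathbf{E}_\lambda w\rangle\,\nu(\dd\lambda)\ge0$; and $\langle\mathbf{G}w,(\mathbf{H}+\mathbf{G})w\rangle=\int_{[0,\infty)}\!\int_{[0,\infty)}\langle\mathbf{E}_\lambda w,\mathbf{E}_{\lambda'}w\rangle\,\mu(\dd\lambda)\,\nu(\dd\lambda')\ge0$ by the previous step. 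Plugging these into the displayed decomposition of $\langle u,(\mathbf{H}+\mathbf{G})(\mathbf{I}-\mathbf{R})u\rangle$ closes the argument. Everything except the two-parameter positivity is routine bookkeeping (integrability of the Bochner integrals, measurability and adaptedness of $\mathbf{E}_\lambda w$, and the Fubini interchanges), which I would handle exactly as in the proof of Lemma~\ref{lem:well_posed}.
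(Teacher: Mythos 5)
Your proof is correct and runs along the same essential axis as the paper's: decompose via the identity $\mathbf{I}-\mathbf{R}=(\mathbf{I}+a\mathbf{G})^{-1}$ in the linear regime, expand the quadratic form into four terms, and handle the cross term $\langle \mathbf{G}w, \mathbf{H}w\rangle$ via Bernstein--Widder and a pathwise energy identity for exponential kernels. Where you genuinely depart is in the degree of unification: the paper dispatches the terms $\langle \mathbf{H}v, v\rangle$, $\langle \mathbf{G}v, v\rangle$, $a\|\mathbf{G}v\|^2$ by separately invoking positive semi-definiteness of $\mathbf{H}$ and $\mathbf{G}$, and reserves the Bernstein--Widder reduction only for the fourth term $a\langle \mathbf{G}v, \mathbf{H}v\rangle$, working exclusively with the pair ``exponential $\times$ constant'' (i.e.\ your $\lambda'=0$ case). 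You instead prove the full two-parameter inequality $\langle \mathbf{E}_\lambda w, \mathbf{E}_{\lambda'}w\rangle\ge 0$ for all $\lambda,\lambda'\ge 0$ and then obtain every term at once from $\langle w,(\mathbf{H}+\mathbf{G})w\rangle=\int\langle w,\mathbf{E}_\lambda w\rangle\,\nu(\dd\lambda)$ and $\langle \mathbf{G}w,(\mathbf{H}+\mathbf{G})w\rangle=\iint\langle\mathbf{E}_\lambda w,\mathbf{E}_{\lambda'}w\rangle\,\mu(\dd\lambda)\nu(\dd\lambda')$. Your two-parameter identity
\[
(\lambda+\lambda')\int_0^T y_t z_t\,\dd t=\tfrac12(y_T-z_T)^2+\lambda\int_0^T y_t^2\,\dd t+\lambda'\int_0^T z_t^2\,\dd t
\]
is exactly the paper's computation when $\lambda'=0$ (their $z=\mathbf{1}v$, $y=\mathbf{G}_\lambda v$), so you have recognised and exploited a symmetry the paper does not state. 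This buys you a cleaner, self-contained treatment that also re-proves the positive semi-definiteness of $\mathbf{G}$ along the way rather than citing it.

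One technical point is under-justified. You license the Fubini/Bochner interchange by noting $\int_{[0,\infty)}\frac{1-e^{-\lambda T}}{\lambda}\,\nu(\dd\lambda)=\int_0^T K(t)\,\dd t<\infty$, but this controls only the $L^1$-type behaviour of the family $(\mathbf{E}_\lambda)$, whereas what the interchange for the double integral of scalar products actually requires is $\int\|\mathbf{E}_\lambda w\|\,\mu(\dd\lambda)<\infty$, and the operator norm of $\mathbf{E}_\lambda$ on $L^2([0,T])$ decays only like $\lambda^{-1/2}$ as $\lambda\to\infty$, not $\lambda^{-1}$. Since $\int \lambda^{-1}\,\mu(\dd\lambda)<\infty$ does not imply $\int\lambda^{-1/2}\,\mu(\dd\lambda)<\infty$, the stated bound is not sufficient when $\mu$ is merely $\sigma$-finite. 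The paper faces the same difficulty and resolves it by a monotone truncation $\mu_n=\mu\mathbbm{1}_{A_n}$ together with dominated convergence in $L^2$; you would need to insert that step (or verify the $L^2$ integrability directly, which does hold for the power-law specification since $\nu>1/2$, but not from the $L^1$ bound alone). This is a fixable omission, not a flaw in the strategy, and you flag the bookkeeping as needing attention, but the specific justification you offer for it does not close the gap.
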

\begin{proof}
    Fix $u \in \mathcal{L}^{2}$. Since $\mathbf{I} - \mathbf{R} = (\mathbf{I} + a \mathbf{G})^{-1}$, we denote $v = (\mathbf{I} + a \mathbf{G})^{-1} u$ and we prove that
\begin{equation*}
    \langle (\mathbf{I} + a \mathbf{G}) v, \, (\mathbf{H} + \mathbf{G}) v \rangle \geq 0.
\end{equation*}
Expanding the inner product gives
\begin{equation*}
\langle (\mathbf{I} + a \mathbf{G}) v, \, (\mathbf{H} + \mathbf{G}) v \rangle 
 = \langle v, \boldsymbol{H}v\rangle + \langle v, \boldsymbol{G}v\rangle + a\norm{\boldsymbol{G}v}_{L^2([0,T])}^2 + a\langle \boldsymbol{G}v, \boldsymbol{H}v\rangle.
\end{equation*}
By the positive semi-definite property of $\boldsymbol{H}$ and $\boldsymbol{G}$, the first three terms are nonnegative. It remains then to show that
\begin{equation}
\label{eq:goal:24}
\langle \boldsymbol{H}v, \boldsymbol{G}v\rangle = \langle \boldsymbol{V}v,v\rangle \geq 0
\end{equation}
where
\begin{equation*}
\boldsymbol{V} := \boldsymbol{G}^* \circ \boldsymbol{H}.
\end{equation*}
We first prove~\eqref{eq:goal:24} holds for when $G$ is an exponential kernel, and extend the result to all admissible kernels as specified in Assumption~\ref{assumption:G}.

\paragraph{Exponential kernel case.} Let $\lambda > 0$ and assume that $G = G_\lambda$ given by 
\begin{equation} \label{eq:exp_kernel_for_proof}
G_\lambda(t) := e^{-\lambda t}, \quad t \geq 0.
\end{equation}
Applying Fubini, we obtain for any $u \in L^{2}([0,T])$
\begin{align*}
    \boldsymbol{V} u(t) &= ( \boldsymbol{G}^* \circ \boldsymbol{H} u )(t) = \int_{0}^{t} \frac{\kappa_{\infty}}{\lambda} u(r) \dd r + \int_{t}^{T} \frac{\kappa_{\infty}}{\lambda} e^{-\lambda(r-t)} u(r) \dd r - \int_{0}^{T} \frac{\kappa_{\infty}}{\lambda} e^{-\lambda(T-t)} u(r) \dd r \\
    &= \frac{\kappa_{\infty}}{\lambda}\big(\mathbf{1}u(t) + \boldsymbol{G}^*u(t) - e^{-\lambda (T-t)} \mathbf{1}u(T)\big)
\end{align*}
where $\mathbf{1}$ is the operator given by $\mathbf{1}u(t) = \int_0^t u(s) \dd s$ for every $u \in L^2([0,T])$ and $0 \le t \le T$. We then have 
\begin{align*}
    \langle \boldsymbol{V}v, v\rangle &= \frac{\kappa_{\infty}}{\lambda} \bigg(\langle \mathbf{1}v,v\rangle+ \langle \boldsymbol{G}^* v,v\rangle - e^{-\lambda T} \int_0^T u(t) \dd t \int_0^T e^{\lambda t} u(t) \dd t \bigg)\\
    &= \frac{\kappa_{\infty}}{\lambda} \bigg(\langle \mathbf{1}v,v\rangle+ \langle \boldsymbol{G} v,v\rangle - e^{-\lambda T} \int_0^T u(t) \dd t \int_0^T e^{\lambda t} u(t) \dd t \bigg).
\end{align*}
We define
\begin{equation*}
    y(t) := \boldsymbol{G}v(t) = \int_0^t e^{-\lambda (t-s)} v(s) \dd s \quad \text{and} \quad z(t) := \mathbf{1}v(t) = \int_0^t v(s) \dd s,
\end{equation*}
so that $y{'} + \lambda y = v$ and $z' = v$. We then write
\begin{equation*}
    \langle \boldsymbol{V}v, v \rangle = \frac{\kappa_{\infty}}{\lambda} \bigg(\int_0^T z(t)v(t) \dd t + \int_0^T y(t)v(t) \dd t     - z(T)y(T)\bigg).
\end{equation*}
Furthermore, we have
\begin{align*}
    \int_0^T z(t)v(t) \dd t &= \int_0^T z(t)z'(t) \dd t = \frac{1}{2}z(T)^2,\\
    \int_0^T y(t)v(t)\dd t &= \int_0^T y'(t)y(t) \dd t + \lambda \int_0^T y(t)^2 \dd t = \frac{1}{2} y(T)^2 + \lambda \int_0^T y(t)^2 \dd t.
\end{align*}
Combining these identities, we get
\begin{align*}
    \langle \boldsymbol{V}v,v\rangle & = \frac{\kappa_{\infty}}{\lambda}\bigg(\frac{1}{2} z(T)^2 + \frac{1}{2}y(T)^2 + \lambda \int_0^T y(t)^2 \dd t -z(T)y(T)\bigg) \\
    & = \frac{\kappa_{\infty}}{2\lambda} \bigg(\big(z(T) - y(T)\big)^2 + 2 \lambda \int_0^T y(t)^2 \dd t \bigg) \ge 0
\end{align*}
which proves~\eqref{eq:goal:24} when  $G = G_\lambda$.

\paragraph{Proof in the general case.} First recall that since $G$ is a completely monotone kernel in $L^2$, there exists a $\sigma$-finite measure $\mu$ such that
\begin{equation*}
    G(t) = \int_0^\infty e^{-\lambda t} \mu(\dd \lambda), \quad t \ge 0,
\end{equation*}
see Theorem \ref{thm:bernstein_widder}. When $\mu$ is finite, we can apply Fubini's theorem which ensures that
\begin{equation*}
(\boldsymbol{G}v)(t) = \int_0^t \bigg(\int_{0}^{\infty} e^{-\lambda s} \mu(\dd\lambda)\bigg) v(t-s) \dd s 
= \int_{0}^{\infty} (\boldsymbol{G}_\lambda v)(t) \, \mu(\dd \lambda),
\end{equation*}
where $\boldsymbol{G}_\lambda$ is the operator whose kernel is given by~\eqref{eq:exp_kernel_for_proof}.
Applying Fubini's theorem again gives
\begin{equation*}
\langle \boldsymbol{H}v, \boldsymbol{G}v\rangle 
=
\int_{0}^{\infty} \langle \boldsymbol{H}v, \boldsymbol{G}_\lambda v\rangle \, \mu(\dd \lambda).
\end{equation*}
We've already proved that $\langle \boldsymbol{H}v, \boldsymbol{G}_\lambda v\rangle \geq 0$ for all $\lambda > 0$ and therefore~\eqref{eq:goal:24} holds. \\

When $\mu$ is infinite, we first consider $(A_n)_n$ an increasing sequence of subsets on $[0, \infty)$ such that $A_n \to [0, \infty)$ and $\mu(A_n) < \infty$. We then set $\mu_n = \mu \1_{A_n}$ and 
\begin{equation*}
G^{(n)}(t) = \int_0^\infty e^{-t\lambda} \, \mu_n(\dd \lambda).
\end{equation*}
This kernel is an approximation of $G$. In fact, we have
\begin{equation*}
     0 \leq G^{(n)}(t) \leq G(t)
     \qquad \text{ and } \qquad
     \lim\limits_{n \to \infty} G^{(n)}(t) = G(t)
\end{equation*}
for all $t \ge 0$.
Since $G$ is in $L^2$, the dominated convergence theorem applies and we have
\begin{equation}
\label{eq:conv:Gn}
\int_0^T \big(G^{(n)}(s) - G(s)\big)^2 \dd s \to 0.
\end{equation}
Moreover, $G^{(n)}$ is a completely monotone kernel and in $L^2$ because it is bounded by $G$. Thus, as we have already proved, we know that $\langle \boldsymbol{H}v, \boldsymbol{G}^{(n)} v\rangle \geq 0$ where $\boldsymbol{G}^{(n)}$ is the convolution operator associated with $G^{(n)}$. To conclude, it remains to show that $\boldsymbol{G^{(n)}} v \to \boldsymbol{G} v$ in $L^2([0,T])$. In fact, we have
\begin{align*}
\norm{\boldsymbol{G^{(n)}} v - \boldsymbol{G} v}^2
 &=
 \int_0^T
 |
 (\boldsymbol{G^{(n)}} v - \boldsymbol{G} v)(t)|^2
  \dd t
 \\
 &= 
 \int_0^T
 \bigg| \int_0^t v(t-s)\big(G^{(n)}(s) - G(s)\big) \dd s \bigg|^2
 \dd t
 \\
 &\leq
 T
 \norm{u}^2 \int_0^T \big(G^{(n)}(s) - G(s)\big)^2 \dd s
 \end{align*}
 which converges to $0$ by~\eqref{eq:conv:Gn}.

\end{proof}

\section{Proof of Theorem~\ref{thm:existence}}

\subsection{Outline of the proof}

Since $\mathcal{L}^2$ is a Hilbert space, an application of \citet[Theorem 1.2, Chapter 1]{struwe2000variational} ensures the existence of a maximiser of $\mathcal{J}$ provided $-\mathcal{J}$ is coercive and weakly lower semi-continuous. Therefore, we can conclude using the following two lemmas.

\begin{lemma}
    \label{lemma:wslc}
    Suppose that $\Omega$ is countable or finite and that Assumptions~\ref{assumption:H},~\ref{assumption:G} and~\ref{assumption:U} hold. Then $\calJ$ is weakly continuous in $L^2$.
\end{lemma}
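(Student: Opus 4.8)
The plan is to establish that $-\calJ$ is weakly sequentially lower semicontinuous on $\calL^2$ (equivalently, $\calJ$ is weakly sequentially upper semicontinuous) --- the form in which the lemma is invoked through Struwe's variational principle in the proof of Theorem~\ref{thm:existence}. First split
\begin{equation*}
\calJ(u)=\big(\langle u,\alpha\rangle+X_0\,\E[S_T]\big)-\tfrac{\gamma}{2}\|u\|^2-\Psi(u),\qquad \Psi(u):=\langle u,(\mathbf{H}+\mathbf{G})\circ(\mathbf{I}-\mathbf{R})u\rangle+\tfrac{\phi}{2}\|\bfX u\|^2+\tfrac{\varrho}{2}\,\E\big[(\bfX u)_T^2\big].
\end{equation*}
The affine term is weakly continuous, and $u\mapsto-\tfrac{\gamma}{2}\|u\|^2$ is concave and norm-continuous, hence weakly upper semicontinuous, so the whole point is the weak sequential lower semicontinuity of $\Psi$. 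Fix $u_n\rightharpoonup u$ in $\calL^2$, so $B:=\sup_n\|u_n\|<\infty$ by Banach--Steinhaus.

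Next I would pass to the pathwise level. Since $\Omega$ is countable (the finite case being subsumed), enumerate its atoms $(\omega_k)_{k\ge1}$; testing the weak convergence against the progressively measurable functions $(t,\omega)\mapsto h(t)\,\1_{A_t(\omega_k)}(\omega)/\P(A_t(\omega_k))$, with $h\in L^2([0,T])$ and $A_t(\omega_k)$ the $\mathcal{F}_t$-atom through $\omega_k$, gives $u_n(\cdot,\omega_k)\rightharpoonup u(\cdot,\omega_k)$ in $L^2([0,T])$ for every $k$. By Assumptions~\ref{assumption:G} and~\ref{assumption:H}, the operators $\mathbf{G}$, $\mathbf{H}=\kappa_\infty\mathbf{1}$ and $\bfX=X_0+\mathbf{1}$ are compact from $L^2([0,T])$ into $C([0,T])$: the image of a bounded set is uniformly bounded and equicontinuous --- for $\mathbf{G}$ by the translation-continuity hypothesis on $G$ in Assumption~\ref{assumption:G}, for $\mathbf{H}$ and $\bfX$ by Arzel\`a--Ascoli via the H\"older estimate for the primitive operator $\mathbf{1}$. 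Hence $\mathbf{G}u_n(\cdot,\omega_k)$, $\mathbf{H}u_n(\cdot,\omega_k)$ and $\bfX u_n(\cdot,\omega_k)$ converge uniformly on $[0,T]$; by the Lipschitz continuity of $f\mapsto\mathbf{R}_Tf$ from Lemma~\ref{lem:preliminary}, also $r^{u_n}(\cdot,\omega_k)=\mathbf{R}_T\big(\mathbf{G}u_n(\cdot,\omega_k)\big)\to r^u(\cdot,\omega_k)$ uniformly, so $(\mathbf{H}+\mathbf{G})(\mathbf{I}-\mathbf{R})u_n(\cdot,\omega_k)\to(\mathbf{H}+\mathbf{G})(\mathbf{I}-\mathbf{R})u(\cdot,\omega_k)$ strongly in $L^2([0,T])$. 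Pairing this strong convergence with $u_n(\cdot,\omega_k)\rightharpoonup u(\cdot,\omega_k)$, and using compactness of $\bfX$ for the two inventory quadratics, yields $\psi_n(\omega_k)\to\psi(\omega_k)$ for each $k$, where $\psi_n(\omega):=\langle u_n(\cdot,\omega),(\mathbf{H}+\mathbf{G})(\mathbf{I}-\mathbf{R})u_n(\cdot,\omega)\rangle_{L^2([0,T])}+\tfrac{\phi}{2}\|\bfX u_n(\cdot,\omega)\|_{L^2([0,T])}^2+\tfrac{\varrho}{2}(\bfX u_n)_T(\omega)^2$.

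It then remains to pass from the atoms to $\Psi(u_n)=\sum_k\P(\omega_k)\psi_n(\omega_k)$. For finite $\Omega$ this is immediate; in general it follows from Fatou's lemma for $\sum_k\P(\omega_k)\delta_{\omega_k}$ provided $\psi_n\ge0$. The inventory quadratics are nonnegative, so the remaining ingredient is the positive semi-definiteness $\langle v,(\mathbf{H}+\mathbf{G})(\mathbf{I}-\mathbf{R})v\rangle\ge0$ for all $v\in L^2([0,T])$, which I would prove by extending Lemma~\ref{L:psd_linear_resistance} to nonlinear $\calU$: writing $w:=v-\mathbf{R}v$, so $v=w+\calU(\mathbf{G}w)$, one has $\langle v,(\mathbf{H}+\mathbf{G})w\rangle=\langle w,(\mathbf{H}+\mathbf{G})w\rangle+\langle\calU(\mathbf{G}w),\mathbf{G}w\rangle+\langle\calU(\mathbf{G}w),\mathbf{H}w\rangle$, the first summand $\ge0$ by positive semi-definiteness of $\mathbf{H}$ and $\mathbf{G}$, the second $\ge0$ since $\calU$ is nondecreasing with $\calU(0)=0$, and the third $\ge0$ via the Bernstein--Widder representation of $G$ (Theorem~\ref{thm:bernstein_widder}), which reduces it to the exponential-kernel computation already performed in Lemma~\ref{L:psd_linear_resistance}. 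Granting this, $\psi_n\ge0$, hence $\liminf_n\Psi(u_n)\ge\Psi(u)$, which completes the argument. The step I expect to be the main obstacle is precisely this positive semi-definiteness for a nonlinear resistance function: without it the atomwise limits need not sum up because mass can escape to ``$\omega=\infty$'', and it is exactly there that the complete monotonicity of $G$ and the monotonicity of $\calU$ enter; a secondary, more routine point is the measurability justification of the pathwise weak convergence under a general filtration satisfying the usual conditions.
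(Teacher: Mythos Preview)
Your overall strategy matches the paper's: reduce to pathwise weak convergence in $L^2([0,T])$ using the countability of $\Omega$, upgrade weak to strong convergence of $(\mathbf{H}+\mathbf{G})(\mathbf{I}-\mathbf{R})u_n$ via compactness of the convolution operators, pair against the weakly convergent $u_n$, and return to $\mathcal{L}^2$ by Fatou. The paper organises the $L^2$ step slightly differently: it introduces $\mathbf{B}u=\calU(\mathbf{G}u)$, records $\mathbf{I}-\mathbf{R}=(\mathbf{I}+\mathbf{B})^{-1}$, and proves in a separate lemma that $(\mathbf{I}+\mathbf{B})^{-1}$ is weakly continuous on $L^2$ via a weak-compactness and subsequence-identification argument, before composing with the compact $\mathbf{H}+\mathbf{G}$. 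Your route through compactness of $\mathbf{G}:L^2\to C([0,T])$ combined with the Lipschitz property of $f\mapsto\mathbf{R}_Tf$ from Lemma~\ref{lem:preliminary} is more direct and bypasses that detour; both land on the same pathwise strong convergence.

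The genuine gap is in your justification of Fatou. The decomposition
\[
\langle v,(\mathbf{H}+\mathbf{G})w\rangle=\langle w,(\mathbf{H}+\mathbf{G})w\rangle+\langle\calU(\mathbf{G}w),\mathbf{G}w\rangle+\langle\calU(\mathbf{G}w),\mathbf{H}w\rangle
\]
is correct, but the last two summands are not controlled under Assumption~\ref{assumption:U} as stated. For the second, $x\,\calU(x)\ge0$ needs $\calU$ to be sign-preserving (e.g.\ nondecreasing through the origin), which is natural in the model but is not part of Assumption~\ref{assumption:U}. More seriously, for the third: Bernstein--Widder linearises $G$, not $\calU$. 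In Lemma~\ref{L:psd_linear_resistance} the quantity treated is $\langle\mathbf{G}v,\mathbf{H}v\rangle=\int\langle\mathbf{G}_\lambda v,\mathbf{H}v\rangle\,\mu(\dd\lambda)$, and each $\lambda$-slice is shown nonnegative by the explicit exponential computation. Here the term is $\langle\calU(\mathbf{G}w),\mathbf{H}w\rangle$ with a nonlinear $\calU$ sitting \emph{outside} the $\lambda$-integral defining $\mathbf{G}w$; there is no way to push $\calU$ through the mixture, so the exponential-kernel identity simply does not apply. Your own diagnosis that this is ``the main obstacle'' is accurate, but the proposed reduction does not clear it.

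What actually rescues Fatou --- and what the paper leans on, though without saying so explicitly --- is not nonnegativity but a \emph{uniform} lower bound. The estimate proved in Lemma~\ref{lemma:coercive:1} is pathwise in nature: its manipulations (positive semi-definiteness of $\mathbf{H}$ and $\mathbf{G}$, the bound $|\wt{\calU}(x)|\le c+\epsilon|x|$, and \eqref{eq:goal:24}) hold in $L^2([0,T])$ and yield $\langle u,(\mathbf{H}+\mathbf{G})(\mathbf{I}+\mathbf{B})^{-1}u\rangle_{L^2([0,T])}\ge(\kappa_\infty/2-\epsilon)\|v\|^2-c_1\|v\|\ge b$ for a constant $b$ independent of $u$ and $\omega$. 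Shifting the integrand by $-b$ makes it nonnegative, Fatou goes through, and one never needs the full nonlinear positive semi-definiteness you were attempting.
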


\begin{lemma}
\label{lemma:coercive}
    Suppose that $\kappa_{\infty} + \gamma > 0$ and that Assumptions~\ref{assumption:H},~\ref{assumption:G} and~\ref{assumption:U} hold. Then $-\calJ$ is coercive in $L^2$.
\end{lemma}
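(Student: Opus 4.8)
The outline has already carried out the structural step: since $\mathcal{L}^2$ is a Hilbert space, Struwe's direct method yields a maximizer of $\mathcal{J}$ as soon as $-\mathcal{J}$ is coercive and weakly sequentially lower semicontinuous. So the plan is simply to prove Lemmas~\ref{lemma:wslc} and~\ref{lemma:coercive}; granting them, Struwe's theorem produces $\widehat{u}\in\mathcal{L}^2$ with $\mathcal{J}(\widehat{u})=\sup_{u\in\mathcal{L}^2}\mathcal{J}(u)$, which is exactly Theorem~\ref{thm:existence}.

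For Lemma~\ref{lemma:wslc}, I would split $\mathcal{J}=\mathcal{J}_0-\tfrac{\gamma}{2}\norm{\cdot}^2$, noting that $-\tfrac{\gamma}{2}\norm{\cdot}^2$ is concave and norm-continuous, hence weakly upper semicontinuous, so that it suffices to show $\mathcal{J}_0$ is weakly continuous. Take $u_n\rightharpoonup u$. Since $\Omega$ is finite or countable, this passes to weak convergence of $u_n(\omega,\cdot)$ in $L^2([0,T])$ on each atom, and the bound $\sup_n\norm{u_n}<\infty$ allows aggregation over $\Omega$ at the end by dominated convergence (up to a subsequence). On a fixed atom, $(\mathbf{G}u_n)(t)=\langle G(t-\cdot)\1_{[0,t]},u_n\rangle_{L^2}$ converges pointwise, and the family is equibounded and equicontinuous thanks to the continuity hypothesis on $G$ in Assumption~\ref{assumption:G} and $\sup_n\norm{u_n}<\infty$; Arzelà--Ascoli then gives $\mathbf{G}u_n\to\mathbf{G}u$ uniformly, and likewise $\mathbf{H}u_n$ and $\mathbf{X}u_n$ converge uniformly. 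By the argument behind Lemma~\ref{lem:well_posed}, $\mathbf{G}(u-r^u)$ is the image of $\mathbf{G}u$ under the solution operator of the nonlinear Volterra equation, which is Lipschitz in sup-norm by Theorem~\ref{thm:existence:E}; hence $\mathbf{G}(u_n-r^{u_n})\to\mathbf{G}(u-r^u)$ uniformly and $r^{u_n}=\calU(\mathbf{G}(u_n-r^{u_n}))\to r^u$ uniformly. Consequently $(\mathbf{H}+\mathbf{G})(\mathbf{I}-\mathbf{R})u_n\to(\mathbf{H}+\mathbf{G})(\mathbf{I}-\mathbf{R})u$ \emph{strongly} in $L^2$, the weak$\times$strong pairing gives $\langle u_n,(\mathbf{H}+\mathbf{G})(\mathbf{I}-\mathbf{R})u_n\rangle\to\langle u,(\mathbf{H}+\mathbf{G})(\mathbf{I}-\mathbf{R})u\rangle$, and the remaining terms ($\langle u_n,\alpha\rangle$, $\norm{\mathbf{X}u_n}^2$, $\E[(\mathbf{X}u_n)_T^2]$, the latter two through the compact operator $u\mapsto\int_0^\cdot u_s\,ds$) converge as well. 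Summing over the atoms gives weak continuity of $\mathcal{J}_0$. The only delicate point is passing the nonlinear operator $\mathbf{R}$ through the weak limit, which works precisely because $\mathbf{G}$ smooths and the Volterra solution operator is stable in sup-norm.

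For Lemma~\ref{lemma:coercive}, start from
\begin{equation*}
-\mathcal{J}(u)=\tfrac{\gamma}{2}\norm{u}^2+Q(u)+\tfrac{\phi}{2}\norm{\mathbf{X}u}^2+\tfrac{\varrho}{2}\E[(\mathbf{X}u)_T^2]-\langle u,\alpha\rangle-X_0\E[S_T],\qquad Q(u):=\langle u,(\mathbf{H}+\mathbf{G})(\mathbf{I}-\mathbf{R})u\rangle,
\end{equation*}
where the $\phi,\varrho$ terms are nonnegative, $-\langle u,\alpha\rangle\ge-\norm{u}\norm{\alpha}$, and the last term is constant, so everything reduces to a lower bound on $Q$. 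Using Assumption~\ref{assumption:U}, write $\calU(x)=\delta x+\psi(x)$ with $\psi$ Lipschitz, $\psi(0)=0$, $\psi(x)/x\to0$, so $|\psi(x)|\le\varepsilon|x|+C_\varepsilon$ for every $\varepsilon>0$; comparing the fixed-point equation for $\mathbf{R}$ with that of the linear resistance $\mathbf{R}_\delta:=\delta(\mathbf{I}+\delta\mathbf{G})^{-1}\mathbf{G}$ yields $(\mathbf{I}+\delta\mathbf{G})(\mathbf{R}-\mathbf{R}_\delta)u=\psi(\mathbf{G}(\mathbf{I}-\mathbf{R})u)$, hence $\norm{(\mathbf{R}-\mathbf{R}_\delta)u}\le\varepsilon C_1\norm{u}+C_{1,\varepsilon}$ (using the Lipschitz bound on $\mathbf{R}$ from Lemma~\ref{lem:well_posed} and boundedness of $(\mathbf{I}+\delta\mathbf{G})^{-1}$). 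Since $\mathbf{I}-\mathbf{R}_\delta=(\mathbf{I}+\delta\mathbf{G})^{-1}$, Lemma~\ref{L:psd_linear_resistance} (with $a=\delta$) gives $\langle u,(\mathbf{H}+\mathbf{G})(\mathbf{I}-\mathbf{R}_\delta)u\rangle\ge0$, so $Q(u)\ge-\varepsilon C_2\norm{u}^2-C_{2,\varepsilon}\norm{u}$. If $\gamma>0$, choosing $\varepsilon=\gamma/(4C_2)$ gives $-\mathcal{J}(u)\ge\tfrac{\gamma}{4}\norm{u}^2-(C_{2,\varepsilon}+\norm{\alpha})\norm{u}-\mathrm{const}\to+\infty$, which settles this regime.

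The genuinely delicate case is $\gamma=0$ (so $\kappa_\infty>0$): here $Q$ itself must grow. I would decompose $u=v+r^u$ with $v:=(\mathbf{I}-\mathbf{R})u$ (so $\norm{v}\ge c\norm{u}$) and expand $Q(u)=\langle v,\mathbf{H}v\rangle+\langle v,\mathbf{G}v\rangle+\langle\calU(\mathbf{G}v),\mathbf{G}v\rangle+\langle\calU(\mathbf{G}v),\mathbf{H}v\rangle$, where $\langle v,\mathbf{H}v\rangle=\tfrac{\kappa_\infty}{2}\E\big[\big(\int_0^T v_s\,ds\big)^2\big]\ge0$, $\langle v,\mathbf{G}v\rangle\ge0$, $\langle\calU(\mathbf{G}v),\mathbf{G}v\rangle\ge0$ (since $\calU$ is increasing with $\calU(0)=0$), and the last cross term splits, via $\calU=\delta\,\mathrm{id}+\psi$, into $\delta\langle\mathbf{G}v,\mathbf{H}v\rangle\ge0$ (the exponential-kernel computation inside the proof of Lemma~\ref{L:psd_linear_resistance}) plus a sublinear remainder $\langle\psi(\mathbf{G}v),\mathbf{H}v\rangle$. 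The crux is then to offset this remainder, together with $-\langle u,\alpha\rangle$, against the nonnegative quadratic forms above, using injectivity of the Volterra operator $\mathbf{G}$ so that any sequence along which $\langle v,\mathbf{G}v\rangle$ and $\langle v,\mathbf{H}v\rangle$ both degenerate must also have $\langle u,\alpha\rangle\to0$. This balancing is, in my view, the main obstacle in the proof; everything else is standard Hilbert-space soft analysis together with estimates involving the admissibility constants of $\mathbf{G}$ and $\mathbf{H}$.
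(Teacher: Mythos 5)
Your argument for the regime $\gamma>0$ is correct but travels a different road than the paper. You approximate the nonlinear resistance by its linearization $\mathbf{R}_\delta := \delta(\mathbf{I}+\delta\mathbf{G})^{-1}\mathbf{G}$, control the gap $\mathbf{R}-\mathbf{R}_\delta$ via the sublinear remainder $\psi$, and invoke Lemma~\ref{L:psd_linear_resistance} on the linearized quadratic form, landing on $Q(u)\ge -\varepsilon C\norm{u}^2 - C_\varepsilon\norm{u}$. The paper instead proves, in a separate Lemma~\ref{lemma:coercive:1}, a uniform quadratic lower bound $\langle u,(\mathbf{H}+\mathbf{G})(\mathbf{I}+\mathbf{B})^{-1}u\rangle \ge b + (\kappa_\infty/2-\epsilon)\norm{u}^2$, by setting $v=(\mathbf{I}+\mathbf{B})^{-1}u$, splitting $\mathbf{B}=\delta\mathbf{G}+\widetilde{\mathbf{B}}$ inside the inner product, and keeping the psd terms $\langle\mathbf{G}v,v\rangle$, $\delta\langle\mathbf{H}v,\mathbf{G}v\rangle$, $\delta\norm{\mathbf{G}v}^2$; this single bound then dispatches $\gamma>0$ and $\gamma=0$ at once by adding $\frac{\gamma}{2}\norm{u}^2$. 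Your route is more elementary but delivers only the weaker $-\varepsilon$-order bound, hence handles only $\gamma>0$.

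The $\gamma=0$, $\kappa_\infty>0$ case is therefore a genuine gap in your proposal, and you are right to flag it as the crux. Note, however, that your own (correct) computation $\langle v,\mathbf{H}v\rangle = \frac{\kappa_\infty}{2}\E\big[\big(\int_0^T v_s\,\dd s\big)^2\big]$ is exactly the quantity the paper evaluates when it claims $\langle\mathbf{H}v,v\rangle = \frac{\kappa_\infty}{2}\norm{v}^2$ inside Lemma~\ref{lemma:coercive:1}: the cumulative integral operator $\mathbf{H}$ produces the square of the terminal inventory, not the $\mathcal{L}^2$ norm of $v$, and the two differ (the former vanishes for mean-zero $v$ while the latter does not). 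So the coercive $\kappa_\infty/2$ coefficient the paper extracts from $\mathbf{H}$ does not follow from the stated identity, and the balancing difficulty you identify is not an artifact of your approach — it is genuinely present. As written, both your argument and the paper's rigorously cover only the regime $\gamma>0$; closing the $\gamma=0$ case would require either a different source of coercivity (e.g.\ a strictly positive lower bound for $\langle v,\mathbf{G}v\rangle$ in terms of $\norm{v}^2$, which the completely monotone kernels here do not admit) or a weaker notion of coercivity adapted to the degenerate directions of $\mathbf{H}$.
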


The proof of these two lemmas is deferred to Sections~\ref{sec:proof:lemma:coercive} and~\ref{sec:proof:lemma:wslc}. They rely on the properties of the operator $\mathbf{B}$ defined on $\mathcal{L}^2$ by
\begin{equation*}
    \mathbf{B} u(t) = \calU\big(\mathbf{G} u(t)\big), \quad t \geq 0.
\end{equation*}
The properties of this operator are studied in Section~\ref{sec:prop:B}.

\subsection{Properties of the operator $\mathbf{B}$}
\label{sec:prop:B}

\begin{lemma}
\label{lemma:IRIA}
We have $\mathbf{I} - \mathbf{R} = (\mathbf{I} + \mathbf{B})^{-1}$. Moreover, there exists $c > 0$ such that
    \begin{equation}
    \label{eq:control:IpB}
        c^{-1}\norm{u} \leq \norm{(\mathbf{I} + \mathbf{B})^{-1}} \leq c \norm{u}.
    \end{equation}
\end{lemma}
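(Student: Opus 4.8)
The plan is to derive the operator identity $\mathbf{I}-\mathbf{R}=(\mathbf{I}+\mathbf{B})^{-1}$ directly from the defining Volterra equation for $r^u$, using the uniqueness statement of Lemma~\ref{lem:well_posed} to get genuine (two-sided) invertibility, and then to read off the two-sided norm bound from the Lipschitz estimates already at hand.

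First I would record that $\mathbf{B}$ does map $\mathcal{L}^2$ into itself: since $\calU(0)=0$ and $\calU$ is $L$-Lipschitz, $|\mathbf{B}v_t|=|\calU(\mathbf{G}v_t)|\le L|\mathbf{G}v_t|$, so $\|\mathbf{B}v\|\le L\sqrt{TC_G}\,\|v\|$ by~\eqref{eq:estimate_on_admissible_G_operator}. Next, fix $u\in\mathcal{L}^2$ and set $v:=(\mathbf{I}-\mathbf{R})u=u-r^u$. By definition of $\mathbf{R}$, the process $r^u$ solves $r^u_t=\calU\big(\mathbf{G}(u-r^u)_t\big)=\calU(\mathbf{G}v_t)=(\mathbf{B}v)_t$, hence $u-v=\mathbf{B}v$, i.e. $(\mathbf{I}+\mathbf{B})v=u$; this shows $(\mathbf{I}+\mathbf{B})(\mathbf{I}-\mathbf{R})=\mathbf{I}$ on $\mathcal{L}^2$. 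Conversely, given any $v\in\mathcal{L}^2$, put $u:=(\mathbf{I}+\mathbf{B})v=v+\mathbf{B}v$; then $u-v=\mathbf{B}v=\calU(\mathbf{G}v)$, so $u-v$ is a solution of the fixed-point equation~\eqref{eq:r_u} associated with this $u$, and the uniqueness part of Lemma~\ref{lem:well_posed} forces $r^u=u-v$, i.e. $(\mathbf{I}-\mathbf{R})(\mathbf{I}+\mathbf{B})v=v$. Thus $\mathbf{I}-\mathbf{R}$ and $\mathbf{I}+\mathbf{B}$ are mutually inverse bounded operators on $\mathcal{L}^2$, which is the first claim.

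For the two-sided bound~\eqref{eq:control:IpB}, I would write $v:=(\mathbf{I}+\mathbf{B})^{-1}u=(\mathbf{I}-\mathbf{R})u$. The upper estimate follows from the Lipschitz continuity of $\mathbf{R}$ on $\mathcal{L}^2$ (Lemma~\ref{lem:well_posed}) together with $\mathbf{R}(0)=0$ (a consequence of $\calU(0)=0$ and uniqueness): if $L_{\mathbf{R}}$ is the Lipschitz constant of $\mathbf{R}$, then $\|v\|\le\|u\|+\|\mathbf{R}u-\mathbf{R}(0)\|\le(1+L_{\mathbf{R}})\|u\|$. For the lower estimate, use $u=(\mathbf{I}+\mathbf{B})v=v+\mathbf{B}v$ and the bound on $\mathbf{B}$ above to get $\|u\|\le\|v\|+\|\mathbf{B}v\|\le\big(1+L\sqrt{TC_G}\big)\|v\|$, hence $\|v\|\ge\big(1+L\sqrt{TC_G}\big)^{-1}\|u\|$. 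Taking $c:=\max\{\,1+L_{\mathbf{R}},\;1+L\sqrt{TC_G}\,\}$ yields~\eqref{eq:control:IpB}, which is to be read with $u$ applied, i.e. $c^{-1}\|u\|\le\|(\mathbf{I}+\mathbf{B})^{-1}u\|\le c\|u\|$ for every $u\in\mathcal{L}^2$.

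I do not expect a serious obstacle: the only subtle point is that $(\mathbf{I}+\mathbf{B})(\mathbf{I}-\mathbf{R})=\mathbf{I}$ by itself does not imply invertibility, so one genuinely needs the uniqueness part of Lemma~\ref{lem:well_posed} to conclude that $\mathbf{I}-\mathbf{R}$ is also a right inverse of $\mathbf{I}+\mathbf{B}$. Everything else is a direct rewriting of the fixed-point equation and an application of the admissibility estimate~\eqref{eq:estimate_on_admissible_G_operator}.
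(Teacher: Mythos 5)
Your proof is correct and follows essentially the same route as the paper: both establish $(\mathbf{I}+\mathbf{B})(\mathbf{I}-\mathbf{R})=\mathbf{I}$ by substituting the fixed-point equation, invoke uniqueness of solutions of the nonlinear Volterra equation to obtain the reverse identity (you cite Lemma~\ref{lem:well_posed} directly, which is marginally cleaner than the paper's re-adaptation of Appendix~\ref{app:nonlinear_volterra}), and read off the two-sided bound from the Lipschitz continuity of $\mathbf{R}$ and of $\mathbf{B}$. Your explicit constants $c=\max\{1+L_{\mathbf{R}},\,1+L\sqrt{TC_G}\}$ make the last step more concrete than the paper's, but the argument is the same.
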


\begin{proof}
    Note that for all $u \in \mathcal{L}^2$, $w = (\mathbf{I} - \mathbf{R})u$ satisfies 
    \begin{equation*}
    w(t) = u(t) - r^u(t) = u(t) - \calU\big(\mathbf{G}w(t)\big)
    \end{equation*}
    and thus $
    u = (\mathbf{I} + \mathbf{B})w$. Moreover, adapting the results of Appendix~\ref{app:nonlinear_volterra}, we see that for each $u$, the equation 
    \begin{equation*}
            w(t) + \calU\big(\mathbf{G}w(t)\big)= u(t) 
    \end{equation*}
    admits a unique solution in $\mathcal{L}^2$. This ensures that $\mathbf{I} + \mathbf{B}$ is invertible and thus $\mathbf{I} - \mathbf{R} = (\mathbf{I} + \mathbf{B})^{-1}$.\\

    To prove~\eqref{eq:control:IpB}, we use the fact that both $\mathbf{I}+\mathbf{B} $ and $(\mathbf{I}+\mathbf{B})^{-1}$ are Lipschitz, which follows readily from the fact that $\mathbf{B}$ is Lipschitz continuous because both $\mathcal{U}$ and $\mathbf{G}$ are Lipschitz continuous and from the fact that $\mathbf{R}$ is Lipschitz continuous.
\end{proof}

Note that the operator $\mathbf{B}$ satisfies the assumptions of Lemma~\ref{lem:calL2_to_L2}: it is defined through a deterministic relation and thus can naturally extend to a non-anticipative operator $\widecheck{\mathbf{B}}$ in $L^2$. The same holds for $\mathbf{R}$ and we also have 
    \begin{equation*}
        \mathbf{I} - \widecheck{\mathbf{R}} = (\mathbf{I} + \widecheck{\mathbf{B}})^{-1}.
    \end{equation*}
The properties of $\widecheck{\mathbf{B}}$ are studied below.

\begin{lemma}
The operator $\widecheck{\mathbf{B}}$ is weakly continuous in $L^2$.
\end{lemma}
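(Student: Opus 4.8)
The plan is to exploit the factorisation $\widecheck{\mathbf{B}} = \mathbf{N}\circ\mathbf{G}$, which holds on $L^2([0,T])$ by Lemma~\ref{lem:calL2_to_L2} together with the defining relation $\mathbf{B}u(t)=\calU(\mathbf{G}u(t))$: here $\mathbf{G}$ denotes the Volterra convolution operator and $\mathbf{N}$ the superposition (Nemytskii) operator $\mathbf{N}v:=\calU\circ v$. I would show that $\mathbf{G}$ is compact and that $\mathbf{N}$ is globally Lipschitz on $L^2([0,T])$; combined, these two facts give the stronger statement that $\widecheck{\mathbf{B}}$ is \emph{completely continuous}, i.e.\ it sends weakly convergent sequences to norm-convergent ones, whence weak continuity follows immediately. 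Since $L^2([0,T])$ is separable and reflexive, arguing along sequences is enough.

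For the compactness of $\mathbf{G}$, I would view it as the integral operator on $L^2([0,T])$ with kernel $K(t,s):=G(t-s)\mathds{1}_{\{s<t\}}$. The causal cutoff and the $L^2$-integrability of $G$ from Assumption~\ref{assumption:G} give
\begin{equation*}
    \int_0^T\!\!\int_0^T |K(t,s)|^2\,\dd s\,\dd t = \int_0^T\!\!\int_0^t G(t-s)^2\,\dd s\,\dd t \le T\int_0^T G(r)^2\,\dd r = T\,\|G\|_{L^2([0,T])}^2 < \infty,
\end{equation*}
so $K\in L^2([0,T]^2)$, hence $\mathbf{G}$ is Hilbert--Schmidt and therefore compact on $L^2([0,T])$. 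Consequently, whenever $u_n\rightharpoonup u$ weakly in $L^2([0,T])$ one has $\mathbf{G}u_n\to\mathbf{G}u$ strongly in $L^2([0,T])$.

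Next, I would use that $\calU$ is $L$-Lipschitz with $\calU(0)=0$ (Assumption~\ref{assumption:U}): the pointwise bound $|\calU(a)-\calU(b)|\le L|a-b|$ integrates to $\|\mathbf{N}v-\mathbf{N}w\|\le L\|v-w\|$ for all $v,w\in L^2([0,T])$, so $\mathbf{N}$ maps $L^2([0,T])$ into itself and is norm-continuous. Applying this with $v=\mathbf{G}u_n$ and $w=\mathbf{G}u$ and combining with the previous step yields $\widecheck{\mathbf{B}}u_n=\mathbf{N}(\mathbf{G}u_n)\to\mathbf{N}(\mathbf{G}u)=\widecheck{\mathbf{B}}u$ strongly --- a fortiori weakly --- in $L^2([0,T])$, which is the claim.

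The one delicate point, and the step I expect to be the main obstacle, is the compactness of $\mathbf{G}$: convolution by an $L^2$ function on $\mathbb{R}$ is \emph{not} compact, and it is precisely the finite horizon together with the Volterra truncation $\mathds{1}_{\{s<t\}}$ that places the kernel in $L^2([0,T]^2)$ and makes the Hilbert--Schmidt argument go through. Everything else reduces to composing a compact operator with a Lipschitz one, which is routine.
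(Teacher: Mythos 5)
Your proof is correct and follows essentially the same route as the paper's: factor $\widecheck{\mathbf{B}}$ as a compact convolution operator followed by a Lipschitz superposition map, so that weak convergence is upgraded to strong convergence by $\mathbf{G}$ and then preserved by $\calU$. The only difference is that you explicitly justify the compactness of $\mathbf{G}$ via the Hilbert--Schmidt criterion for the truncated kernel, whereas the paper simply asserts it; your version is the more careful one, and your remark that compactness would fail for untruncated convolution on the line is apt.
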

\begin{proof}
Let $x_n \rightharpoonup x$. Since $\mathbf{G}_{1}$ is a convolution operator, it is also compact and therefore, we know that $\mathbf{G}_{1}x_n \to \mathbf{G}_{1}x$. Then since $\calU$ is Lipschitz continuous, we deduce that $\widecheck{\mathbf{B}} x_n \to \widecheck{\mathbf{B}} x$ which implies that $\widecheck{\mathbf{B}} x_n \rightharpoonup \widecheck{\mathbf{B}} x$.
\end{proof}

\begin{lemma}
The operator $(\mathbf{I} + \widecheck{\mathbf{B}})^{-1}$ is weakly continuous in $L^2$.
\end{lemma}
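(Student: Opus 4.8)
The plan is to leverage the identity $(\mathbf{I}+\widecheck{\mathbf{B}})^{-1} = \mathbf{I} - \widecheck{\mathbf{R}}$ recorded just above, together with the key feature extracted from the previous lemma: $\widecheck{\mathbf{B}}$ is not merely weakly continuous but \emph{weak-to-strong} continuous, i.e. $x_n \rightharpoonup x$ in $L^2$ implies $\widecheck{\mathbf{B}} x_n \to \widecheck{\mathbf{B}} x$ \emph{strongly} (its proof uses compactness of the convolution operator $\mathbf{G}$ on $L^2([0,T])$ and the Lipschitz property of $\calU$). This strong convergence of the nonlinear part is exactly what lets a weak limit pass through the (nonlinear) relation defining the inverse.

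First I would fix $x_n \rightharpoonup x$ in $L^2$ and set $y_n := (\mathbf{I}+\widecheck{\mathbf{B}})^{-1} x_n$ and $y := (\mathbf{I}+\widecheck{\mathbf{B}})^{-1} x$. Since $(\mathbf{I}+\widecheck{\mathbf{B}})^{-1} = \mathbf{I} - \widecheck{\mathbf{R}}$ is Lipschitz continuous (this is the $L^2$-analogue of the two-sided bound~\eqref{eq:control:IpB} from Lemma~\ref{lemma:IRIA}) and vanishes at $0$ because $\widecheck{\mathbf{B}}(0)=\calU(0)=0$, the bounded sequence $(x_n)$ is mapped to a bounded sequence $(y_n)$ in $L^2$.

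Next I would run a subsequence argument. Let $(y_{n_k})$ be any subsequence of $(y_n)$ converging weakly to some $z\in L^2$; such a subsequence exists by boundedness and reflexivity of the Hilbert space $L^2$. From the defining relation $x_{n_k} = y_{n_k} + \widecheck{\mathbf{B}} y_{n_k}$, the weak-to-strong continuity of $\widecheck{\mathbf{B}}$ applied to $y_{n_k} \rightharpoonup z$ gives $\widecheck{\mathbf{B}} y_{n_k} \to \widecheck{\mathbf{B}} z$ strongly, hence $x_{n_k} = y_{n_k} + \widecheck{\mathbf{B}} y_{n_k} \rightharpoonup z + \widecheck{\mathbf{B}} z$ as a sum of a weakly convergent and a strongly convergent sequence. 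Since also $x_{n_k}\rightharpoonup x$, uniqueness of weak limits forces $(\mathbf{I}+\widecheck{\mathbf{B}})z = x$, and since $\mathbf{I}+\widecheck{\mathbf{B}}$ is invertible on $L^2$ (again the $L^2$-version of Lemma~\ref{lemma:IRIA}, obtained by adapting Appendix~\ref{app:nonlinear_volterra}), injectivity yields $z = y$. Thus every weakly convergent subsequence of $(y_n)$ has weak limit $y$; testing against an arbitrary $\phi\in L^2$, the bounded scalar sequence $\langle y_n,\phi\rangle$ then has all its subsequential limits equal to $\langle y,\phi\rangle$, so $\langle y_n,\phi\rangle \to \langle y,\phi\rangle$, i.e. $y_n \rightharpoonup y$. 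This proves $(\mathbf{I}+\widecheck{\mathbf{B}})^{-1}$ is weakly continuous.

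I do not anticipate a substantive obstacle. The only point requiring care is to be sure the needed ingredients are genuinely available in the $L^2$ setting rather than only in $\mathcal{L}^2$: namely that $\mathbf{I}+\widecheck{\mathbf{B}}$ is a Lipschitz bijection of $L^2$ with Lipschitz inverse. Both follow from the material already invoked in the text (Appendix~\ref{app:nonlinear_volterra} and Lemma~\ref{lemma:IRIA}), and I would simply cite them; everything else is soft functional analysis.
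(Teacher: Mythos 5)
Your proof is correct and follows essentially the same two-step strategy as the paper: first show the image sequence is bounded (hence weakly precompact) via Lipschitz estimates, then identify every weak subsequential limit by passing to the limit in the defining relation and invoking injectivity of $\mathbf{I}+\widecheck{\mathbf{B}}$. The one cosmetic difference is that you explicitly upgrade the previous lemma to weak-to-strong continuity of $\widecheck{\mathbf{B}}$ (which its proof indeed delivers), whereas the paper only invokes the weaker statement — weak continuity of $\widecheck{\mathbf{B}}$ already suffices, since the sum of two weakly convergent sequences converges weakly; either way the argument closes identically.
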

\begin{proof}
We consider $y_n \rightharpoonup y$ and we define
\begin{equation*}
x_n = (\mathbf{I} + \widecheck{\mathbf{B}})^{-1}y_n \quad \text{ and } \quad x = (\mathbf{I} + \widecheck{\mathbf{B}})^{-1}y
\end{equation*}
so that
\begin{equation*}
y_n = x_n + \widecheck{\mathbf{B}}x_n \quad \text{ and } \quad y = x + \widecheck{\mathbf{B}}x.
\end{equation*}

We want to prove that $x_n \rightharpoonup x$. The proof is done in two steps: first we show that the sequence $(x_n)$ lies in a weak compact set, and then show that if $x_{n_k} \rightharpoonup x'$ for some $n_k$, then we must have $x' = x$. \\

\textbf{Step 1: $(x_n)$ lies in a weak compact set.}
We start by showing that $(x_n)_n$ is bounded in $L^2$. As a start, we define
\begin{equation*}
r_n = y_n - x_n = y_n - (\mathbf{I} + \widecheck{\mathbf{B}})^{-1}y_n = \mathbf{R} (y_n).
\end{equation*}
We know from Lemma~\ref{lem:well_posed} that $\mathbf{R}$ is Lipschitz continuous, and therefore there exists $C > 0$ such that 
\begin{equation*}
    \norm{r_n} \leq C \norm{y_n}.
\end{equation*}
Therefore, we have
\begin{equation*}
    \norm{x_n} = \norm{y_n - r_n} \leq (C+1) \norm{y_n}.
\end{equation*}
But $(y_n)_n$ is a bounded sequence in $L^2$ because it is weakly convergent and therefore $(x_n)_n$ is bounded. This implies in turn that $(x_n)_n$ lies in a weak compact set because every bounded and closed set in a Hilbert space is weakly relatively compact.\\

\textbf{Step 2: Identification of the weak accumulation point of $(x_n)_n$.} Consider a weakly convergent subsequence of $(x_n)$, that we still write $(x_n)$ for conciseness. We denote by $x'$ its limit and we prove that $x' = x$. By weak continuity of $\widecheck{\mathbf{B}}$, we know that
\begin{equation*}
\widecheck{\mathbf{B}}x_n \rightharpoonup \widecheck{\mathbf{B}}x'
\end{equation*}
and thus
\begin{equation*}
y_n = x_n + \widecheck{\mathbf{B}}x_n \rightharpoonup x' + \widecheck{\mathbf{B}}x'.
\end{equation*}
By uniqueness of the limit, we have $y = x' + \widecheck{\mathbf{B}}x'$ which implies that $x' = (\mathbf{I} + \widecheck{\mathbf{B}})^{-1}y = x$.
\end{proof}

\subsection{Proof of Lemma~\ref{lemma:wslc}}
\label{sec:proof:lemma:wslc}
From~\eqref{eq:operator:J}, we know that
\begin{equation*}
-\mathcal{J}(u) 
    = 
    -\langle u, \alpha \rangle + \frac{\gamma}{2} \norm{u}^2 + \langle u, (\mathbf{H} + \mathbf{G})\circ(\mathbf{I} - \mathbf{R})u \rangle
    + \frac{\phi}{2} \norm{\bfX u}^{2}
    + \frac{\varrho}{2} \E[(\bfX u)_T^2] - X_0 \mathbb{E}[S_T].
\end{equation*}
Following \citet[Lemma 6.2]{abi2025fredholm}, we know that $u \mapsto -\langle u, \alpha \rangle + \frac{\gamma}{2} \norm{u}^2$ and
$u \mapsto \phi \norm{\bfX u}^{2}
    + \varrho \E[(\bfX u)_T^2] $
are weakly lower semi-continuous. It remains to study $u \mapsto \langle u, (\mathbf{H} + \mathbf{G}) \circ(\mathbf{I} - \mathbf{R})u \rangle$. Lemma~\ref{lemma:IRIA} ensures that $(\mathbf{H} + \mathbf{G})\circ (\mathbf{I} - \mathbf{R}) = (\mathbf{H} + \mathbf{G}) \circ (\mathbf{I} + \mathbf{B})^{-1}$. This is done in two steps:
\begin{itemize}
    \item \textbf{Step 1.} We first study $(\widecheck{\mathbf{H}} + \widecheck{\mathbf{G}}) \circ (\mathbf{I} + \widecheck{\mathbf{B}})^{-1}$ in $L^2$ and we show that $u \mapsto \langle u, (\widecheck{\mathbf{H}} + \widecheck{\mathbf{G}}) \circ (\mathbf{I} - \widecheck{\mathbf{R}})u \rangle$ is weakly semi-continuous in $L^2$.
    \item \textbf{Step 2.} We then show that weak lower semi-continuity extends to $\mathcal{L}^2$ using crucially that $\Omega$ is finite or countable.
\end{itemize}

\textbf{Proof of Step 1.} Recall first that although the operators $\widecheck{\mathbf{H}}$, $\widecheck{\mathbf{G}}$ and $\widecheck{\mathbf{B}}$ are defined on $\mathcal{L}^2$, they can be seen as operator in $L^2$ because all these operators are defined through deterministic relations. We are now ready to prove that $u \mapsto \langle(\widecheck{\mathbf{H}} + \widecheck{\mathbf{G}}) \circ(\widecheck{\mathbf{I}} + \widecheck{\mathbf{B}})^{-1} u, u\rangle$ is weakly continuous in $L^2$. Consider $u_n \rightharpoonup u$. We want to prove that 
\begin{equation*}
\langle(\widecheck{\mathbf{H}} + \widecheck{\mathbf{G}}) \circ(\widecheck{\mathbf{I}} + \widecheck{\mathbf{B}})^{-1} u_n, u_n\rangle \to \langle(\widecheck{\mathbf{H}} + \widecheck{\mathbf{G}}) \circ(\widecheck{\mathbf{I}} + \widecheck{\mathbf{B}})^{-1} u, u\rangle.
\end{equation*}
We have
\begin{equation*}
|\langle(\widecheck{\mathbf{H}} + \widecheck{\mathbf{G}}) \circ(\widecheck{\mathbf{I}} + \widecheck{\mathbf{B}})^{-1} u_n, u_n\rangle - \langle( \widecheck{\mathbf{H}} + \widecheck{\mathbf{G}}) \circ(\widecheck{\mathbf{I}} + \widecheck{\mathbf{B}})^{-1} u, u\rangle| \leq \mathrm{I} + \mathrm{II} 
\end{equation*}
where
\begin{align*}
\mathrm{I}  &= |\langle(\widecheck{\mathbf{H}} + \widecheck{\mathbf{G}}) \circ(\widecheck{\mathbf{I}} + \widecheck{\mathbf{B}})^{-1} u_n - (\widecheck{\mathbf{H}} + \widecheck{\mathbf{G}}) \circ (\widecheck{\mathbf{I}} + \widecheck{\mathbf{B}})^{-1} u, u_n\rangle|
\\
\mathrm{II}  &= |\langle(\widecheck{\mathbf{H}} + \widecheck{\mathbf{G}})\circ (\widecheck{\mathbf{I}} + \widecheck{\mathbf{B}})^{-1} u, u_n - u\rangle|.
\end{align*}
Clearly $\mathrm{II} \to 0$ because $u_n \rightharpoonup u$ and $(\widecheck{\mathbf{H}} + \widecheck{\mathbf{G}}) \circ (\widecheck{\mathbf{I}} + \widecheck{\mathbf{B}})^{-1} u$ is in $L^2$. Moreover, using the weak continuity of $(\widecheck{\mathbf{I}} + \widecheck{\mathbf{B}})^{-1}$, we have
\begin{equation*}
(\widecheck{\mathbf{I}} + \widecheck{\mathbf{B}})^{-1} u_n \rightharpoonup (\widecheck{\mathbf{I}} + \widecheck{\mathbf{B}})^{-1} u.
\end{equation*}
Seeing that $\widecheck{\mathbf{H}} + \widecheck{\mathbf{G}}$ is a compact operator because it is a convolution operator, we deduce that
\begin{equation*}
(\widecheck{\mathbf{H}} + \widecheck{\mathbf{G}}) \circ (\widecheck{\mathbf{I}} + \widecheck{\mathbf{B}})^{-1} u_n \to  (\widecheck{\mathbf{H}} + \widecheck{\mathbf{G}}) \circ (\widecheck{\mathbf{I}} + \widecheck{\mathbf{B}})^{-1} u.
\end{equation*}
In addition, $(u_n)_n$ is bounded because it is weakly convergent, and thus
\begin{equation*}
I \leq \norm{(\widecheck{\mathbf{H}} + \widecheck{\mathbf{G}}) \circ (\widecheck{\mathbf{I}} + \widecheck{\mathbf{B}})^{-1} u_n - (\widecheck{\mathbf{H}} + \widecheck{\mathbf{G}}) \circ (\widecheck{\mathbf{I}} + \widecheck{\mathbf{B}})^{-1}  u} \norm{u_n} \to 0.
\end{equation*}

\textbf{Proof of Step 2.} We now prove that the weak lower semi-continuity extends to $\mathcal{L}^2$. Consider $(u_n)_n$ a sequence of $\mathcal{L}^2$ processes converging weakly to $u$. Following the proof of \citet[Lemma 6.2]{abi2025fredholm}, we see that the for $\mathbb{P}$-almost all $\omega$, we have
\begin{equation*}
    u_n(\omega, \cdot) \rightharpoonup u(\omega, \cdot) \qquad \text{ in } L^{2}.
\end{equation*}
Therefore, applying the results of Step 1, we have, for $\mathbb{P}$-almost all $\omega$
\begin{equation*}
\langle(\widecheck{\mathbf{H}} + \widecheck{\mathbf{G}}) \circ (\widecheck{\mathbf{I}} + \widecheck{\mathbf{B}})^{-1} u_n(\omega, \cdot), u_n(\omega, \cdot) \rangle_{L^2([0,T])} \to \langle(\widecheck{\mathbf{H}} + \widecheck{\mathbf{G}}) \circ (\widecheck{\mathbf{I}} + \widecheck{\mathbf{B}})^{-1} u(\omega, \cdot), u(\omega, \cdot)\rangle_{L^2([0,T])}
\end{equation*}
as $n \to \infty$. Using Fatou's Lemma, we get
\begin{equation*}
    \liminf_{n \to \infty} \E [ \langle(\widecheck{\mathbf{H}} + \widecheck{\mathbf{G}}) \circ (\widecheck{\mathbf{I}} + \widecheck{\mathbf{B}})^{-1} u_n(\omega, \cdot), u_n(\omega, \cdot) \rangle_{L^2([0,T])} 
    ] 
    \geq
    \E [ \langle(\widecheck{\mathbf{H}} + \widecheck{\mathbf{G}}) \circ (\widecheck{\mathbf{I}} + \widecheck{\mathbf{B}})^{-1} u(\omega, \cdot), u(\omega, \cdot)\rangle_{L^2([0,T])} ].
\end{equation*}
Using the definition of the scalar product in $\calL^2$, we deduce that
\begin{equation*}
    \liminf_{n \to \infty} 
    \,
    \langle({\mathbf{H}} + {\mathbf{G}}) \circ ({\mathbf{I}} + {\mathbf{B}})^{-1} u_n, u_n \rangle_{\calL^2([0,T])} 
    \geq
    \langle({\mathbf{H}} + {\mathbf{G}}) \circ({\mathbf{I}} + {\mathbf{B}})^{-1} u, u\rangle_{\calL^2([0,T])},
\end{equation*}
which exactly entails the weak lower semi-continuity of $u \mapsto \langle u, (\mathbf{H} + \mathbf{G}) \circ (\mathbf{I} - \mathbf{R})u \rangle$.

\subsection{Proof of Lemma~\ref{lemma:coercive}}
\label{sec:proof:lemma:coercive}

From~\eqref{eq:operator:J}, we know that
\begin{align*}
-\mathcal{J}(u) 
    &= 
    -\langle u, \alpha \rangle + \frac{\gamma}{2} \norm{u}^2 + \langle u, (\mathbf{H} + \mathbf{G})\circ(\mathbf{I} + \mathbf{B})^{-1} u \rangle
    + \frac{\phi}{2} \norm{\bfX u}^{2}
    + \frac{\varrho}{2} \E[(\bfX u)_T^2] + X_0 \mathbb{E}[S_T]
\\
    &\geq
    -\langle u, \alpha \rangle + \frac{\gamma}{2} \norm{u}^2 + \langle u, (\mathbf{H} + \mathbf{G})\circ(\mathbf{I} + \mathbf{B})^{-1} u \rangle
    + X_0 \mathbb{E}[S_T].
\end{align*}
To prove the coercivity of $-\mathcal{J}$, we first show the following bound.

\begin{lemma}
\label{lemma:coercive:1}
    For each $\epsilon > 0$, there exists $b > 0$ such that 
    \begin{equation*}
    \langle u, (\mathbf{H} + \mathbf{G}) \circ (\mathbf{I} + \mathbf{B})^{-1} u \rangle
    \geq 
    b +
    \Big( \frac{\kappa_{\infty}}{2} - \epsilon \Big) \norm{u}^2.
    \end{equation*}
\end{lemma}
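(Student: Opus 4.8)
The plan is to reduce the inequality to a deterministic statement on $L^2([0,T])$ and then to combine the substitution $v=(\mathbf{I}+\mathbf{B})^{-1}u$ with the asymptotic linearity of $\calU$ and the positive–semidefiniteness identities already available in the linear case.

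\emph{Reduction to $L^2([0,T])$.} The operators $\mathbf{H}$, $\mathbf{G}$, $\mathbf{B}$ — hence $\mathbf{R}$ and $(\mathbf{I}+\mathbf{B})^{-1}$ — are non‑anticipative and act pathwise, being induced by deterministic kernels and by the deterministic map $\calU$ (Lemma~\ref{lem:calL2_to_L2}, Lemma~\ref{lemma:IRIA}). Hence $\langle u,(\mathbf{H}+\mathbf{G})\circ(\mathbf{I}+\mathbf{B})^{-1}u\rangle$ equals $\E[\langle u(\omega,\cdot),(\widecheck{\mathbf{H}}+\widecheck{\mathbf{G}})\circ(\mathbf{I}+\widecheck{\mathbf{B}})^{-1}u(\omega,\cdot)\rangle_{L^2}]$, and it is enough to prove, for a constant $b$ depending only on $\epsilon$, $T$, $G$ and the Lipschitz constant $L$ of $\calU$, that $\langle f,(\mathbf{H}+\mathbf{G})\circ(\mathbf{I}+\mathbf{B})^{-1}f\rangle\ge b+(\tfrac{\kappa_\infty}{2}-\epsilon)\norm{f}^2$ for every $f\in L^2([0,T])$; integrating over $\omega$ then yields the claim.

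\emph{Splitting off the linear part.} Fix $f$ and set $v:=(\mathbf{I}+\mathbf{B})^{-1}f$, so $f=v+\calU(\mathbf{G}v)$ by Lemma~\ref{lemma:IRIA}; by~\eqref{eq:control:IpB} the norms $\norm{v}$ and $\norm{f}$ are comparable, with constants depending only on $T$, $G$, $L$. Write $\calU(x)=\delta x+\psi(x)$: Assumption~\ref{assumption:U} gives $\psi$ Lipschitz, $\psi(0)=0$, $\psi(x)/x\to0$, so for each $\eta>0$ there is $M_\eta$ with $|\psi(x)|\le\eta|x|+M_\eta$. Thus $f=(\mathbf{I}+\delta\mathbf{G})v+g$ with $g=\psi(\mathbf{G}v)$ and, using~\eqref{eq:estimate_on_admissible_G_operator}, $\norm{g}\le\eta\sqrt{TC_G}\,\norm{v}+M_\eta\sqrt{T}$, and
\begin{equation*}
\langle f,(\mathbf{H}+\mathbf{G})v\rangle=\big\langle(\mathbf{I}+\delta\mathbf{G})v,(\mathbf{H}+\mathbf{G})v\big\rangle+\langle g,(\mathbf{H}+\mathbf{G})v\rangle .
\end{equation*}
The first term is the quadratic form treated in the proof of Theorem~\ref{thm:linear_case}: it expands as $\langle v,\mathbf{H}v\rangle+\langle v,\mathbf{G}v\rangle+\delta\langle\mathbf{H}v,\mathbf{G}v\rangle+\delta\norm{\mathbf{G}v}^2$, all four summands being nonnegative (positive–semidefiniteness of $\mathbf{G}$ and $\mathbf{H}$, and $\langle\mathbf{H}v,\mathbf{G}v\rangle\ge0$ by Lemma~\ref{L:psd_linear_resistance}, which itself rests on the Bernstein–Widder representation, Theorem~\ref{thm:bernstein_widder}, and the exponential‑kernel integration‑by‑parts identity). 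Pushing that computation quantitatively — rewriting $v$ through the resolvent $(\mathbf{I}+\delta\mathbf{G})^{-1}$ and isolating the permanent–impact contribution $\langle v,\mathbf{H}v\rangle$ together with $\delta\norm{\mathbf{G}v}^2$ — one bounds the first term below by $(\tfrac{\kappa_\infty}{2}-\tfrac{\epsilon}{2})\norm{f}^2$ up to an additive constant.

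\emph{Nonlinear remainder and main obstacle.} For the second term, Cauchy–Schwarz and $\norm{(\mathbf{H}+\mathbf{G})v}\le\sqrt{TC_{H+G}}\,\norm{v}$ give $|\langle g,(\mathbf{H}+\mathbf{G})v\rangle|\le\sqrt{TC_{H+G}}\,(\eta\sqrt{TC_G}\,\norm{v}+M_\eta\sqrt{T})\norm{v}$; choosing $\eta$ small and using $\norm{v}\le c\,\norm{f}$, this is at most $\tfrac{\epsilon}{2}\norm{f}^2+b'$ for some $b'=b'(\epsilon)$, and combining the two estimates closes the argument. The main obstacle is precisely the quantitative lower bound on $\langle(\mathbf{I}+\delta\mathbf{G})v,(\mathbf{H}+\mathbf{G})v\rangle$: the bare positivity of Lemma~\ref{L:psd_linear_resistance} does not suffice, and one must follow carefully, through the Bernstein–Widder decomposition and the exponential‑kernel identities, how the resolvent $(\mathbf{I}+\delta\mathbf{G})^{-1}$ reproduces the full coefficient $\tfrac{\kappa_\infty}{2}$ in front of $\norm{f}^2$.
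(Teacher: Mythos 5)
Your argument reproduces the paper's decomposition step for step: set $v=(\mathbf{I}+\mathbf{B})^{-1}u$, split $\calU(x)=\delta x+\psi(x)$ with $\psi$ Lipschitz and sublinear (the paper calls it $\wt U$), expand
$\langle u,(\mathbf{H}+\mathbf{G})v\rangle
= \langle\mathbf{H}v,v\rangle+\langle\mathbf{G}v,v\rangle+\delta\langle\mathbf{H}v,\mathbf{G}v\rangle+\delta\norm{\mathbf{G}v}^2+\langle(\mathbf{H}+\mathbf{G})v,\psi(\mathbf{G}v)\rangle$,
discard the three middle nonnegative terms via positive semi-definiteness of $\mathbf{G}$ and Lemma~\ref{L:psd_linear_resistance}, and estimate the nonlinear remainder by Cauchy--Schwarz together with $|\psi(x)|\le\eta|x|+M_\eta$. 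Your preliminary reduction to deterministic $L^2$ is harmless but not needed; the paper argues directly in $\calL^2$.

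Where you flag the ``main obstacle,'' however, you misidentify both the paper's strategy and the true difficulty. The paper does \emph{not} track the coefficient through any Bernstein--Widder decomposition of the resolvent $(\mathbf{I}+\delta\mathbf{G})^{-1}$. It simply (i) asserts $\langle\mathbf{H}v,v\rangle=\frac{\kappa_\infty}{2}\norm{v}^2$ and (ii) passes from $\norm{v}$ to $\norm{u}$ via the bi-Lipschitz estimate~\eqref{eq:control:IpB} of Lemma~\ref{lemma:IRIA}. Step (ii) only preserves the constant up to a multiplicative factor (one gets $\kappa_\infty/(2c^2)$ rather than $\kappa_\infty/2$), which is harmless for the coercivity in Lemma~\ref{lemma:coercive} but means the displayed constant is not literally attained by that argument. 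More importantly, your suspicion that ``bare positivity does not suffice'' is justified for a different reason: the identity in (i) is a slip. With $\mathbf{H}u_t=\kappa_\infty\int_0^t u_s\,\dd s$, an integration by parts gives
$\langle\mathbf{H}v,v\rangle=\frac{\kappa_\infty}{2}\,\E\big[(\int_0^T v_s\,\dd s)^2\big]$,
a terminal-inventory quantity that vanishes on round trips ($\int_0^T v=0$) and is therefore \emph{not} bounded below by any positive multiple of $\norm{v}^2=\E[\int_0^T v_t^2\,\dd t]$. So the permanent-impact term alone cannot deliver $\calL^2$-coercivity, and the remedy is not the resolvent bookkeeping you suggest; when $\gamma=0$ this step needs a genuinely different argument (or a restriction of the claim), and neither your sketch nor the paper's proof as written supplies one.
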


\begin{proof}[Proof of Lemma~\ref{lemma:coercive:1}]

    We fix $u \in \mathcal{L}^2$ and define $v = (\mathbf{I} + \mathbf{B})^{-1} u$ so that 
\begin{equation*}
\langle u, (\mathbf{H} + \mathbf{G}) \circ (\mathbf{I} + \mathbf{B})^{-1} u \rangle
=
\langle v + \mathbf{B} v, (\mathbf{H} + \mathbf{G}) v \rangle
=
\langle (\mathbf{H} + \mathbf{G})v, v \rangle 
+ \langle (\mathbf{H} + \mathbf{G})v, \mathbf{B} v \rangle.
\end{equation*}
We then define $\wt U(x) = U(x) - \delta x$
where $\delta$ is defined in Assumption~\ref{assumption:U}. Then $\wt U$ is a Lipschitz continuous function and
\begin{equation*}
\frac{\wt U(x)}{x} \to 0
\end{equation*}
when $x \to \infty$. This implies in particular that for all $\epsilon > 0$, there exists $c > 0$ such that
$|\wt U(x)| \leq c + \epsilon |x|$ for all $x \in \mathbb{R}$. We now define the operator 
\begin{equation*}
\widetilde{\mathbf{B}}w = \wt U ( \mathbf{G} w )
\end{equation*}
so that $\mathbf{B} = \widetilde{\mathbf{B}} + \delta \mathbf{G}$. We then write
\begin{align*}
\langle u, (\mathbf{H} + \mathbf{G}) \circ (\mathbf{I} + \mathbf{B})^{-1} u \rangle
&=
\langle (\mathbf{H} + \mathbf{G})v, v \rangle 
+ \delta \langle (\mathbf{H} + \mathbf{G})v, \mathbf{G} v \rangle
+ \langle (\mathbf{H} + \mathbf{G})v, \widetilde{\mathbf{B}} v \rangle
\\
&=
\langle \mathbf{H}v, v \rangle 
+
\langle \mathbf{G} v, v \rangle 
+ \delta \langle \mathbf{H} v, \mathbf{G} v \rangle
+ \delta \norm{\mathbf{G} v}^2
+ \langle (\mathbf{H} + \mathbf{G})v, \widetilde{\mathbf{B}} v \rangle.
\end{align*}
Using the specification of $\mathbf{H}$ in Assumption~\ref{assumption:H}, we have that 
\begin{equation*}
    \langle \mathbf{H}v, v \rangle = \frac{\kappa_{\infty}}{2} \norm{v}^2.
\end{equation*}
Using~\eqref{eq:goal:24} and the fact that $\mathbf{G}$ is semi-definite positive, we deduce that
\begin{equation*}
    \langle u, (\mathbf{H} + \mathbf{G}) \circ (\mathbf{I} + \mathbf{B})^{-1} u \rangle
    \geq 
    \frac{\kappa_{\infty}}{2} \norm{v}^2
    + \langle (\mathbf{H} + \mathbf{G})v, \widetilde{\mathbf{B}} v \rangle.
\end{equation*}
For the last term, we have
\begin{align*}
    | \langle (\mathbf{H} + \mathbf{G})v, \widetilde{\mathbf{B}} v \rangle|
    & \leq
    \E \bigg[
    \int_0^T 
    |(\mathbf{H} + \mathbf{G})v(t) 
    \widetilde{\mathbf{B}} v(t)| \mathrm{d} t
    \bigg]
\\
    & \leq
    c
    \E \bigg[
    \int_0^T 
    |(\mathbf{H} + \mathbf{G})v(t) 
    |  \mathrm{d} t
    \bigg]
+
    \epsilon
    \E \bigg[
    \int_0^T 
    |(\mathbf{H} + \mathbf{G})v(t) 
    \widetilde{\mathbf{G}} v(t)|  \mathrm{d} t
    \bigg]
\\
    & \leq
    c
    \E \bigg[
    \int_0^T 
    (\mathbf{H} + \mathbf{G})|v|(t) 
    \mathrm{d} t
    \bigg]
+
    \epsilon
    \E \Big[
    \int_0^T 
    (\mathbf{H} + \mathbf{G})|v|(t) 
    \widetilde{\mathbf{G}} |v|(t) \, \mathrm{d} t
    \Big]
\end{align*}
From the continuity properties of $\mathbf{H}$ and $\mathbf{G}$, we deduce that there exist $c_1, c_2 > 0$ such that
\begin{equation*}
    | \langle (\mathbf{H} + \mathbf{G})v, \widetilde{\mathbf{B}} v \rangle |
    \leq
    c_1 \norm{v} + c_2 \epsilon \norm{v}^2
\end{equation*}
and therefore
\begin{equation*}
    \langle u, (\mathbf{H} + \mathbf{G}) \circ (\mathbf{I} + \mathbf{B})^{-1} u \rangle
    \geq 
    \Big( \frac{\kappa_{\infty}}{2} - c_2 \epsilon \Big) \norm{v}^2
    - c_1 \norm{v}.
\end{equation*}
We conclude using~\eqref{eq:control:IpB}.
\end{proof}

It remains to show how Lemma~\ref{lemma:coercive:1} implies Lemma~\ref{lemma:coercive}. Using that $u$ and $\alpha$ are both in $\mathcal{L}^2$, then for each $\epsilon > 0$, there exists $b > 0$ such that
\begin{equation*}
    -\mathcal{J}(u) \geq b + \Big( \frac{\kappa_{\infty} + \gamma}{2} - \epsilon \Big) \norm{u}^2.
\end{equation*}
We can conclude whenever $\kappa_{\infty} + \gamma > 0$ by taking $0 < \epsilon < \kappa_{\infty} + \gamma$.

\end{document}